    \newtcbox{\feedback}{nobeforeafter,colframe=black,colback=white,boxrule=0.5pt,arc=2pt,
      boxsep=0pt,left=2pt,right=2pt,top=2pt,bottom=2pt,tcbox raise base}
    \renewcommand{\thmcontinues}[1]{continued}
    \newtheorem{asm}{Assumption}
    \newtheorem{prop}{Proposition}[section]
    \newtheorem{lem}{Lemma}[section]
    \theoremstyle{definition}
    \newtheorem{example}{Example}
\declaretheoremstyle[
  bodyfont=\normalfont,
  postheadspace=1em,
  qed=$\blacktriangle$
]{mythmstyle}
\declaretheorem[style=mythmstyle]{Remark}
\newcolumntype{L}[1]{>{\raggedright\let\newline\\\arraybackslash}m{#1}}
\newcolumntype{C}[1]{>{\centering\let\newline\\\arraybackslash\hspace{0pt}}m{#1}}
\newcolumntype{R}[1]{>{\raggedleft\let\newline\\\arraybackslash\hspace{0pt}}m{#1}}
\newlength\ubwidth
	\newcommand{\cov}[1]{\mathop{}\!\textnormal{Cov}\bracks{#1}}
	\newcommand{\bracks}[1]{\left[#1\right]}
	\newcommand{\expe}[1]{\mathbb{E}\bracks{#1}}
	\newcommand{\expesub}[2]{\mathbb{E}_{#1}\bracks{#2}}
	\newcommand{\expefin}[1]{\expesub{f}{#1}}
	\newcommand{\var}[1]{\mathbb{V}\text{ar}\bracks{#1}}
	\newcommand{\varsub}[2]{\mathbb{V}\text{ar}_{#1}\bracks{#2}}
    \newcommand{\varfin}[1]{\varsub{f}{#1}}
	\newcommand{\parens}[1]{\left(#1\right)}
    \newcommand{\prob}[1]{\mathbb{P}\parens{#1}}
	\newcommand{\reals}{\mathbb{R}}
	\newcommand{\betahat}{\ensuremath{\hat{\beta}}}
	\newcommand{\thetatilde}{\ensuremath{\tilde{\theta}}}
	\newcommand{\sigmahat}{\ensuremath{\hat{\sigma}}}
\newcommand{\normnot}[2]{\mathcal{N}\parens{#1,\,#2}}
\newcommand{\twovec}[2]{\parens{\begin{array}{c} #1 \\ #2 \end{array}}}
\newcommand{\twobytwomat}[4]{\parens{\begin{array}{cc} #1 & #2 \\ #3 & #4 \end{array}}}
\newcommand{\gammahat}{\hat{\gamma}}
\newcommand{\tauhat}{\hat\tau}
\newcommand{\thetahat}{\hat{\theta}}
\title{Efficient Estimation for Staggered Rollout Designs\thanks{We are grateful to Bocar Ba, Yuehao Bai, Brantly Callaway, Ivan Canay, Clément de Chaisemartin, Jen Doleac, Peng Ding, Avi Feller, Ryan Hill, Lihua Lei, David McKenzie, Emily Owens, Ashesh Rambachan, Roman Rivera, Evan Rose, Adrienne Sabety, Jesse Shapiro, Yotam Shem-Tov, Dylan Small, Ariella Kahn-Lang Spitzer, Sophie Sun, and seminar participants at Columbia, EU Joint Research Centre, Insper, Notre Dame, PSE/CREST, Seoul National University, UC-Berkeley, University of Cambridge, University of Delaware, University of Florida, University of Mannheim, University of Maryland, University of Pennsylvania, University of Strathclyde, University of Virginia, West Virginia University, the North American Summer Meetings of the Econometric Society, the International Association of Applied Econometrics annual meeting, the Interactions Conference at the University of Wisconsin, and the XVII Escola de Modelos de Regress\~ao for helpful comments and conversations. We thank Madison Perry for excellent research assistance.}}
\author{Jonathan Roth\thanks{Brown University. \href{mailto:jonathanroth@brown.edu}{jonathanroth@brown.edu}} \and Pedro H.C. Sant'Anna\thanks{Emory University.  \href{mailto:pedrosantanna@causal-solutions.com}{pedrosantanna@causal-solutions.com}}}
\begin{document}
\begin{bibunit}

\maketitle

\begin{abstract}
We study estimation of causal effects in staggered rollout designs, i.e. settings where there is staggered treatment adoption and the timing of treatment is as-good-as randomly assigned. We derive the most efficient estimator in a class of estimators that nests several popular generalized difference-in-differences methods. A feasible plug-in version of the efficient estimator is asymptotically unbiased with efficiency (weakly) dominating that of existing approaches. We provide both $t$-based and permutation-test-based methods for inference. In an application to a training program for police officers, confidence intervals for the proposed estimator are as much as eight times shorter than for existing approaches.
\end{abstract}
\pagebreak
\section{Introduction}

Researchers are often interested in the causal effect of a treatment that is first implemented for different units at different times. Staggered rollouts are frequently analyzed using methods that extend the simple two-period difference-in-differences (DiD) estimator to the staggered setting, such as two-way fixed effects (TWFE) regression estimators and recently-proposed alternatives that yield more intuitive causal parameters under treatment effect heterogeneity \citep{callaway_difference--differences_2020, de_chaisemartin_two-way_2020, sun_estimating_2020}. The validity of these estimators depends on a parallel trends assumption.

However, researchers often justify parallel trends by arguing that the timing of the treatment is as good as randomly assigned. In some settings, such as our application to the rollout of a training program for police officers, the timing of the treatment is explicitly randomized.\footnote{When treatment is as good as randomly assigned, other methods (e.g. simple comparisons of means) are available to estimate average treatment effects. DiD based methods have nevertheless been recommended for randomized rollouts to improve efficiency \citep{xiong_optimal_2019} and to transparently aggregate treatment effect heterogeneity \citep{Lindner2021}.} In other settings, treatment timing is not explicitly randomized, but the researcher argues that it is due to idiosyncratic quasi-random factors. For example, \citet{deshpande_who_2019} justify the use of a DiD design comparing areas whose social security office closed at different times by arguing that the ``timing of the closings appears to be effectively random'', as evidenced by the fact that observable characteristics are balanced across units adopting at different times. DiD and related methods have also been used to exploit the quasi-random timing of parental deaths \citep{nekoei_how_2023}, health shocks \citep{fadlon_family_2021}, and stimulus payments \citep{parker_consumer_2013}, among others.


In this paper, we show that if treatment timing is as good as randomly assigned, one can obtain more precise estimates than those provided by DiD-based methods. We derive the most efficient estimator in a large class of estimators that nests many existing DiD-based approaches, and show how to conduct both $t$-based and permutation-based inference. In settings where treatment timing is as good as random, our efficient estimator has the scope to substantially reduce standard errors, as illustrated in our simulations and application below.

We begin by introducing a design-based framework that formalizes the notion that treatment timing is (quasi-)randomly assigned. There are $T$ periods, and unit $i$ is first treated in period $G_i \in \mathcal{G} \subseteq \{1,...,T, \infty \}$, with $G_i = \infty$ denoting that $i$ is never treated (or treated after period $T$). We make two key assumptions in this model. First, we assume that the treatment timing $G_i$ is (quasi-)randomly assigned, in the sense that any permutation of the observed vector of treatment start dates is equally likely to occur. Second, we rule out anticipatory effects of treatment --- for example, a unit's outcome in period two does not depend on whether it was first treated in period three or in period four. 

Within this framework, we show that pre-treatment outcomes play a similar role to fixed covariates in a randomized experiment, and generalized DiD estimators can be viewed as applying a crude form of covariate adjustment. To develop intuition, it is instructive to first consider the special case where we observe data for two periods $(T=2)$, some units are first treated in period 2 $(G_i =2)$, and the remaining units are treated in a later period or never treated $(G_i = \infty)$. This special case is analogous to conducting a randomized experiment in period 2, with the outcome in period 1 serving as a pre-treatment covariate. The DiD estimator is $\thetahat^{DiD} = (\bar{Y}_{22} - \bar{Y}_{2\infty}) - (\bar{Y}_{12} - \bar{Y}_{1\infty})$, where $\bar{Y}_{tg}$ is the mean outcome for treatment group $g$ at period $t$. It is clear that $\thetahat^{DiD}$ is a special case of the class of estimators
\begin{equation}
\thetahat_\beta = \underbrace{(\bar{Y}_{22} - \bar{Y}_{2\infty})}_{\text{Post-treatment diff}} - \beta \underbrace{(\bar{Y}_{12} - \bar{Y}_{1\infty})}_{\text{Pre-treatment diff}} \label{eqn: thetahat beta for twoperiod}
\end{equation}
\noindent which adjust the post-treatment difference in means by $\beta$ times the pre-treatment difference in means. Under the assumption of (quasi-)random treatment timing, $\thetahat_\beta$ is unbiased for the average treatment effect (ATE) for any $\beta$, since the post-treatment difference in means is unbiased for the ATE and the pre-treatment difference in means is mean-zero. The value of $\beta$ that minimizes the variance of the estimator depends on the covariances of the potential outcomes between periods, however. Intuitively, we want to put more weight on lagged outcomes when they are more informative about post-treatment outcomes. DiD, which imposes the fixed weight $\beta=1$, will thus generally be inefficient, and one can obtain an (asymptotically) more efficient estimator by estimating the optimal weights from the data. In this special two-period case, the form of the efficient estimator follows from \citet{lin_agnostic_2013}, who studied efficient covariate adjustment in cross-sectional randomized experiments; see, also, \citet{mckenzie_beyond_2012} who noted that the two-period DiD estimator may be inefficient in experiments.

Our main theoretical results extend this logic to the case of staggered treatment timing, providing formal methods for estimation and inference. We begin by introducing a flexible class of causal parameters that can highlight treatment effect heterogeneity across both calendar time and time since treatment. Following \citet{athey_design-based_2022}, we define $\tau_{t,gg'}$ to be the average effect on the outcome in period $t$ of changing the initial treatment date from $g'$ to $g$. For example, in the simple two-period case described above, $\tau_{2,2\infty}$ corresponds with the average treatment effect (ATE) on the second-period outcome of being treated in period two relative to never being treated. We then consider the class of estimands that are linear combinations of these building blocks, $\theta = \sum_{t,g,g'} a_{t,g,g'} \tau_{t,gg'}$. Our framework thus allows for arbitrary treatment effect dynamics, and accommodates a variety of ways of summarizing these dynamic effects, including several aggregation schemes proposed in the recent literature.

We then consider the large class of estimators that start with a sample analog to the target parameter and adjust by a linear combination of differences in pre-treatment outcomes. More precisely, we consider estimators of the form $\thetahat_\beta = \sum_{t,g} a_{t,g,g'} \tauhat_{t,gg'} - \hat{X}' \beta$, where the first term is a sample analog to $\theta$, and the second term adjusts linearly using a vector $\hat{X}$ that compares outcomes for cohorts treated at different dates at points in time before either was treated. For example, in the simple two-period case described above, $\hat{X} = \bar{Y}_{12} - \bar{Y}_{1\infty}$ is the difference-in-means in period 1. We show that several estimators for the staggered setting are part of this class for an appropriately defined estimand and $\hat{X}$, including the TWFE estimator as well as recent procedures proposed by \citet{callaway_difference--differences_2020}, \citet{de_chaisemartin_two-way_2020}, and \citet{sun_estimating_2020}. All estimators of this form are unbiased for $\theta$ under the assumptions of (quasi-)random treatment timing and no anticipation. 

We then derive the most efficient estimator in this class. The optimal coefficient $\beta^*$ depends on covariances between the potential outcomes over time, and thus the estimators previously proposed in the literature will only be efficient for special covariance structures. Although the covariances of the potential outcomes are generally not known ex ante, one can estimate a ``plug-in'' version of the efficient estimator that replaces the ``oracle'' coefficient $\beta^*$ with a sample analog $\betahat^*$. We show that the plug-in efficient estimator is asymptotically unbiased and as efficient as the oracle estimator under large population asymptotics similar to those in \citet{lin_agnostic_2013} and \citet{li_general_2017} for covariate adjustment in cross-sectional experiments. 

Our results suggest two complementary approaches to inference. First, we show that the plug-in efficient estimator is asymptotically normally distributed in large populations, which allows for asymptotically valid confidence intervals of the familiar form $\thetahat_{\betahat^*} \pm 1.96 \widehat{se}$.\footnote{As is common in finite-population settings, the covariance estimate may be conservative if there are heterogeneous treatment effects.} Second, an appealing feature of our (quasi-)random treatment timing framework is that it permits us to construct Fisher randomization tests (FRTs), also known as permutation tests. Following  \citet{wu_randomization_2020} and \citet{zhao_covariate-adjusted_2020} for cross-sectional randomized experiments, we consider FRTs based on a studentized version of our efficient estimator. These FRTs have the dual advantages that they are finite-sample exact under the sharp null of no treatment effects, and asymptotically valid for the weak null of no average effects. In a Monte Carlo study calibrated to our application, we find that both the $t$-based and FRT-based approaches yield reliable inference, and CIs based on the plug-in efficient estimator are substantially shorter than those for the procedures of \citet{callaway_difference--differences_2020}, \citet{sun_estimating_2020}, and \citet{de_chaisemartin_two-way_2020}.\footnote{The \texttt{staggered} R and Stata packages allow for easy implementation of the plug-in efficient estimator; see \href{https://github.com/jonathandroth/staggered}{https://github.com/jonathandroth/staggered} and \href{https://github.com/mcaceresb/stata-staggered}{https://github.com/mcaceresb/stata-staggered}, respectively.}

As an illustration of our method and standalone empirical contribution, we revisit the randomized rollout of a procedural justice training program for police officers in Chicago. The original study by \citet[]{wood_procedural_2020} found large and statistically significant reductions in complaints and officer use of force, and these findings were influential in policy debates about policing \citep{doleac_how_2020}. Unfortunately, an earlier version of our analysis revealed a statistical error in the analysis of \citet{wood_procedural_2020} which led their estimates to be inflated. In \citet{wood_reanalysis_2020}, we collaborated with the original authors to correct this error. Using the estimator of \citet{callaway_difference--differences_2020}, we found no significant effects on complaints against police officers and borderline significant effects on officer use of force, but with wide confidence intervals that included both near-zero and meaningfully large treatment effects estimates. We find that the use of the methodology proposed in this paper allows us to obtain substantially more precise estimates of the effect of the training program. Although we again find no statistically significant effects on complaints and borderline significant effects on force, the standard errors from using our methodology are between 1.4 and 8.4 times smaller than from the \citet{callaway_difference--differences_2020} estimator used in \citet{wood_reanalysis_2020}. For complaints, for example, we are able to rule out reductions larger than 13\% of the pre-treatment mean using our proposed estimator, compared with an upper bound of 33\% in the previous analysis.

\paragraph{Related Literature.} This paper contributes to an active literature on DiD and related methods in settings with staggered treatment timing. Several recent papers have demonstrated the failures of TWFE models to recover a sensible causal estimand under treatment effect heterogeneity and have proposed alternative estimators with better properties \citep{borusyak_revisiting_2016,goodman-bacon_difference--differences_2018, de_chaisemartin_two-way_2020,callaway_difference--differences_2020,sun_estimating_2020}. Most of this literature has focused on obtaining consistent estimates under a generalized parallel trends assumption, whereas we focus on \textit{efficient} estimation under the stronger assumption of (quasi-)random treatment timing. Our proposed efficient estimator can help to improve precision relative to DiD methods in settings where the researcher believes that treatment timing is as good as randomly assigned, but unlike other estimators in the literature, will not be applicable in settings where the researcher is confident in parallel trends but not (quasi-)random treatment timing. For example, our random treatment timing assumption requires that pre-treatment outcomes and fixed covariates should be balanced across groups treated at different times, whereas this is not strictly required by parallel trends. See Remark \ref{rem: pt vs random timing} for further discussion.

Two related papers that have studied (quasi-)random treatment timing are \citet{athey_design-based_2022} and \citet{shaikh_randomization_2021}. The former studies a model of random treatment timing similar to ours, but focuses on the interpretation of the TWFE estimand. The latter paper adopts a different framework of randomization in which treatment timing is random only conditional on observables, and no two units can be treated at the same time. Neither paper considers the efficient choice of estimator as we do.

Our technical results extend results in statistics on efficient covariate adjustment in cross-sectional experiments \citep{ Freedman(2008)-several_treatments, Freedman(2008)-regadj_to_experimental_data, lin_agnostic_2013, li_general_2017} to the setting of staggered treatment timing, where pre-treatment outcomes play a similar role to fixed covariates in a cross-sectional experiment. In the special two-period case, our proposed estimator reduces to \citet{lin_agnostic_2013}'s efficient estimator, treating the lagged outcome as a fixed covariate. Our results are also related to \citet{mckenzie_beyond_2012}, who showed that DiD is inefficient under random treatment assignment in a two-period model with homogeneous treatment effects; see Remark \ref{rem: connection to mckenzie} for additional details. We note that the notion of efficiency studied in this paper is efficiency in the class of estimators of the form given in (\ref{eqn: thetahat beta for twoperiod}), rather than semi-parametric efficiency as in e.g. \citet{hahn_role_1998}, \citet{santanna_doubly_2020}. We are not aware of a notion of semi-parametric efficiency for design-based models such as ours, but consider this an interesting topic for future work. 

Our paper also relates to the literature on clinical trials using a stepped wedge design, which is a randomized staggered rollout in which all units are ultimately treated \citep[e.g.][]{Brown2006b}. Until recently, this literature has focused on estimation using mixed effects regression models. \citet{Lindner2021} point out, however, that such models may be difficult to interpret under treatment effect heterogeneity, and recommend using DiD-based approaches like \citet{sun_estimating_2020} instead. Our approach has the potential to offer large gains in precision relative to such DiD-based approaches. Our paper is also complementary to \citet{Ji2017}, who propose using randomization-based inference procedures to test Fisher's sharp null hypothesis in stepped wedge designs. By contrast, we consider Neymanian inference on average treatment effects, and also show that an FRT with a studentized test statistic is both finite-sample exact for the sharp null and asymptotically valid for inference on average effects. 


Finally, our work is related to \citet{xiong_optimal_2019} and \citet{basse_minimax_2023}, who consider the optimal design of a staggered rollout experiment to maximize the efficiency of a fixed estimator. By contrast, we solve for the most efficient estimator given a fixed experimental design.


\section{Model and Theoretical Results \label{sec: multiple periods}}

\subsection{Model}

There is a finite population of $N$ units. We observe data for $T$ periods, $t = 1,..,T$. A unit's treatment status is denoted by $G_i \in \mathcal{G} \subseteq \{1,...,T, \infty \}$, where $G_i$ is the first period in which unit $i$ is treated, and $G_i = \infty$ denotes that a unit is never treated (or treated after period $T$). Our framework accommodates but does not require there to be never treated units --- it could be that $\infty \not \in \mathcal{G}$, in which case all units are eventually treated (a stepped wedge design). We assume that treatment is an absorbing state.\footnote{If treatment turns on and off, the parameters we estimate can be viewed as the intent-to-treat effect of first being treated at a particular date; see \citet{sun_estimating_2020} and \citet{de_chaisemartin_difference--differences_2021} for related discussion for DiD models.} We denote by $Y_{it}(g)$ the potential outcome for unit $i$ in period $t$ when treatment starts at time $g$, and define the vector $Y_i(g) = (Y_{i1}(g),...,Y_{iT}(g))' \in \reals^T$. We let $D_{ig} = 1[G_i = g]$. The observed vector of outcomes for unit $i$ is then $Y_i = \sum_g  D_{ig} Y_i(g)$.

Following \citet{neyman_application_1923} for randomized experiments and \citet{athey_design-based_2022} for settings with staggered treatment timing, our model is design-based: We treat as fixed (or condition on) the potential outcomes and the number of units first treated at each period $(N_g)$. The only source of uncertainty in our model comes from the vector of times at which units are first-treated, $G = (G_1,...,G_N)'$, which is stochastic.

\begin{Remark}[Design-based uncertainty]
Design-based models are particularly attractive in settings where it is difficult to define the super-population, such as when all 50 states are observed \citep{ManskiPepper(18)}, or in our application where the near-universe of police officers in Chicago is observed. Even when there is a super-population, the design-based view allows for valid inference on the sample average treatment effect (SATE); see \citet{abadie_sampling-based_2020}, \citet{sekhon_inference_2020} for additional discussion.
\end{Remark}

Our first main assumption is that the treatment timing is (quasi-)randomly assigned, meaning that any permutation of the treatment timing vector is equally likely. 

\begin{asm}[Random treatment timing] \label{asm: random treatment}
Let $D$ be the random $N \times |\mathcal{G}|$ matrix with $(i,g)$th element $D_{ig}$. Then $\prob{D = d} = (\prod_{g \in \mathcal{G}} N_g! ) / N!$ if $\sum_i  d_{ig} = N_g$ for all $g$, and zero otherwise. 
\end{asm}
\noindent We note that Assumption \ref{asm: random treatment} will hold by design in settings where the researcher randomly assigns individuals to treatment start dates. It can also hold in quasi-experimental contexts if the idiosyncratic factors that determine treatment timing render any permutation of the treatment start dates to be equally likely; see \citet{rambachan_design-based_2020} and \citet{borusyak_non-random_2020} for additional discussion of ``quasi-random'' treatment assignment. We discuss extensions to clustered and conditional 
random assignment of treatment timing in Section \ref{subsec: extensions}.

\begin{Remark}[Comparison to parallel trends] \label{rem: pt vs random timing}
Technically speaking, the random timing assumption in Assumption \ref{asm: random treatment} is stronger than the usual parallel trends assumption, which only requires that treatment probabilities are orthogonal to trends in the potential outcomes. Assumption \ref{asm: random treatment} thus may not be plausible in all settings where researchers use DiD methods. Nevertheless, Assumption \ref{asm: random treatment} can be ensured by design in settings where treatment timing can be explicitly randomized, such as our application in Section \ref{sec: wood application}. Moreover, it is frequently the case that the justification given for the validity of the parallel-trends assumption also justifies Assumption \ref{asm: random treatment}.\footnote{Analogously, \citet[page 8]{Imbens2004} argues that while mean-independence is technically weaker than full independence, arguments for the former often also justify the latter.}  For example, \citet{fadlon_family_2021} write that the plausibility of the parallel trends assumption in their context ``relies on the notion that... the particular year at which the event occurs may be as good as random'' (p. 12-13); see, e.g., \citet{deshpande_who_2019}, \citet{nekoei_how_2023}, and \citet{parker_consumer_2013} for similar justifications.

It is also worth emphasizing that in non-experimental contexts, the random timing assumption may be more plausible if one restricts attention to units who are eventually treated. For example, \citet[page 223]{deshpande_who_2019} write that ``some factors consistently predict the likelihood of a closing [i.e., the treatment]. However, no observable characteristic consistently predicts the timing of a closing conditional on closing. These results suggest that the timing of closings is effectively random even if the closings themselves are not.'' Although in principle one can use DiD methods to exploit variation only among eventually-treated units, units who are never-treated are often included in DiD analyses to increase precision.\footnote{For example, the main specification in \citet{bailey_war_2015} includes never-treated units, although the appendix shows results for an alternative specification that includes only eventually-treated units, with substantially larger standard errors (contrast Figures 5 and E.1).} In settings where the eventually-treated units are more similar to each other than to the never-treated units, it therefore may be preferable to impose Assumption \ref{asm: random treatment} and use our efficient estimator than to use a DiD estimator that relies on parallel trends among never-treated units to increase efficiency. We also note that Assumption \ref{asm: random treatment} has testable implications, as we discuss in Section \ref{subsec: extensions} below, so researchers considering using our methodology in non-experimental contexts can partially test the validity of Assumption \ref{asm: random treatment}.

Finally, we note that the validity of the parallel trends assumption will typically be sensitive to functional form if treatment timing is not random \citep{roth_when_2022}. Empirical researchers should therefore be explicit about the justification for identification. If parallel trends is justified on the basis of quasi-random treatment timing, then the methods developed in this paper can be used to obtain more precise estimates. On the other hand, if random treatment timing is not plausible, then methods that rely only on a parallel trends assumption will be more appropriate. In this case, however, the researcher should provide a justification for why they expect parallel trends to hold specifically for the choice of functional form used in the analysis. 
\end{Remark}

In addition to random treatment timing, we also assume that the treatment has no causal impact on the outcome in periods before it is implemented. This assumption is plausible in many contexts, but may be violated if individuals learn of treatment status beforehand and adjust their behavior in anticipation \citep{abbring_nonparametric_2003,lechner_estimation_2010,malani_interpreting_2015}.\footnote{If anticipatory behavior is only possible within $m$ periods of treatment (e.g., because treatment is announced $m$ periods in advance), the initial treatment can be re-defined as $G_i - m$.}

\begin{asm}[No anticipation] \label{asm: no anticipation}
For all $i$, $Y_{it}(g) = Y_{it}(g')$ for all $g,g' > t$.
\end{asm}

\noindent Note that this assumption does not restrict the possible dynamic effects of treatment --- that is, we allow for $Y_{it}(g) \neq Y_{it}(g')$ whenever $t \geq min(g,g')$, so that treatment effects can arbitrarily depend on calendar time and the time that has elapsed since treatment. Rather, we only require that, say, a unit's outcome in period one does not depend on whether it was ultimately treated in period two or period three.\footnote{Under the No Anticipation Assumption, $Y_{it}(g)$ can be interpreted as the outcome in period $t$ from having been treated for $\max(0,t-g)$ periods. We thank a referee for noting this interpretation.\label{fn: max interp of POs}}

\begin{example}[Special case: two periods]
\label{example:2periods}
Consider the special case of our model in which there are two periods $(T=2)$ and units are either treated in period two or never treated $(\mathcal{G} = \{2,\infty\})$. Under random treatment timing and no anticipation, this special case is isomorphic to a cross-sectional experiment where the outcome $Y_{i} = Y_{i2}$ is the second period outcome, the binary treatment $D_i = 1[G_i = 2]$ is whether a unit is treated in period two, and the covariate $X_i = Y_{i1} \equiv Y_{i1}(\infty)$ is the pre-treatment outcome (which by the no anticipation assumption does not depend on treatment status). Covariate adjustment in cross-sectional randomized experiments has been studied previously by \citet{Freedman(2008)-several_treatments, Freedman(2008)-regadj_to_experimental_data}, \citet{lin_agnostic_2013}, and \citet{li_general_2017}, and our results will nest many of the existing results in the literature as a special case. The two-period special case also allows us to study the canonical difference-in-differences estimator, while avoiding complications discussed in the recent literature related to extending this estimator to the staggered case. We will therefore come back to this example throughout the paper to provide intuition and connect our results to the previous literature. 

\end{example}

\paragraph{Notation.} All expectations $(\expe{\cdot})$ and probability statements $(\prob{\cdot})$ are taken over the distribution of $G$ conditional on the potential outcomes and the number of units treated at each period, $(N_g)_{g\in \mathcal{G}}$, although we suppress this conditioning for ease of notation. For a non-stochastic attribute $W_i$ (e.g. a function of the potential outcomes), we denote by $\expefin{W_i} = {N}^{-1} \sum_i  W_i$ and $\varfin{W_i} = {(N-1)}^{-1} \sum_i  (W_i - \expefin{W_i})(W_i-\expefin{W_i})'$ the finite-population expectation and variance of $W_i$.

\subsection{Target Parameters\label{subsec: target parameter}}

In our staggered treatment setting, the effect of being treated may depend on both the calendar time ($t$) as well as the time at which one was first treated ($g$). We therefore consider a large class of target parameters that allow researchers to highlight various dimensions of heterogeneous treatment effects across both calendar time and time since treatment. 

Following \citet{athey_design-based_2022}, we define $\tau_{it,gg'} = Y_{it}(g) - Y_{it}(g')$ to be the causal effect of switching the treatment date from $g'$ to $g$ on unit $i$'s outcome in period $t$. We define $\tau_{t,gg'} = {N}^{-1} \sum_i  \tau_{it,gg'}$ to be the average treatment effect (ATE) of switching treatment from $g'$ to $g$ on outcomes at period $t$. We will consider scalar estimands of the form \begin{equation}\theta = \sum_{t,g,g'} a_{t,gg'} \tau_{t,gg'} \label{eqn: defn of theta in terms of tau},\end{equation} i.e. weighted sums of the average treatment effects of switching from treatment $g'$ to $g$, with $a_{t,gg'}\in \reals$ being arbitrary weights. Researchers will often be interested in weighted averages of the $\tau_{t,gg'}$, in which case the $a_{t,gg'}$ will sum to 1, although our results allow for arbitrary $a_{t,gg'}$.\footnote{This allows the possibility, for instance, that $\theta$ represents the difference between long-run and short-run effects, so that some of the $a_{t,gg'}$ are negative.} The results extend easily to vector-valued $\theta$'s where each component is of the form in the previous display; we focus on the scalar case for ease of notation. The no anticipation assumption (Assumption \ref{asm: no anticipation}) implies that $\tau_{t,gg'}=0$ if $t<min(g,g')$, and so without loss of generality we make the normalization that $a_{t,gg'} =0$ if $t<min(g,g')$.

\begin{example}[continues=example:2periods]
In our simple two-period example, a natural target parameter is the ATE in period two. This corresponds with setting $\theta = \tau_{2,2\infty} = N^{-1} \sum_i Y_{i2}(2) - Y_{i2}(\infty)$.
\end{example}

We now describe a variety of intuitive parameters that can be captured by this framework in the general staggered setting. Researchers are often interested in the effect of receiving treatment at a particular time relative to not receiving treatment at all. We will define $ATE(t,g) := \tau_{t,g\infty}$ to be the average treatment effect on the outcome in period $t$ of being first-treated at period $g$ relative to not being treated at all. The $ATE(t,g)$ is a close analog to the cohort average treatment effects on the treated (ATTs) considered in \citet{callaway_difference--differences_2020} and \citet{sun_estimating_2020}. The main difference is that those papers do not assume random treatment timing, and thus consider ATTs rather than ATEs.

In some cases, the $ATE(t,g)$ will be directly of interest and can be estimated in our framework. When the dimension of $t$ and $g$ is large, however, it may be desirable to aggregate the $ATE(t,g)$ both for ease of interpretability and to increase precision. Our framework incorporates a variety of possible summary measures that aggregate the $ATE(t,g)$ across different cohorts and time periods. We briefly discuss a few possible aggregations which may be relevant in empirical work, mirroring proposals for aggregating the $ATT(t,g)$ in \citet{callaway_difference--differences_2020}. 

When researchers are interested in how the treatment effect evolves with respect to the time elapsed since treatment started, they may want to consider ``event-study'' parameters that aggregate the ATEs at a given lag $l$ since treatment ($l=0,1,...$), 
$$\theta^{ES}_l = \frac{1}{\sum_{g:g+l \leq T} N_g} \sum_{g:g+l \leq T} N_g ATE(g+l,g).$$
\noindent Note that the instantaneous parameter $\theta^{ES}_0$ is analogous to the estimand considered in \citet{de_chaisemartin_two-way_2020} in settings like ours where treatment is an absorbing state (although their framework also extends to the more general setting where treatment turns on and off).

In other situations, it may be of interest to understand how the treatment effect differs over calendar time (e.g. during a boom or bust economy), or by the time that treatment began. In such cases, the summary parameters 
$$\theta_t = \frac{1}{ \sum_{g:g \leq t} N_g } \sum_{g: g \leq t} N_g ATE(t,g) \text{ and } \theta_g = \frac{1}{ T-g+1 } \sum_{t: t\geq g } ATE(t,g),$$
\noindent which respectively aggregate the $ATE$s for a particular calendar time or treatment adoption cohort, may be relevant. 

Finally, researchers may be interested in a single summary parameter for the effect of a treatment. In this case, it may be instructive to consider a simple average of the $ATE(t,g)$ (weighted by cohort size), 
$$\theta^{simple} = \frac{1}{ \sum_t \sum_{g:g \leq t} N_g } \sum_{t} \sum_{g:g \leq t} N_g ATE(t,g),$$
or to consider a weighted average of the time or cohort effects,
$$\theta^{calendar} = \frac{1}{T} \sum_t \theta_t \hspace{1cm} \text{ or } \hspace{1cm} \theta^{cohort} =\frac{1}{\sum_{g:g\neq \infty}N_g} \sum_{g:g\neq \infty} N_g \theta_g.$$

Since the most appropriate parameter will depend on context, we consider a broad framework that allows for efficient estimation of all of these (and other) parameters.\footnote{We note that if $\infty \not\in \mathcal{G}$, then $ATE(t,g)$ is only identified for $t<\max{\mathcal{G}}.$ In this case, all of the sums above should be taken only over the $(t,g)$ pairs for which $ATE(t,g)$ is identified.}

\subsection{Class of Estimators Considered\label{subsec: estimators considered}}

We now introduce the class of estimators we will consider. Intuitively, these estimators start with a sample analog to the target parameter and linearly adjust for differences in outcomes for units treated at different times in periods before either was treated.

Let $\bar{Y}_{tg} = {N_g}^{-1} \sum_i  D_{ig} Y_{it}$ be the sample mean of the outcome for treatment group $g$ in period $t$, and let $\tauhat_{t,gg'} = \bar{Y}_{tg} - \bar{Y}_{tg'}$ be the sample analog of $\tau_{t,gg'}$. We define 
$$\thetahat_{0} = \sum_{t,g,g'} a_{t,gg'} \tauhat_{t,gg'}, $$
\noindent which replaces the population means in the definition of $\theta$ with their sample analogues.

We will consider estimators of the form 
\begin{equation}
\thetahat_\beta = \thetahat_0 - \hat{X}' \beta, \label{eqn: defn of thetahatbeta} 
\end{equation}
where, intuitively, $\hat{X}$ is a vector of differences-in-means that are guaranteed to be mean-zero under the assumptions of random treatment timing and no anticipation. Formally, we consider $M$-dimensional vectors $\hat X$ where each element of $\hat{X}$ takes the form $$\hat X_j = \sum_{(t,g,g'):g,g'>t} b^j_{t,gg'} \tauhat_{t,gg'},$$ 
where the $b^j_{t,gg'}\in \reals$ are arbitrary weights. There are many possible choices for the vector $\hat X$ that satisfy these assumptions. For example $\hat{X}$ could be a vector where each component equals $\tauhat_{t,gg'}$ for a different combination of $(t,g,g')$ with $t<g,g'$. Alternatively, $\hat{X}$ could be a scalar that takes a weighted average of such differences. The choice of $\hat{X}$ is analogous to the choice of which variables to control for in a cross-sectional randomized experiment. In principle, including more covariates (higher-dimensional $\hat{X}$) will improve asymptotic precision, yet including ``too many'' covariates may lead to over-fitting, leading to poor performance in practice. For now, we suppose the researcher has chosen a fixed $\hat{X}$ and consider the optimal choice of $\beta$ for a given $\hat{X}$. We return to the choice of $\hat X$ in Remark \ref{rem: choice of xhat - theory} and the discussion of our Monte Carlo results in Section \ref{sec: monte carlo} below.

Several estimators proposed in the literature can be viewed as special cases of the class of estimators we consider for an appropriately-defined estimand and $\hat{X}$, often with $\beta=1$.

\begin{example}[continues=example:2periods]
In our running two-period example, $\hat{X} = \tauhat_{1,2\infty}$ corresponds with the pre-treatment difference in sample means between the units first treated at period two and the never-treated units. Thus, $$\thetahat_1 = \tauhat_{2,2\infty} - \tauhat_{1,2\infty} = (\bar{Y}_{22} - \bar{Y}_{2\infty}) - (\bar{Y}_{12} - \bar{Y}_{1\infty}) $$ is the canonical difference-in-differences estimator, where $\bar{Y}_{tg}$ represents the sample mean of $Y_{it}$ for units with $G_i =g$. Likewise, $\thetahat_0$ is the simple difference-in-means (DiM) in period two, $(\bar{Y}_{22} - \bar{Y}_{2\infty})$. More generally, the estimator $\thetahat_\beta$ takes the simple difference-in-means in period two and adjusts by $\beta$ times the difference-in-means in period one. Thus, for $\beta \in (0,1)$, $\thetahat_\beta$ is a weighted average of the DiM and DiD estimators. In this special case, the set of estimators of the form $\thetahat_\beta$ is equivalent to the set of linear covariate-adjusted estimators for cross-sectional experiments considered in \citet{lin_agnostic_2013} and \citet{li_general_2017}, treating $Y_{i1}$ as a fixed covariate.\footnote{\citet{lin_agnostic_2013} and \citet{li_general_2017} consider estimators of the form $\tau(\beta_0, \beta_1) = (\bar{Y}_1 - \beta_1' (\bar{X}_1 - \bar{X}) ) - (\bar{Y}_0 - \beta_0'(\bar{X}_0 - \bar{X}))$, where $\bar{Y}_d$ is the sample mean of the outcome $Y_i$ for units with treatment $D_i = d$, $\bar{X}_d$ is defined analogously, and $\bar{X}$ is the unconditional mean of $X_i$. Setting $Y_i = Y_{i,2}$, $X_i = Y_{i,1}$, and $D_i = 1[G_i = 2]$, it is straightforward to show that the estimator $\tau(\beta_0,\beta_1)$ is equivalent to $\thetahat_{\beta}$ for $\beta = \frac{N_2}{N} \beta_0 + \frac{N_\infty}{N} \beta_1$.}
\end{example}

\begin{example}[\citet{callaway_difference--differences_2020}]\label{example:callway and santa'anna}
For settings where there is a never-treated group ($\infty \in \mathcal{G}$), \citet{callaway_difference--differences_2020} consider the estimator
$$\tauhat^{CS}_{tg} = \tauhat_{t,g \infty} - \tauhat_{g-1,g \infty},$$
\noindent i.e. a difference-in-differences that compares outcomes between periods $t$ and $g-1$ for the cohort first treated in period $g$ relative to the never-treated cohort. Observe that $\tauhat^{CS}_{tg}$ can be viewed as an estimator of $ATE(t,g)$ of the form given in (\ref{eqn: defn of thetahatbeta}), with $\hat{X} = \tauhat_{g-1,g \infty}$ and $\beta =1$. Likewise, \citet{callaway_difference--differences_2020} consider an estimator that aggregates the $\tauhat^{CS}_{tg}$, say $\tauhat_w^{CS} = \sum_{t,g} w_{t,g} \tauhat_{t,g\infty} $, which can be viewed as an estimator of the parameter $\theta_w = \sum_{t,g} w_{t,g} ATE(t,g)$ of the form (\ref{eqn: defn of thetahatbeta}) with $\hat{X} = \sum_{t,g} w_{t,g} \tauhat_{g-1,g\infty}$ and $\beta =1$.\footnote{This could also be viewed as an estimator of the form (\ref{eqn: defn of thetahatbeta}) if $\hat{X}$ were a vector with each element corresponding with $\tauhat_{t,g\infty}$ and the vector $\beta$ was a vector with elements corresponding with $w_{t,g\infty}$.} Similarly, \citet{callaway_difference--differences_2020} consider an estimator that replaces the never-treated group with an average over cohorts not yet treated in period $t$, $$\tauhat^{CS2}_{tg} = \frac{1}{ \sum_{g': g'>t} N_{g'} } \sum_{g': g'>t} N_{g'}  \,  \tauhat_{t,gg'} - \frac{1}{ \sum_{g': g'>t} N_{g'} } \sum_{g': g'>t} N_{g'} \, \tauhat_{g-1,gg'}, \text{ for } t\ge g.$$ It is again apparent that this estimator can be written as an estimator of $ATE(t,g)$ of the form in (\ref{eqn: defn of thetahatbeta}), with $\hat{X}$ now corresponding with a weighted average of $\tauhat_{g-1,gg'}$ and $\beta$ again equal to 1. 
\end{example}    

\begin{example}[\citet{sun_estimating_2020}]\label{example: sun and abraham}
\citet{sun_estimating_2020} consider an estimator that is equivalent to that in \citet{callaway_difference--differences_2020} in the case where there is a never-treated cohort. When there is no never-treated group, \citet{sun_estimating_2020} propose using the last cohort to be treated as the comparison. Formally, they consider the estimator of $ATE(t,g)$ of the form
$$\tauhat^{SA}_{tg} = \tauhat_{t,g g_{max}} - \tauhat_{g-1, g g_{max}},$$
\noindent where $g_{max} = \max \mathcal{G}$ is the last period in which units receive treatment. It is clear that $\tauhat^{SA}_{tg}$ takes the form (\ref{eqn: defn of thetahatbeta}), with $\hat{X} = \tauhat_{g-1, g g_{max}}$ and $\beta = 1$. Weighted averages of the $\tauhat^{SA}_{tg}$ can likewise be expressed in the form (\ref{eqn: defn of thetahatbeta}), as with the \citet{callaway_difference--differences_2020} estimators.
\end{example}

\begin{example}[\citet{de_chaisemartin_two-way_2020}]\label{example: dechaisemartin} 
\citet{de_chaisemartin_two-way_2020} propose an estimator of the instantaneous effect of a treatment. Although their estimator extends to settings where treatment turns on and off, in a setting like ours where treatment is an absorbing state, their estimator can be written as a linear combination of the $\tauhat^{CS2}_{tg}$. In particular, their estimator is a weighted average of the \citet{callaway_difference--differences_2020} estimators for the first period in which a unit was treated,
$$\tauhat^{dCDH} = \frac{1}{\sum_{g:g\leq T} N_g } \sum_{g:g\leq T} N_g \tauhat^{CS2}_{gg}.$$ It is thus immediate from the previous examples that their estimator can also be written in the form (\ref{eqn: defn of thetahatbeta}). 
\end{example}

\begin{example}[TWFE Models]\label{example: twfe}
\citet{athey_design-based_2022} consider the setting with $\mathcal{G} = \{1,...,T,\infty\}$. Let $A_{it} = 1[G_i \leq t ]$ be an indicator for whether unit $i$ is already treated by period $t$. \citet[Lemma 5]{athey_design-based_2022} show that the coefficient on $A_{it}$ from the two-way fixed effects specification
\begin{equation}
Y_{it} = \alpha_i + \lambda_t + A_{it} \theta^{TWFE} + \epsilon_{it} \label{eqn: static twfe}
\end{equation}
\noindent can be decomposed as
\begin{equation}
\thetahat^{TWFE} = \sum_{t} \sum_{\substack{(g,g'): \\ min(g,g') \leq t}} \gamma_{t,gg'} \hat{\tau}_{t,gg'} + \sum_t \sum_{\substack{(g,g'): \\ min(g,g') > t}} \tilde\gamma_{t,gg'} \hat{\tau}_{t,gg'}
\end{equation}
\noindent for weights $\gamma_{t,gg'},\tilde\gamma_{t,gg'}$ that depend only on the $N_g$ and thus are non-stochastic in our framework. Thus, $\thetahat^{TWFE}$ can be viewed as an estimator of the form (\ref{eqn: defn of thetahatbeta}) for the parameter $\theta^{TWFE} = \sum_{t} \sum_{(g,g'): min(g,g') \leq t} \gamma_{t,gg'} \tau_{t,gg'}$, with $\hat{X} = -\sum_t \sum_{(g,g'): min(g,g') > t} \tilde\gamma_{t,gg'} \hat{\tau}_{t,gg'}$ and $\beta = 1$. As noted in \citet{athey_design-based_2022} and other papers, however, the parameter $\theta^{TWFE}$ may be difficult to interpret under treatment effect heterogeneity, since the weights $\gamma_{t,gg'}$ are not guided by economic reasoning, and moreover, some of the $\gamma_{t,gg'}$ may be negative, so that $\theta^{TWFE}$ is not a convex-weighted average of causal effects.
\end{example}

We note that, in principle, one can also use a vector-valued $\hat{X}$ that stacks the $\hat{X}$ values used by multiple estimators. For example, one could set $\hat{X} = (\hat{X}^{CS},\hat{X}^{CS2})'$, which combines the scalar values of $\hat{X}$ used for the two variants of the \citet{callaway_difference--differences_2020} estimator. Then $\hat\theta_{(1,0)'}$ would correspond to $\hat{\tau}^{CS}$, while $\hat\theta_{(0,1)'}$ would correspond to $\hat{\tau}^{CS2}$, thus nesting both estimators in the class of estimators of the form $\hat\theta_{\beta}$. One could likewise stack the $\hat{X}$'s associated with other DiD-related estimators. We stress, though, that the notion of efficiency that we derive below will be for the class of estimators using a specific vector $\hat{X}$; see Remarks \ref{rem: connection to semiparametrics} and \ref{rem: choice of xhat - theory} below for additional discussion.

\subsection{Efficient ``Oracle'' Estimation}
We now consider the problem of finding the best estimator $\thetahat_\beta$ of the form introduced in (\ref{eqn: defn of thetahatbeta}). We first show that $\thetahat_\beta$ is unbiased for all $\beta$, and then solve for the $\beta^*$ that minimizes the variance. 
\paragraph{Notation.} We begin by introducing some notation that will be useful for presenting our results. Recall that the sample treatment effect estimates $\tauhat_{t,gg'}$ are themselves differences in sample means, $\tauhat_{t,gg'} = \bar{Y}_{tg} - \bar{Y}_{tg'}$. It follows that we can write
$$ \thetahat_0 = \sum_g A_{\theta,g} \bar{Y}_g \text{ and } \hat X = \sum_g A_{0,g} \bar{Y}_g$$
\noindent for appropriately defined matrices $A_{\theta,g}$ and $A_{0,g}$ of dimension $1 \times T$ and $M \times T$, respectively, where $\bar{Y}_g = (\bar{Y}_{1g},...,\bar{Y}_{Tg})'$. Additionally, let $S_g =\varfin{Y_{i}(g)}$ be the finite population variance of $Y_i(g)$ and let $S_{gg'} = ({N-1})^{-1}  \sum_i  (Y_i(g) - \expefin{Y_i(g)}) (Y_i(g') - \expefin{Y_i(g')})'$ be the finite-population covariance between $Y_i(g)$ and $Y_i(g')$. \\

Our first result is that all estimators of the form $\thetahat_\beta$ are unbiased, regardless of $\beta.$
\begin{lem}[$\thetahat_\beta$ unbiased] \label{lem: thetahat beta unbiased}
Under Assumptions \ref{asm: random treatment} and \ref{asm: no anticipation},  $\expe{\thetahat_\beta} = \theta$ for any $\beta \in \reals^M$.
\end{lem}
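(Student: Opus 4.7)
The plan is to exploit linearity of expectation together with the two defining properties of the class: $\thetahat_0$ is a sample analog of $\theta$, and each coordinate of $\hat X$ only involves $\tauhat_{t,gg'}$ with both $g,g' > t$. Writing $\thetahat_\beta = \thetahat_0 - \hat X' \beta$, linearity reduces the claim to showing (i) $\expe{\thetahat_0} = \theta$ and (ii) $\expe{\hat X} = 0$.

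For step (i), the main computation is $\expe{\bar Y_{tg}}$ under Assumption \ref{asm: random treatment}. Since $\bar Y_{tg} = N_g^{-1} \sum_i D_{ig} Y_{it}$ and $D_{ig} Y_{it} = D_{ig} Y_{it}(g)$, and since $Y_{it}(g)$ is non-stochastic in our design-based framework, I only need $\expe{D_{ig}}$. Under the uniform-over-permutations distribution in Assumption \ref{asm: random treatment}, each unit is equally likely to be in cohort $g$, giving $\expe{D_{ig}} = N_g/N$. Hence $\expe{\bar Y_{tg}} = N^{-1} \sum_i Y_{it}(g) = \expefin{Y_{it}(g)}$, and so $\expe{\tauhat_{t,gg'}} = \expefin{Y_{it}(g)} - \expefin{Y_{it}(g')} = \tau_{t,gg'}$. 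Summing with the weights $a_{t,gg'}$ yields $\expe{\thetahat_0} = \theta$.

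For step (ii), I use the same calculation, now combined with Assumption \ref{asm: no anticipation}. Each coordinate $\hat X_j = \sum_{(t,g,g'):\, g,g'>t} b^j_{t,gg'} \tauhat_{t,gg'}$ restricts the sum to pre-treatment periods relative to both $g$ and $g'$. For each such term, Assumption \ref{asm: no anticipation} gives $Y_{it}(g) = Y_{it}(g')$ for every $i$, hence $\tau_{t,gg'} = 0$ and therefore $\expe{\tauhat_{t,gg'}} = 0$ by step (i). Linearity then gives $\expe{\hat X_j} = 0$ for every $j$, i.e.\ $\expe{\hat X} = 0$.

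Combining, $\expe{\thetahat_\beta} = \expe{\thetahat_0} - \expe{\hat X}' \beta = \theta$ for every $\beta \in \reals^M$. There is no real obstacle here: the only mildly technical point is recording carefully that design-based expectations act only on the $D_{ig}$'s (the potential outcomes being fixed), so that the standard permutation calculation $\expe{D_{ig}} = N_g/N$ produces the finite-population means. Everything else is bookkeeping through linearity and the domain restriction $g,g'>t$ that defines $\hat X$.
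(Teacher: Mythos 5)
Your proposal is correct and follows essentially the same argument as the paper: both reduce the claim by linearity to $\expe{\thetahat_0} = \theta$ and $\expe{\hat X} = 0$, using $\expe{D_{ig}} = N_g/N$ from Assumption \ref{asm: random treatment} and the fact that no anticipation (Assumption \ref{asm: no anticipation}) kills the pre-treatment contrasts entering $\hat X$. The only difference is cosmetic: you work term-by-term with the $\tauhat_{t,gg'}$ and weights $a_{t,gg'},\, b^j_{t,gg'}$, while the paper writes the same computation in the matrix notation $A_{\theta,g}$, $A_{0,g}$.
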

\noindent See Remark \ref{rem: bias under nonrandom timing} below for a discussion of the bias that arises when Assumption \ref{asm: random treatment} fails.

We next turn our attention to finding the value $\beta^*$ that minimizes the variance.

\begin{prop} \label{prop: efficient betastar and variances}
Under Assumptions \ref{asm: random treatment} and \ref{asm: no anticipation}, the variance of $\thetahat_\beta$ is uniquely minimized at
\begin{equation}
\beta^* = \var{\hat{X}}^{-1} \cov{\hat{X}, \thetahat_0},   \label{eqn: expression for betastar regression} 
\end{equation}
provided that $\var{\hat{X}}$ is positive definite. Further, the variances and covariances in the expression for $\beta^*$ are given by
\begin{small}
\begin{align*}
\var{\twovec{\thetahat_0}{\hat X}} &= \twobytwomat{ \sum_g {N_g}^{-1} \, A_{\theta,g} \, S_g \, A_{\theta,g}' - N^{-1} S_\theta,  }{ \sum_g {N_g}^{-1} \, A_{\theta,g} \, S_g \, A_{0,g}'  }{ \sum_g {N_g}^{-1} \, A_{0,g} \, S_g \, A_{\theta,g}',}{ \sum_g {N_g}^{-1} \, A_{0,g} \, S_g \, A_{0,g}' } 
&=: \twobytwomat{V_{\thetahat_0}}{V_{\thetahat_0,\hat{X}}}{ V_{\hat{X},\thetahat_0} }{ V_{\hat{X}} },
\end{align*}
\end{small}
\noindent where $S_\theta = \varfin{ \sum_g A_{\theta,g} Y_i(g) }$. The efficient estimator has variance given by $\var{\thetahat_{\beta^*}} = V_{\thetahat_0} - (\beta^*)' V_{\hat{X}}^{-1} (\beta^*)$. 
\end{prop}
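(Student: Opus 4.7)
The plan is two-stage: first, characterize the variance-minimizing $\beta^*$ by a first-order-condition argument on a quadratic-in-$\beta$ objective; second, derive the claimed entries of the joint variance matrix of $(\thetahat_0, \hat X)$ under the design distribution in Assumption \ref{asm: random treatment} together with the no-anticipation structure of $\hat X$. For the first stage, Lemma \ref{lem: thetahat beta unbiased} gives $\expe{\thetahat_\beta} = \theta$ for every $\beta$, so minimum MSE coincides with minimum variance. Because $\thetahat_\beta = \thetahat_0 - \hat X' \beta$ is affine in $\beta$,
$$\var{\thetahat_\beta} = \var{\thetahat_0} - 2\beta' \cov{\hat X, \thetahat_0} + \beta' \var{\hat X} \beta,$$
which is strictly convex in $\beta$ whenever $\var{\hat X}$ is positive definite. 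The first-order condition produces the unique minimizer $\beta^* = \var{\hat X}^{-1} \cov{\hat X, \thetahat_0}$ in (\ref{eqn: expression for betastar regression}), and substituting back gives $\var{\thetahat_{\beta^*}} = V_{\thetahat_0} - V_{\thetahat_0, \hat X} V_{\hat X}^{-1} V_{\hat X, \thetahat_0}$, which rearranges into the final display of the proposition.

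For the second stage, I would write $\thetahat_0 = \sum_g A_{\theta,g} \bar Y_g$ and $\hat X = \sum_g A_{0,g} \bar Y_g$ with $\bar Y_g = N_g^{-1} \sum_i D_{ig} Y_i(g)$. Under Assumption \ref{asm: random treatment} the cohort assignment is a uniformly random partition with prescribed group sizes, so standard Neymanian finite-population sampling calculations give
$$\var{\bar Y_g} = \frac{N - N_g}{N \, N_g} S_g \quad \text{and} \quad \cov{\bar Y_g, \bar Y_{g'}} = -\frac{1}{N} S_{gg'} \text{ for } g \neq g'.$$
Applying these to a generic bilinear aggregation yields $\cov{\sum_g A_g \bar Y_g, \sum_g B_g \bar Y_g} = \sum_g N_g^{-1} A_g S_g B_g' - N^{-1} \sum_{g,g'} A_g S_{gg'} B_g'$, where the correction term rewrites as $-N^{-1}\varfin{\sum_g A_g Y_i(g), \sum_g B_g Y_i(g)}$. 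Taking $(A_g, B_g) = (A_{\theta,g}, A_{\theta,g})$ immediately produces the $V_{\thetahat_0}$ block with its $-N^{-1} S_\theta$ correction.

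The main obstacle --- the step that distinguishes this proof from a generic random-sampling calculation --- is showing that the $-N^{-1}$ correction vanishes whenever one of the aggregators is $A_{0,g}$, yielding the clean forms of $V_{\hat X}$ and $V_{\thetahat_0, \hat X}$ stated. This rests on two complementary facts. First, each component $\hat X_j = \sum_{(t,g,g'):\, g,g' > t} b_{t,gg'}^j \tauhat_{t,gg'}$ is built from antisymmetric pairwise differences $\tauhat_{t,gg'} = \bar Y_{tg} - \bar Y_{tg'}$; when rewritten as $\sum_g A_{0,g}^{(j)} \bar Y_g$, the $t$-th entry of $A_{0,g}^{(j)}$ is nonzero only when $t < g$, and summing the coefficients of $\bar Y_{tg}$ over $g$ gives zero by the symmetry $b^j_{t,gg'} \bar Y_{tg}$ versus $-b^j_{t,g'g}\bar Y_{tg}$, so $\sum_g A_{0,g} \equiv 0$ as a matrix identity. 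Second, Assumption \ref{asm: no anticipation} makes $Y_{it}(g)$ independent of $g$ whenever $t < g$, so in every position where $A_{0,g}$ has a nonzero entry we may replace $Y_i(g)$ by a common pre-treatment vector $\tilde Y_i$. Combining these, $\sum_g A_{0,g} Y_i(g) = \bigl(\sum_g A_{0,g}\bigr) \tilde Y_i = 0$ for every unit $i$, so the correction term collapses to a finite-population covariance against the zero vector and hence vanishes. This delivers $V_{\hat X} = \sum_g N_g^{-1} A_{0,g} S_g A_{0,g}'$ and $V_{\thetahat_0, \hat X} = \sum_g N_g^{-1} A_{\theta,g} S_g A_{0,g}'$, which together with the first stage completes the proof.
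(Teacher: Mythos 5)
Your proof is correct and follows the same two-stage structure as the paper's: a least-squares/first-order-condition argument for $\beta^*$, followed by a finite-population variance computation in which the heterogeneity correction survives only in the $\thetahat_0$ block. The one substantive difference is in how the joint variance of $(\thetahat_0,\hat X')'$ is obtained: the paper stacks $A_{\tau,g} = (A_{\theta,g}', A_{0,g}')'$ and invokes Theorem 3 of \citet{li_general_2017} directly, whereas you re-derive that formula from the classical Neymanian expressions $\var{\bar Y_g} = \tfrac{N-N_g}{N N_g}S_g$ and $\cov{\bar Y_g,\bar Y_{g'}} = -\tfrac{1}{N}S_{gg'}$, which makes the argument more self-contained at the cost of taking those group-mean formulas as known (they are essentially the content of the cited result). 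You also spell out, via the antisymmetry of the $\tauhat_{t,gg'}$ construction (so $\sum_g A_{0,g}=0$) combined with no anticipation, why $\sum_g A_{0,g} Y_i(g)=0$ for every $i$; the paper asserts this identity directly from Assumption \ref{asm: no anticipation}, so your version makes explicit the step that kills the $-N^{-1}$ correction in the $\hat X$ blocks, and it is the same mechanism used in the paper's proof of Lemma \ref{lem: thetahat beta unbiased}. Two cosmetic points: your generic bilinear formula has an index slip ($B_g'$ should be $B_{g'}'$ in the double sum), though your subsequent identification of that term as the finite-population covariance is the correct reading; and your plugged-in expression $V_{\thetahat_0} - V_{\thetahat_0,\hat X}V_{\hat X}^{-1}V_{\hat X,\thetahat_0} = V_{\thetahat_0} - (\beta^*)'V_{\hat X}\beta^*$ is the correct minimized variance, which matches the proposition's final display only up to what appears to be a typo there ($V_{\hat X}^{-1}$ in place of $V_{\hat X}$).
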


\noindent Equation (\ref{eqn: expression for betastar regression}) shows that the variance-minimizing $\beta^*$ is the best linear predictor of $\thetahat_0$ given $\hat{X}$. This formalizes the intuition that it is efficient to place more weight on pre-treatment differences in outcomes the more strongly they correlate with the post-treatment differences in outcomes. 

\begin{example}[continues=example:2periods]
In our ongoing two-period example, the efficient estimator $\thetahat_{\beta^*}$ derived in Proposition \ref{prop: efficient betastar and variances} is equivalent to the efficient estimator for cross-sectional randomized experiments in \citet{lin_agnostic_2013} and \citet{li_general_2017}. The optimal coefficient $\beta^*$ is equal to $\frac{N_\infty}{N} \beta_2 + \frac{N_2}{N} \beta_\infty$, where $\beta_g$ is the coefficient on $Y_{i1}$ from a regression of $Y_{i2}(g)$ on $Y_{i1}$ and a constant. Intuitively, this estimator puts more weight on the pre-treatment outcomes (i.e., $\beta^*$ is larger) the more predictive is the first period outcome $Y_{i1}$ of the second period potential outcomes. In the special case where the coefficients on lagged outcomes are equal to 1, the canonical difference-in-differences (DiD) estimator is optimal, whereas the simple difference-in-means (DiM) is optimal when the coefficients on lagged outcome are zero. For values of $\beta^* \in (0,1)$, the efficient estimator can be viewed as a weighted average of the DiD and DiM estimators.  
\end{example}

\subsection{Properties of the plug-in estimator}
Proposition \ref{prop: efficient betastar and variances} solves for the $\beta^*$ that minimizes the variance of $\thetahat_\beta$. However, the efficient estimator $\thetahat_{\beta*}$ is not of practical use since the ``oracle'' coefficient $\beta^*$ depends on the covariances of the potential outcomes, $S_g$, which are typically not known in practice. Mirroring \citet{lin_agnostic_2013} for cross-sectional randomized experiments, we now show that $\beta^*$ can be approximated by a plug-in estimate $\betahat^*$, and the resulting estimator $\thetahat_{\hat\beta^*}$ has similar properties to the ``oracle'' estimator $\thetahat_{\beta^*}$ when $N$ is large.

\subsubsection{Definition of the plug-in estimator}
To formally define the plug-in estimator, let 
$$\hat{S}_g = \frac{1}{N_g - 1} \sum_i  D_{ig} (Y_i(g) - \bar{Y}_g) (Y_i(g) - \bar{Y}_g)'$$ be the sample analog to $S_g$, and let $\hat{V}_{\hat{X},\thetahat_0}$ and $\hat{V}_{\hat{X}}$ be the analogs to $V_{\hat{X},\thetahat_0}$ and $V_{\hat{X}}$ that replace $S_g$ with $\hat{S}_g$ in the definitions. We then define the plug-in coefficient $$\betahat^* = \hat{V}_{\hat{X}}^{-1} \hat{V}_{\hat{X},\thetahat_0},$$
and consider the properties of the plug-in efficient estimator $\thetahat_{\betahat^*}$.

\begin{example}[continues=example:2periods]
In our ongoing two-period example, which we have shown is analogous to a cross-sectional randomized experiment, the plug-in estimator $\thetahat_{\betahat^*}$ is equivalent to the efficient plug-in estimator for cross-sectional experiments considered in \citet{lin_agnostic_2013}. As in \citet{lin_agnostic_2013}, $\thetahat_{\betahat^*}$ can be represented as the coefficient on $D_i$ in the interacted ordinary least squares (OLS) regression,
\begin{equation}
Y_{i2} = \beta_0 + \beta_1 D_i + \beta_2 \dot{Y}_{i1} + \beta_3 D_i \times \dot{Y}_{i1} + \epsilon_{i}, \label{eqn: interacted regression a la lin}
\end{equation}

\noindent where $\dot{Y}_{i1}$ is the demeaned value of $Y_{i1}$.\footnote{We are not aware of a representation of the plug-in efficient estimator as the coefficient from an OLS regression in the more general, staggered case.} Intuitively, this fully-interacted specification fits one linear model to estimate the mean of $Y_{i2}(2)$ and another to estimate $Y_{i2}(\infty)$, and then computes the difference, and thus is an augmented inverse propensity weighted (AIPW) estimator with a linear model for the conditional expectation functions and a constant propensity score \citep{glynn_introduction_2010}. 
\end{example}

\begin{Remark}[Connection to \citet{mckenzie_beyond_2012}] \label{rem: connection to mckenzie}
\citet{mckenzie_beyond_2012} proposes using an estimator similar to the plug-in efficient estimator in the two-period setting considered in our ongoing example. Building on results in \citet{frison_repeated_1992}, he proposes using the coefficient $\gamma_1$ from the OLS regression
\begin{equation}
Y_{i2} = \gamma_0 + \gamma_1 D_i + \gamma_2 \dot{Y}_{i1} + \epsilon_{i} \label{eqn: noninteracted regression a la lin},
\end{equation}
\noindent which is sometimes referred to as the Analysis of Covariance (ANCOVA I). This differs from the regression representation of the efficient plug-in estimator in (\ref{eqn: interacted regression a la lin}), sometimes referred to as ANCOVA II, in that it omits the interaction term $D_i \dot{Y}_{i1}$. Treating $\dot{Y}_{i1}$ as a fixed pre-treatment covariate, the coefficient $\gammahat_1$ from (\ref{eqn: noninteracted regression a la lin}) is equivalent to the estimator studied in \citet{Freedman(2008)-several_treatments,Freedman(2008)-regadj_to_experimental_data}. The results in \citet{lin_agnostic_2013} therefore imply that \citet{mckenzie_beyond_2012}'s estimator will have the same asymptotic efficiency as $\thetahat_{\betahat^*}$ under constant treatment effects. Intuitively, this is because the coefficient on the interaction term in (\ref{eqn: interacted regression a la lin}) converges in probability to 0. However, the results in \citet{Freedman(2008)-several_treatments, Freedman(2008)-regadj_to_experimental_data} imply that under heterogeneous treatment effects \citet{mckenzie_beyond_2012}'s estimator may even be less efficient than the simple difference-in-means $\thetahat_{0}$, which in turn is (weakly) less efficient than $\thetahat_{\betahat^*}$.\footnote{Relatedly, \citet{yang_efficiency_2001}, \citet{funatogawa_analysis_2011}, \citet{wan_analyzing_2020}, and \citet{negi_revisiting_2021} show that $\betahat_1$ from (\ref{eqn: interacted regression a la lin}) is asymptotically at least as efficient as $\gammahat_1$ from (\ref{eqn: noninteracted regression a la lin}) in sampling-based models similar to our ongoing example.}
\end{Remark}

\subsubsection{Asymptotic properties of the plug-in estimator}
We will now show that in large populations, the plug-in efficient estimator $\thetahat_{\betahat^*}$ is asymptotically unbiased for $\theta$ and has the same asymptotic variance as the oracle estimator $\thetahat_{\beta^*}$. To derive the properties of the plug-in efficient estimator in large finite populations, we consider a sequence of finite populations of increasing sizes, as in \citet{lin_agnostic_2013} and \citet{li_general_2017}, among other papers. More formally, we consider sequences of populations indexed by $m$ where the number of observations first treated at $g$, $N_{g,m}$, diverges for all $g \in \mathcal{G}$. For ease of notation, as in the aforementioned papers we leave the index $m$ implicit in our notation for the remainder of the paper. We assume the sequence of populations satisfies the following regularity conditions.

\begin{asm} \label{asm: regularity conditions for clt}
\begin{enumerate}[(i)]
    \item
    For all $g \in \mathcal{G}$, $N_g / N \rightarrow p_g \in (0,1)$.
    \item 
    For all $g,g'$, $S_g$ and $S_{gg'}$ have limiting values denoted $S^*_g$ and $S^*_{gg'}$, respectively, with $S^*_{g}$ positive definite.
    \item
    $\max_{i,g} || Y_i(g) - \expefin{Y_i(g)}||^2 / N \rightarrow 0$.
\end{enumerate}
\end{asm}

\noindent Part (i) imposes that the fraction of units first treated at period $g\in \mathcal{G}$ converges to a constant bounded between 0 and 1. Part (ii) requires the variances and covariances of the potential outcomes converge to a constant. Part (iii) requires that no single observation dominates the finite-population variance of the potential outcomes, and is thus analogous to the familiar Lindeberg condition in sampling contexts. 

With these assumptions in hand, we are able to formally characterize the asymptotic distribution of the plug-in efficient estimator. The following result shows that $\thetahat_{\betahat^*}$ is asymptotically unbiased and normally distributed, with the same asympototic variance as the ``oracle'' efficient estimator $\thetahat_{\beta^*}$. The proof exploits the general finite population central limit theorem in \citet{li_general_2017}.

\begin{prop} \label{prop: asymptotic dist of thetahat betahat}
Under Assumptions \ref{asm: random treatment}, \ref{asm: no anticipation}, and \ref{asm: regularity conditions for clt}, $$\sqrt{N}(\thetahat_{\betahat^*} - \theta) \rightarrow_d \normnot{0}{\sigma_*^2}, \hspace{1cm} \text{where} \hspace{1cm} \sigma_*^2 = \lim_{N\rightarrow\infty} N \var{\thetahat_{\beta^*}}.$$ 
\end{prop}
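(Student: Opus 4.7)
The plan is to decompose the estimation error into an oracle term and a first-order negligible correction, then apply a finite-population CLT to the oracle term and a finite-population LLN to handle the plug-in adjustment.

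First I would write the key algebraic identity
\[
\sqrt{N}\bigl(\thetahat_{\betahat^*} - \theta\bigr) = \sqrt{N}\bigl(\thetahat_{\beta^*} - \theta\bigr) - \bigl(\sqrt{N}\,\hat X\bigr)'\bigl(\betahat^* - \beta^*\bigr),
\]
which follows directly from the definition $\thetahat_\beta = \thetahat_0 - \hat X'\beta$. The strategy is then to show that (a) the first term is asymptotically $\mathcal{N}(0,\sigma_*^2)$, (b) $\sqrt{N}\,\hat X = O_p(1)$, and (c) $\betahat^* - \beta^* = o_p(1)$, so that Slutsky's theorem delivers the conclusion.

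For (a), I would observe that $\thetahat_{\beta^*}$ is a linear combination of the treatment-group sample means, namely $\thetahat_{\beta^*} = \sum_g \bigl(A_{\theta,g} - (\beta^*)'A_{0,g}\bigr)\bar Y_g$, with deterministic coefficients depending only on the $N_g$ and on the (deterministic) limit of $\beta^*$ that exists by Assumption \ref{asm: regularity conditions for clt}(ii). Under completely randomized treatment assignment (Assumption \ref{asm: random treatment}), the joint vector $(\bar Y_g)_{g\in\mathcal{G}}$ obeys the finite-population CLT of \citet{li_general_2017}, since the stratum size condition \ref{asm: regularity conditions for clt}(i) and the Lindeberg-type condition \ref{asm: regularity conditions for clt}(iii) are exactly what that theorem requires. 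Part (b) then follows from applying the same CLT to $\hat X$, which is mean zero by Lemmas already established, combined with the fact that $V_{\hat X}$ converges to a finite positive-definite limit by Assumption \ref{asm: regularity conditions for clt}(ii).

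For (c), the key ingredient is a finite-population LLN showing $\hat S_g \to_p S_g^*$ for each $g$; this can be obtained from the standard argument that, under simple random sampling without replacement of $N_g$ units (induced by Assumption \ref{asm: random treatment} conditional on the assignment to group $g$), the sample covariance of the potential outcomes converges in probability to its finite-population analogue provided $\max_i \|Y_i(g)-\expefin{Y_i(g)}\|^2/N\to 0$, which is exactly Assumption \ref{asm: regularity conditions for clt}(iii). Plugging these into the definitions of $\hat V_{\hat X}$ and $\hat V_{\hat X,\thetahat_0}$ and applying the continuous mapping theorem (using that $V_{\hat X}^*$ is invertible by Assumption \ref{asm: regularity conditions for clt}(ii)) yields $\betahat^* \to_p \beta^*_\infty$, where $\beta^*_\infty$ is the limit of $\beta^*$. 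Since $\beta^* \to \beta^*_\infty$ deterministically, this gives $\betahat^* - \beta^* = o_p(1)$.

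Combining (a)–(c), the cross term in the decomposition is $O_p(1)\cdot o_p(1) = o_p(1)$, so $\sqrt{N}(\thetahat_{\betahat^*} - \theta)$ has the same limiting distribution as $\sqrt{N}(\thetahat_{\beta^*} - \theta)$, which is $\mathcal{N}(0,\sigma_*^2)$ by construction. I expect the main obstacle to be the finite-population LLN for $\hat S_g$: one must handle the without-replacement sampling carefully and verify that the fourth-moment-type quantities arising in the variance bound are controlled by Assumption \ref{asm: regularity conditions for clt}(iii) together with \ref{asm: regularity conditions for clt}(ii). Once that is in place, the rest of the argument is standard finite-population asymptotics in the spirit of \citet{lin_agnostic_2013} and \citet{li_general_2017}.
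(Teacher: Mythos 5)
Your proposal is correct and follows essentially the same route as the paper: the joint finite-population CLT of \citet{li_general_2017} for the vector of group means (hence for $\thetahat_0$, $\hat X$, and the oracle estimator), consistency of $\hat S_g$ (the paper simply cites Proposition 3 of \citet{li_general_2017}, whose conditions are exactly your Assumption \ref{asm: regularity conditions for clt}(i) and (iii)), the continuous mapping theorem for $\betahat^*$, and Slutsky. The only cosmetic difference is that you organize the argument as an oracle-plus-remainder decomposition, whereas the paper works directly with the joint limit of $(\thetahat_0,\hat X')'$ and reads off the limiting variance $V^*_{\thetahat_0} - V^{*\prime}_{\hat{X},\thetahat_0}(V^*_{\hat X})^{-1}V^*_{\hat X,\thetahat_0}$; the two are equivalent.
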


\begin{Remark}[Connection to semi-parametric efficiency] \label{rem: connection to semiparametrics}
Proposition \ref{prop: asymptotic dist of thetahat betahat} shows that the plug-in estimator $\hat\theta_{\hat\beta^*}$ achieves the same asymptotic variance as $\hat\theta_{\beta^*}$, the most efficient estimator in the class $\hat\theta_\beta$. We note that the asymptotic variance of the best estimator in this class is distinct from the semi-parametric efficiency bound in super-population frameworks \citep[e.g.][]{hahn_role_1998, santanna_doubly_2020}. We are not aware of any results on semi-parametric efficiency in design-based frameworks such as ours, nor are we aware of any results on the semi-parametric efficiency bound for panel data settings with staggered treatment timing---although both of these strike us as interesting directions for future research. Existing results do suggest a connection between our notion of efficiency and semi-parametric efficiency in our ongoing two-period example, however. \citet{negi_revisiting_2021} study covariate adjustment in cross-sectional randomized experiments from a super-population perspective, and show that \citet{lin_agnostic_2013}'s estimator (which they refer to as full regression adjustment, FRA) achieves the semi-parametric efficiency bound when the conditional expectation of the potential outcomes is linear in the observed covariates. Since our estimator is equal to FRA in our running two-period example (viewing $Y_{i1}$ as the pre-treatment covariate), this implies that $\hat\theta_{\hat\beta^*}$ is semi-parametric efficient (from the super-population perspective) when the conditional expectation of the second-period potential outcomes are linear in the pre-treatment outcome.
\end{Remark}

\begin{Remark}[On the choice of $\hat{X}$] \label{rem: choice of xhat - theory}
We note that $\var{\hat\theta_{\beta^*}} = V_{\hat\theta_0} - (\beta^*)' V_{X}^{-1} (\beta^*)$ can be viewed as the variance of the residual after linearly projecting $\hat\theta_0$ onto $\hat{X}$. Thus, the asymptotic variance of the plug-in efficient estimator will be smaller if $\hat{X}$ is more predictive of the estimation error in the simple difference-in-means estimator $\hat{\theta}_0$. It thus may be tempting to set $\hat{X}$ to be a vector including all possible comparisons of cohorts in periods before they were treated in order to minimize the asymptotic variance. This may not improve finite-sample performance, however, since the asymptotics considered in Proposition \ref{prop: asymptotic dist of thetahat betahat} assume that the number of observations $N$ is substantially larger than the dimension of $\hat{X}$, and thus may not approximate the finite-sample performance when $dim(\hat{X})$ is large. In particular, using a too high-dimensional $\hat{X}$ may lead to an ``overfitting'' problem analogous to controlling for too many pre-treatment variables in a cross-sectional experiment (see our Monte Carlo section below for an example of this phenomenon). \citet{lei_regression_2020} study covariate adjustment with a diverging number of covariates in cross-sectional randomized experiments, and find that (under certain regularity conditions), linear covariate adjustment works well when the dimension of the covariates is small relative to $N^{-\frac{1}{2}}$.\footnote{Future work might also consider an estimator that uses a high-dimensional $\hat{X}$, but considers some form of regularization on the coefficient $\hat\beta$. \label{fn: regularization}} We suspect a similar heuristic applies to the choice of the dimension of $\hat{X}$, although leave a formal analysis under diverging covariates to future work. In our Monte Carlo simulations below, we find good performance for the scalar $\hat{X}$ such that $\beta=1$ corresponds to the \citet[][]{callaway_difference--differences_2020} estimator, and thus consider this a reasonable default for practitioners implementing our method.
\end{Remark}

\begin{Remark}[Bias under non-random timing] \label{rem: bias under nonrandom timing}
Lemma \ref{lem: thetahat beta unbiased} shows that the oracle efficient estimator $\hat\theta_{\beta^*}$ is unbiased under Random Treatment Timing and No Anticipation. If, however, the Random Treatment Timing assumption is violated, then the efficient estimator may be biased, whereas the DiD estimator $(\beta=1)$ may still be unbiased under a parallel trends assumption. We note that $\hat\theta_{\beta^*} - \hat\theta_{1} = (\beta^*-1) \hat{X},$ and thus $ \expe{\hat\theta_{\beta^*} - \hat\theta_1} = (\beta^*-1) \cdot \expe{\hat{X}}$ (assuming $\hat{X}$ is scalar for simplicity). Hence, when DiD is unbiased but Random Treatment Timing is violated, the bias of the oracle efficient estimator will be larger (i) the farther is $\beta^*$ from 1, and (ii) the larger is $\expe{\hat{X}}$, i.e. the more imbalance there is in the pre-treatment outcome. We note, however, that when treatment timing is non-random, the DiD estimator will often be biased as well, e.g. when treatment is randomly assigned conditional on lagged outcomes \citep{AngristPischke(09), ding_bracketing_2019}.\footnote{As noted above, in the simple two period example, the plug-in efficient estimator is equivalent to an AIPW estimator with a constant propensity score and linear model for the conditional expectation function. Thus, from a super-population perspective, the plug-in efficient estimator would be consistent under the conditional unconfoundedness assumption, $1[G_i = 2] \perp Y_{i2}(.) | Y_{i1}$, when the conditional expectation functions are linear \citep[see, e.g.,][]{hahn_role_1998}. Formalizing this type of robustness in our design-based framework and extending it to settings with staggered treatment timing strikes us an interesting direction for future work.\label{fn: aipw}}    
\end{Remark}

\subsection{Inference \label{subsec: inference}}

We now introduce two methods for inference on $\theta$, the first using conventional $t$-based confidence intervals, and the second using Fisher randomization tests.

\subsubsection{$t$-based Confidence Intervals}

To construct confidence intervals using the asymptotic normal distribution derived in Proposition \ref{prop: asymptotic dist of thetahat betahat}, one requires an estimate of the variance $\sigma_*^2$. We first show that a simple Neyman-style variance estimator is conservative under treatment effect heterogeneity, as is common in finite population settings. We then introduce a less-conservative refinement to this estimator that adjusts for the part of the heterogeneity explained by $\hat{X}$.

Recall that $\sigma_*^2 = \lim_{N\rightarrow\infty} N \var{\thetahat_{\beta^*}}$. Examining the expression for $\var{\thetahat_{\beta^*}}$ given in Proposition \ref{prop: efficient betastar and variances}, we see that all of the components of the variance can be replaced with sample analogs except for the $-S_\theta$ term. This term corresponds with the variance of treatment effects, and is not consistently estimable since it depends on covariances between potential outcomes under treatments $g$ and $g'$ that are never observed simultaneously. This motivates the use of the Neyman-style variance that ignores the $-S_\theta$ term and replaces the variances $S_g$ with their sample analogs $\hat{S}_g$, \begin{small}$$\hat{\sigma}_*^2 = \left(\sum_g \frac{N}{N_g} \, A_{\theta,g} \, \hat{S}_g \, A_{\theta,g}' \right) - \left( \sum_g \frac{N}{N_g} \, A_{\theta,g} \, \hat{S}_g \, A_{0,g}' \right) \left( \sum_g \frac{N}{N_g} \, A_{0,g} \, \hat{S}_g \, A_{0,g}' \right)^{-1} \left( \sum_g \frac{N}{N_g} \, A_{\theta,g} \, \hat{S}_g \, A_{0,g}' \right)' .$$\end{small}

\noindent Since $\hat{S}_g \rightarrow_p S^*_g$ (see Lemma \ref{lem: variance consistency}), it is immediate that the estimator $\hat{\sigma}_*^2$ converges to an upper bound on the asymptotic variance $\sigma_*^2$, although the upper bound is conservative if there are heterogeneous treatment effects such that $S^*_\theta = \lim_{N\rightarrow \infty} S_\theta > 0$.

\begin{lem} \label{lem: consistency of sigmastarhat}
Under Assumptions \ref{asm: random treatment}, \ref{asm: no anticipation}, and \ref{asm: regularity conditions for clt}, $\hat{\sigma}_*^2 \rightarrow_p \sigma_*^2 + S_\theta^* \geq \sigma_*^2$.
\end{lem}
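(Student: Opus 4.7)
The plan is to apply the continuous mapping theorem to the closed-form expression for $\hat{\sigma}_*^2$, leveraging the existing consistency result for $\hat{S}_g$, and then to compare the resulting limit to the expression for $\sigma_*^2$ obtained from Proposition \ref{prop: efficient betastar and variances}.

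First, I would invoke Lemma \ref{lem: variance consistency} to conclude $\hat{S}_g \to_p S_g^*$ for each $g \in \mathcal{G}$. The matrices $A_{\theta,g}$ and $A_{0,g}$ are non-stochastic (they depend only on the fixed weights $a_{t,gg'}$, $b^j_{t,gg'}$ and the deterministic cohort sizes $N_{g'}$), and by Assumption \ref{asm: regularity conditions for clt}(i), $N/N_g \to 1/p_g \in (1,\infty)$. Hence Slutsky's theorem yields
$$\sum_g \frac{N}{N_g}\, A_{\theta,g}\, \hat{S}_g\, A_{\theta,g}' \;\to_p\; \sum_g \frac{1}{p_g}\, A_{\theta,g}\, S_g^*\, A_{\theta,g}',$$
and analogously for the other two sums appearing in $\hat{\sigma}_*^2$. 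Since $S_g^*$ is positive definite (Assumption \ref{asm: regularity conditions for clt}(ii)) and Proposition \ref{prop: efficient betastar and variances} assumes $V_{\hat{X}}$ is positive definite, the limiting matrix $\sum_g (1/p_g)\, A_{0,g}\, S_g^*\, A_{0,g}'$ is positive definite, so inversion is continuous at this point. The continuous mapping theorem then gives
$$\hat{\sigma}_*^2 \;\to_p\; L := \left(\sum_g \tfrac{1}{p_g} A_{\theta,g} S_g^* A_{\theta,g}'\right) - \left(\sum_g \tfrac{1}{p_g} A_{\theta,g} S_g^* A_{0,g}'\right)\!\left(\sum_g \tfrac{1}{p_g} A_{0,g} S_g^* A_{0,g}'\right)^{-1}\!\left(\sum_g \tfrac{1}{p_g} A_{0,g} S_g^* A_{\theta,g}'\right).$$

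Second, I would compute $\sigma_*^2$ from Proposition \ref{prop: efficient betastar and variances}. Multiplying $V_{\thetahat_0}$, $V_{\thetahat_0,\hat{X}}$, and $V_{\hat{X}}$ by $N$ and passing to the limit (using $N/N_g \to 1/p_g$ and $S_g \to S_g^*$), we get $\sigma_*^2 = L - S_\theta^*$, where the $-S_\theta^*$ term comes from the $-N^{-1} S_\theta$ correction appearing only in $V_{\thetahat_0}$. Combining with the display above yields $\hat{\sigma}_*^2 \to_p \sigma_*^2 + S_\theta^*$. The inequality $\sigma_*^2 + S_\theta^* \geq \sigma_*^2$ is then immediate from $S_\theta^* \geq 0$, which holds because $S_\theta$ is the finite-population variance of the scalar $\sum_g A_{\theta,g}\, Y_i(g)$ and hence non-negative for every $N$.

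There is essentially no obstacle here beyond bookkeeping: all the probabilistic content is absorbed into Lemma \ref{lem: variance consistency}, whose proof does the actual work of establishing consistency of finite-population sample covariances. The only subtlety to check carefully is that the matrix whose inverse appears in $\hat{\sigma}_*^2$ is invertible with probability approaching one and has a well-defined positive definite limit, which follows directly from Assumption \ref{asm: regularity conditions for clt}(ii) together with the positive-definiteness hypothesis carried over from Proposition \ref{prop: efficient betastar and variances}.
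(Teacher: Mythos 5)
Your proposal is correct and follows essentially the same route as the paper, whose proof is just the observation that $\hat{S}_g \rightarrow_p S^*_g$ (Lemma \ref{lem: variance consistency}) combined with the continuous mapping theorem. You simply spell out the bookkeeping the paper leaves implicit — passing to the limit in the closed-form expression for $\hat{\sigma}_*^2$ and matching it against $\lim_N N\var{\thetahat_{\beta^*}}$, with the $-N^{-1}S_\theta$ term in $V_{\thetahat_0}$ accounting for the $S_\theta^* \geq 0$ gap.
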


The estimator $\hat{\sigma}_*^2$ can be improved by using outcomes from earlier periods. The refined estimator intuitively lower bounds the heterogeneity in treatment effects by the part of the heterogeneity that is explained by the outcomes in earlier periods. The construction of this refined estimator mirrors the refinements using fixed covariates in randomized experiments considered in \citet{lin_agnostic_2013} and \citet{ abadie_sampling-based_2020}, with lagged outcomes playing a similar role to the fixed covariates. To avoid technical clutter, we defer the construction of the refined variance estimator to Appendix \ref{appendix: derivation of refined variance}, and merely state the sense in which the refined estimator improves upon the Neyman-style estimator introduced above. 

\begin{lem} \label{lem: consistency of sigmahat** - main text}
The refined estimator $\sigmahat_{**}$, defined in Lemma \ref{lem: consistency of sigmahat**}, satisfies $\hat \sigma_{**}^2 \rightarrow_p \sigma_*^2 + S^*_{\thetatilde}$, where $0 \leq S^*_{\thetatilde} \leq S^*_\theta$, so that $\hat \sigma_{**}$ is asymptotically (weakly) less conservative than $\hat \sigma_*$.
\end{lem}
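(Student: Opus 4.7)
The plan is to define $\hat{\sigma}_{**}^2$ as $\hat{\sigma}_{*}^2$ minus a consistent, non-negative estimate of the ``explainable part'' of the treatment-effect variance, namely of $S_\theta - S_{\thetatilde}$, where $\thetatilde_i$ is a projection residual of $\sum_g A_{\theta,g} Y_i(g)$ onto pre-treatment outcomes that are observed for every unit. The key enabling observation is that, by Assumption \ref{asm: no anticipation}, for any $t < \min\{g : g \in \mathcal{G}, g \neq \infty\}$ the potential outcome $Y_{it}(g)$ does not depend on $g$, so a vector $X_i^{pre}$ of such pre-treatment outcomes acts as a fixed covariate observed for every $i$ regardless of treatment assignment---exactly analogous to the role played by fixed covariates in \citet{lin_agnostic_2013} and \citet{abadie_sampling-based_2020}.

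First I would set $\pi^* = \varfin{X_i^{pre}}^{-1}\,\mathrm{Cov}_f\!\left(X_i^{pre},\ \sum_g A_{\theta,g} Y_i(g)\right)$ and $\thetatilde_i = \sum_g A_{\theta,g} Y_i(g) - (X_i^{pre})'\pi^*$. The orthogonal-projection inequality immediately yields $\varfin{\thetatilde_i} \le \varfin{\sum_g A_{\theta,g} Y_i(g)} = S_\theta$, so passing to limits gives $0 \le S^*_{\thetatilde} \le S^*_\theta$, which is the second half of the claim. I would then construct $\hat{\sigma}_{**}^2$ by the same formula as $\hat{\sigma}_*^2$ but with each cohort-specific sample covariance $\hat{S}_g$ replaced by the analog formed from the residualized outcomes $Y_i - (X_i^{pre})'\hat\pi$, where $\hat\pi$ is a pooled within-cohort estimate of $\pi^*$. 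Intuitively, this replaces the Neyman bound on the treatment-effect variance with $\varfin{\thetatilde_i}$ in place of $S_\theta$.

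Consistency would follow by three ingredients: (a) $\hat{S}_g \convp S_g^*$ from Lemma \ref{lem: variance consistency}; (b) $\hat{\pi} \convp \pi^*$, which follows because $\hat{\pi}$ is a continuous function of the same kind of cohort-specific sample cross-moments whose convergence is established in Lemma \ref{lem: variance consistency}, using Assumption \ref{asm: regularity conditions for clt}(ii)--(iii); and (c) the continuous mapping theorem applied to the closed-form expression defining $\hat{\sigma}_{**}^2$. Combining these gives $\hat{\sigma}_{**}^2 \convp \sigma_*^2 + S^*_{\thetatilde}$. The main obstacle is a bookkeeping one, which is why the precise construction is deferred to Appendix \ref{appendix: derivation of refined variance}: one must choose $X_i^{pre}$ so that its components are genuinely observed for every unit (i.e., lie strictly before treatment for every $g \in \mathcal{G}$) and so that the residualization correctly decomposes \emph{each} cohort-specific variance $S_g$ in the expression for $\hat{\sigma}_*^2$, rather than only the overall variance of $\sum_g A_{\theta,g} Y_i(g)$. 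In settings where no period precedes all treatment dates, one would have to adapt the projection to use cohort-specific sets of pre-treatment periods, which slightly complicates the definition but does not alter the proof strategy above.
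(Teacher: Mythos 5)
Your population-level half is fine: projecting $\theta_i=\sum_g A_{\theta,g}Y_i(g)$ onto common pre-treatment outcomes and using the projection inequality gives $0\le S^*_{\thetatilde}\le S^*_\theta$, which is the content of the paper's Lemma \ref{lem: decomp of s theta when there are unused periods } (proved there as an exact finite-population identity rather than an inequality). The gap is in the construction of $\hat\sigma_{**}^2$ itself, which is the substance of this lemma --- the statement explicitly refers to the estimator ``defined in Lemma \ref{lem: consistency of sigmahat**}'' --- and cannot be deferred as bookkeeping. First, ``$Y_i-(X_i^{pre})'\hat\pi$'' is not well defined: $Y_i$ is a $T$-vector while $(X_i^{pre})'\hat\pi$ is the scalar fitted value from projecting $\sum_g A_{\theta,g}Y_i(g)$ on $X_i^{pre}$. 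Second, and more fundamentally, residualizing every unit with a single pooled coefficient $\hat\pi$ and then recomputing the $\hat\sigma_*^2$ formula does not deliver the claimed limit: a common linear function of a fixed pre-treatment covariate subtracted from all potential outcomes is (asymptotically) re-absorbed by the $\hat X$-adjustment term of that formula. In the two-period example this is exact: with $W_i(g)=Y_{i2}(g)-\pi^{*}Y_{i1}$ the post-treatment contrast becomes $\thetahat_0-\pi^{*}\hat X$, so $\lim N\min_\beta\var{\thetahat_0-\pi^{*}\hat X-\beta\hat X}=\sigma_*^2$, while the unit-level treatment effects, and hence $S_\theta$, are unchanged because both potential outcomes are shifted identically; your recomputed estimator therefore still converges to $\sigma_*^2+S_\theta^*$, i.e.\ it achieves no refinement at all, rather than converging to $\sigma_*^2+S^*_{\thetatilde}$.

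The explainable part of $S_\theta$ is identified only through \emph{cohort-specific} projections, because it is the across-$g$ variation in how pre-period outcomes predict $A_{\theta,g}Y_i(g)$ that captures predictable treatment-effect heterogeneity. This is what the paper does: it sets $\betahat_g=\left(M\hat S_gM'\right)^{-1}M\hat S_gA_{\theta,g}'$ (estimable within cohort $g$ since $MY_i(g)=MY_i(g_{min})$ under Assumption \ref{asm: no anticipation}), defines $\hat\sigma_{**}^2=\hat\sigma_*^2-\left(\sum_g\betahat_g\right)'\left(M\hat S_{g_{min}}M'\right)\left(\sum_g\betahat_g\right)$, and concludes via the exact decomposition in Lemma \ref{lem: decomp of s theta when there are unused periods }, $S_\theta=\varfin{\thetatilde_i}+\frac{N+1}{N-1}\left(\sum_g\beta_g\right)'\left(MS_{g_{min}}M'\right)\left(\sum_g\beta_g\right)$, combined with $\hat S_g\rightarrow_p S_g^*$ (Lemma \ref{lem: variance consistency}), Lemma \ref{lem: consistency of sigmastarhat}, and the continuous mapping theorem. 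Your consistency ingredients (a)--(c) are exactly the right tools, but they are applied to a construction that does not estimate the target; a Lin/Abadie-style ``residualize, then take the Neyman variance'' route can be made to work only with cohort-specific residualization coefficients, which is precisely what your pooled $\hat\pi$ discards. (A minor further point: the paper only needs pre-periods $t<g_{min}$, where $g_{min}$ is the earliest cohort with $A_{\theta,g}\neq0$, which is weaker than requiring periods preceding every treatment date in $\mathcal{G}$.)
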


\noindent It is then immediate that the confidence interval, $CI_{**} =\betahat^* \pm z_{1-\alpha/2} \cdot \widehat{se}$ is a valid $1-\alpha$ level confidence interval for $\theta$, where $\widehat{se} = \sigmahat_{**}/\sqrt{n}$ is the standard error and $z_{1-\alpha/2}$ is the $1-\alpha/2$ quantile of the normal distribution.

\subsubsection{Fisher Randomization Tests\label{subsec: frt}}

An alternative approach to inference uses Fisher randomization tests (FRTs), otherwise known as permutation tests. We will show that an FRT using a studentized version of the efficient estimator has the dual advantages that it 1) has exact size under the sharp null of no treatment effects for all units, and 2) is asymptotically valid for the weak null that $\theta = 0$. 

To derive the FRT, recall that the observed data is $(Y,G)$, where $Y$ collects all of the $Y_{it}$ and $G = (G_1,...,G_N)'$. Let $\mathcal{T} = \mathcal{T}(Y,G)$ denote a statistic of the data, and let $\mathcal{T}_\pi = \mathcal{T}(Y,G_\pi)$ be the statistic using the transformed data in which $G$ is replaced with a permutation $G_{\pi}$.\footnote{Formally, a permutation $\pi$ is a bijective map from $\{1,...,N\}$ onto itself, and $G_\pi = (G_{\pi(1)},...,G_{\pi(N)})'.$} A Fisher randomization test (FRT) computes the $p$-value
$$p_{FRT} = P_{\pi \sim U(\Pi)}( \mathcal{T}_\pi \geq \mathcal{T}(Y,G) ),$$
where the probability is taken over the uniform distribution on the set of permutations $\Pi$.\footnote{It is often difficult to calculate the $p$-value over all permutations exactly, so the $p$-value is approximated via simulation. We use 500 simulation draws in our simulations and 5,000 draws in the empirical application.} Under the sharp null hypothesis that $Y_{i}(g) = Y_{i}(g')$ for all $i,g,g'$, the distribution of $\mathcal{T}_\pi$ is the same as the distribution as $\mathcal{T}(Y,G)$, and thus by standard arguments the FRT is exact in finite samples (see, e.g., \citet{imbens_causal_2015}). 

The sharp null hypothesis of no treatment effect will often be too restrictive in practice, however, as we may be more interested in the hypothesis that the \textit{average} effect is zero, i.e., $H_0: \theta =0$. Unfortunately, in general FRTs may not have correct size for such weak null hypotheses even asymptotically \citep{wu_randomization_2020}.

We now show, however, that when the FRT is based on the studentized statistic $\mathcal{T}(Y,G) = \thetahat_{\betahat^*}/\widehat{se}$, it has asymptotically correct size under the weak null. In fact, we will show that asymptotically the FRT is equivalent to testing that $0$ falls within the $t$-based confidence interval $CI_{**}$ derived in the previous section. Thus, this FRT based on the studentized statistic is in some sense the ``best of both worlds'' of Fisherian and Neymanian inference in that it has exact size under the sharp null hypothesis while having asymptotically correct size under the weak null. 

The following regularity condition imposes that the means of the potential outcomes have limits, and that their fourth moment is bounded.
\begin{asm}
Suppose that for all $g$, $\lim_{N\rightarrow\infty} \expefin{Y_i(g)} = \mu_g < \infty$, and there exists $L < \infty$ such that $N^{-1} \sum_i ||Y_i(g) - \expefin{Y_i(g)}||^4 < L$ for all $N$. \label{asm: additional assumptions for clt}  
\end{asm}

With this assumption in hand, we can make precise the sense in which the FRT is asymptotically valid under the weak null.
\begin{prop} \label{prop: asymptotic properties of frt}
Suppose Assumptions \ref{asm: random treatment}-\ref{asm: additional assumptions for clt} hold. Let $t_\pi =(\thetahat^* / \widehat{se})_\pi$ be the studentized $t$-statistic under permutation $\pi$. Then $t_\pi \rightarrow_d \normnot{0}{1}$, $P_G$-almost surely. Hence, if $p_{FRT}$ is the $p$-value from the FRT associated with $|t_\pi|$, then under $H_0: \theta =0$, 
$$\lim_{N\rightarrow \infty} P(p_{FRT} \leq \alpha) \leq \alpha ,$$

\noindent $P_G$-almost surely, with equality if and only if $S_\theta^* = 0$.


\end{prop}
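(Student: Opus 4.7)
The plan is to decompose the problem into three pieces: first, the permutation distribution of $t_\pi$ conditional on the observed data; second, the unconditional asymptotic distribution of the statistic $t = \thetahat_{\betahat^*}/\widehat{se}$ under $H_0: \theta = 0$; and third, the interaction between the two that yields the FRT size guarantee.

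For the first piece, I would observe that, conditional on $Y$ and the vector of cohort sizes, drawing $G_\pi$ uniformly is equivalent to running a new finite-population experiment on a hypothetical population whose potential outcomes satisfy $\widetilde Y_i(g) = Y_i$ for every $g$ (i.e., exactly zero treatment effect for every unit). The permuted statistic $t_\pi$ is then precisely $\thetahat_{\betahat^*}/\widehat{se}$ applied to this hypothetical population, so I can try to invoke Proposition \ref{prop: asymptotic dist of thetahat betahat} and Lemma \ref{lem: consistency of sigmahat** - main text}. Because $\widetilde Y_i(g) - \widetilde Y_i(g') = 0$ for all $i,g,g'$, in the hypothetical population $S_{\widetilde\theta} = 0$, so the variance estimator is exactly consistent (not conservative), yielding $t_\pi \to_d \normnot{0}{1}$. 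The nontrivial step is verifying that Assumption \ref{asm: regularity conditions for clt} transfers to the hypothetical population $P_G$-almost surely: part (i) is immediate since it only depends on $N_g/N$; parts (ii) and (iii) require showing that the empirical moments of $Y_i$ converge and that $\max_i \|Y_i - \bar Y\|^2/N \to 0$. I would establish these using Assumption \ref{asm: additional assumptions for clt}'s bounded fourth moments and convergent means, combined with a finite-population LLN for sampling without replacement and a Borel–Cantelli tail-bound argument.

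For the second piece, Proposition \ref{prop: asymptotic dist of thetahat betahat} and Lemma \ref{lem: consistency of sigmahat** - main text} together yield $\sqrt N\,\thetahat_{\betahat^*} \to_d \normnot{0}{\sigma_*^2}$ and $N\,\widehat{se}^2 \to_p \sigma_*^2 + S_{\thetatilde}^*$ under $H_0$, so Slutsky's lemma gives $t \to_d \normnot{0}{\sigma_*^2/(\sigma_*^2 + S_{\thetatilde}^*)}$. Hence $|t|$ is stochastically dominated by $|Z|$ with $Z \sim \sN$, with equality precisely when $S_{\thetatilde}^* = 0$, which by the bound $0 \le S_{\thetatilde}^* \le S_\theta^*$ in Lemma \ref{lem: consistency of sigmahat** - main text} holds whenever $S_\theta^* = 0$.

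For the third piece, I combine the first two along the lines of the studentized-FRT arguments in \citet{wu_randomization_2020} and \citet{zhao_covariate-adjusted_2020}: the FRT $p$-value is asymptotically equal to $2(1-\Phi(|t|))$ by an almost-sure Portmanteau argument applied to step one, and since $|t|$ is stochastically dominated by $|Z|$ the conclusion $\lim P(p_{FRT} \leq \alpha) \leq \alpha$ follows, with equality iff $S_\theta^* = 0$. The main obstacle is the almost-sure convergence claimed in step one: because $Y$ is itself random and I need the regularity conditions to transfer to the hypothetical population uniformly along the sequence of populations, the bookkeeping to convert $L^4$ control on the $Y_i(g)$ into an almost-sure Lindeberg-type bound on the observed $Y_i$ is the delicate part of the argument.
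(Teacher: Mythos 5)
Your proposal is correct and follows essentially the same route as the paper: conditioning on the data, viewing the permutation draw as a fresh randomization in a hypothetical zero-treatment-effect population so that Proposition \ref{prop: asymptotic dist of thetahat betahat} and the refined variance lemma apply with no conservativeness, verifying that the regularity conditions transfer $P_G$-almost surely via finite-population strong laws (the paper cites Lemmas A5--A6 of \citet{wu_randomization_2020} for exactly the bookkeeping you flag as delicate), and then comparing with the unconditional limit $\normnot{0}{\sigma_*^2/(\sigma_*^2 + S^*_{\thetatilde})}$ to get the size bound. Your handling of the equality condition via $S^*_{\thetatilde}$ is, if anything, slightly more careful than the paper's, which states the limiting variance ratio directly in terms of $S^*_\theta$.
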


Proposition \ref{prop: asymptotic properties of frt} implies that the FRT using the studentized version of the efficient estimator asymptotically controls size under the weak null of no average treatment effects. Indeed, the proposition implies that the FRT is asymptotically equivalent to the test that the $t$-based confidence interval $CI_{**}$ includes $0$. Proposition \ref{prop: asymptotic properties of frt} extends the results in \citet[][]{wu_randomization_2020} and \citet[][]{zhao_covariate-adjusted_2020}, who consider permutation tests based on a studentized statistic in cross-sectional randomized experiments.\footnote{Permutation tests based on a studentized statistic have been considered in other contexts as well, for example \citet{janssen_studentized_1997,chung_exact_2013,chung_multivariate_2016,diciccio_robust_2017,bugni_inference_2018, MacKinnon2020,bai_inference_2022}.} Given the desirable properties of the FRT under both the sharp and weak null hypotheses, we recommend that researchers report $p$-values from the FRT alongside the usual $t$-based confidence intervals.

\subsection{Implications for existing estimators\label{subsec:examples}}
We now discuss the implications of our results for estimators previously proposed in the literature. We have shown that in the simple two-period case considered in Example \ref{example:2periods}, the canonical difference-in-differences corresponds with $\thetahat_1$. Likewise, in the staggered case, we showed in Examples \ref{example:callway and santa'anna}-\ref{example: dechaisemartin} that the estimators of \citet{callaway_difference--differences_2020}, \citet{sun_estimating_2020}, and \citet{de_chaisemartin_two-way_2020} correspond with the estimator $\thetahat_1$ for an appropriately defined estimand and $\hat{X}$. Our results thus imply that, unless $\beta^* =1$, the estimator $\thetahat_{\beta^*}$ is unbiased for the same estimand and has strictly lower variance under (quasi-)random treatment timing. Since the optimal $\beta^*$ depends on the potential outcomes, we do not generically expect $\beta^* =1$, and thus the previously-proposed estimators will generically be dominated in terms of efficiency. Although the optimal $\beta^*$ will typically not be known, our results imply that the plug-in estimator $\thetahat_{\betahat^*}$ will have similar properties in large populations, and thus will be more efficient than the previously-proposed estimators in large populations under (quasi-)random treatment timing. We thus recommend the plug-in efficient estimator in settings where parallel trends is justified with random treatment timing.

We note, however, that the estimators in the aforementioned papers are valid for the ATT in settings where only parallel trends holds but there is not random treatment timing, whereas the validity of the efficient estimator depends on random treatment timing (see Remark \ref{rem: pt vs random timing} above).\footnote{The estimator of \citet{de_chaisemartin_two-way_2020} can also be applied in settings where treatment turns on and off over time.} Although in some settings parallel trends is justified by arguing that treatment is (quasi-)randomly assigned, in some observational settings the researcher may be more comfortable imposing parallel trends than quasi-random treatment timing. We thus view the the plug-in efficient estimator to be complementary to the estimators considered in previous work, since it is more efficient under stricter assumptions that will not hold in all cases of interest.


\subsection{Extensions and practical considerations\label{subsec: extensions}}

We now discuss several extensions and practical considerations that may be useful for applying our methods.

\begin{Remark}[Testing the randomization assumption]
It may often be desirable to test the assumption of (quasi-)random treatment timing, especially in non-experimental settings where random timing cannot be ensured by design. We briefly describe three approaches. First, since consistency of the efficient estimator depends on the assumption that $\expe{\hat X} =0$, a natural falsification test is to test whether $\hat X$ is significantly different from zero --- i.e. are there significant differences in pre-treatment means between cohorts treated at different times. It is straightforward to conduct a test of the null that $\expe{\hat X} =0$ using a one-sample $t$-test (with a sample analog to the variance given in Proposition \ref{prop: efficient betastar and variances}) or using an FRT. Second, an intuitive approach which mirrors the common practice of testing for pre-existing trends is to estimate an event-study, treating the initial time of treatment as $G_i - k$ for some $k>0$, and then test whether the dynamic effects corresponding with the leads $1,...,k$ are different from zero.\footnote{Given that the efficient estimator differs from the usual DiD estimator, note that this test differs from the common pre-test for pre-existing trends.}  Third, as is common in randomized controlled trials, researchers can test for covariate balance between units treated at different times. For example, \citet{deshpande_who_2019} show that observable characteristics do not predict the timing of social security office closings. We illustrate how these types of tests can be used in our application below. Such tests can be a useful test of the plausibility of the randomization assumption, and can help to identify cases where it is clearly violated.  We caution, however, that as with tests of pre-existing trends \citep[cf.][]{roth_pretest_2022}, such falsification tests may have limited power to detect violations of the randomization assumption, and relying on them can introduce distortions from pre-testing. Thus, it is best to additionally motivate the randomization assumption based on context-specific knowledge.
\end{Remark}

\begin{Remark}[Conditional Random Treatment Timing]
For simplicity, we have considered the case of unconditional random treatment timing. In some experiments, the treatment timing may be randomized among units with some shared observable characteristics (e.g. counties within a state). In this case, the methodology described above can be applied within each randomization stratum, and the stratum-level estimates can be pooled to form aggregate estimates for the population.\footnote{The FRTs can likewise be modified to consider permutations that permute assignments only within randomization strata.} Likewise, in quasi-experimental contexts, the assumption of quasi-random treatment timing may be more plausible among sub-groups of the population (e.g. within units of the same gender and education status), or among groups of units that were treated at similar times (e.g. within a decade). The units can then be partitioned into strata based on discrete observable characteristics, and the analysis we describe can be conducted within each stratum. Extending our results to allow for randomization conditional on a continuous characteristic is an interesting topic for future work.
\end{Remark}

\begin{Remark}[Clustered Treatment Assignment]
Likewise, in some settings there may be clustered assignment of treatment timing  --- e.g. treatment is assigned to families $f$, and all units $i$ in family $f$ are first treated at the same time. This violates Assumption \ref{asm: random treatment}, since not all vectors of treatment timing are equally likely. However, note that any average treatment contrast at the individual level, e.g. $\frac{1}{N} \sum_i Y_{it}(g) - Y_{it}(g')$, can be written as an average contrast of a transformed family-level outcome, e.g. $\frac{1}{F} \sum_f \tilde{Y}_{ft}(g) - \tilde{Y}_{ft}(g')$, where $\tilde{Y}_{ft}(g) = (F/N) \sum_{i \in f} Y_{it}(g)$. Thus, clustered assignment can easily be handled in our framework by analyzing the transformed data at the cluster level.
\end{Remark}

\begin{Remark}[Fixed pre-treatment covariates]
In some settings, researchers may also have access to fixed pre-treatment covariates $W_i$. Differences in the mean of $W_i$ between adoption cohorts can then be added to the vector $\hat{X}$ to further increase precision.
\end{Remark}

\section{Monte Carlo Results\label{sec: monte carlo}}

We present two sets of Monte Carlo results. In Section \ref{subsec: two period sims}, we conduct simulations in a stylized two-period setting matching our ongoing example to illustrate how the plug-in efficient estimator compares to the classical difference-in-differences and simple difference-in-means (DiM) estimators. Section \ref{subsec:wood et al sims} presents a more realistic set of simulations with staggered treatment timing that is calibrated to our application, comparing the plug-efficient estimator to recent DiD-based estimators proposed for the staggered treatment case.

\subsection{Two-period Simulations\label{subsec: two period sims}.}

\paragraph{Specification.} We follow the model in Example \ref{example:2periods} in which there are two periods ($t=1,2$) and units are treated in period two or never-treated $(\mathcal{G}= \{1,2\})$. We generate the potential outcomes as follows. For each unit $i$ in the population, we draw the never-treated potential outcomes $Y_i(\infty) = (Y_{i1}(\infty), Y_{i2}(\infty))'$ from a $\normnot{0}{\Sigma_\rho}$ distribution, where $\Sigma_\rho$ has 1s on the diagonal and $\rho$ on the off-diagonal. The parameter $\rho$ is the correlation between the untreated potential outcomes in period $t=1$ and period $t=2$. We then set $Y_{i2}(2) = Y_{i2}(\infty) + \tau_i$, where $\tau_i = \gamma (Y_{i2}(\infty) - \expefin{Y_{i2}(\infty)})$. The parameter $\gamma$ governs the degree of heterogeneity of treatment effects: if $\gamma =0$, then there is no treatment effect heterogeneity, whereas if $\gamma$ is positive then individuals with larger untreated outcomes in $t=2$ have larger treatment effects. We center by $\expefin{Y_{i2}(\infty)}$ so that the treatment effects are 0 on average. We generate the potential outcomes once, and treat the population as fixed throughout our simulations. Our simulation draws then differ based on the draw of the treatment assignment vector. For simplicity, we set $N_2 = N_\infty = N/2$, and in each simulation draw, we randomly select which units are treated in $t=1$ or not. We conduct 1000 simulations for all combinations of $N_2 \in \{25,1000\}$, $\rho \in \{0,.5,.99\},$ and $\gamma \in \{0,0.5\}$. 

\paragraph{Results.} Table \ref{tbl: 2-period-monte-carlos} shows the bias, standard deviation, and coverage of 95\% confidence intervals for the plug-in efficient estimator $\thetahat_{\betahat^*}$, difference-in-differences $\thetahat^{DiD} = \thetahat_1$, and simple differences-in-means $\thetahat^{DiM} = \thetahat_0$. It also shows the size (null rejection probability) of the FRT using a studentized statistic introduced in Section \ref{subsec: inference}. Confidence intervals are constructed as $\thetahat_{\betahat^*} \pm 1.96 \sigmahat_{**}/\sqrt{n}$ for the plug-in efficient estimator, and analogously for the other estimators.\footnote{\label{fn:variance estimator for other estimators}For $\thetahat_\beta$, we use an analog to $\hat \sigma_{**}$, except the unrefined estimate $\hat{\sigma}_*$ is replaced with the sample analog to the expression for $\var{\thetahat_\beta}$ implied by Proposition \ref{prop: efficient betastar and variances}.} For all specifications and estimators, the estimated bias is small, and coverage is close to the nominal level. Table \ref{tbl: 2-period-monte-carlos-ratios} facilitates comparison of the standard deviations of the different estimators by showing the ratio relative to the plug-in estimator. The standard deviation of the plug-in efficient estimator is weakly smaller than that of either DiD or DiM in nearly all cases, and is never more than 2\% larger than that of either DiD or DiM. The standard deviation of the plug-in efficient estimator is similar to DiD when auto-correlation of $Y(0)$ is high $(\rho=0.99)$ and there is no heterogeneity of treatment effects $(\gamma=0)$, so that $\beta^* \approx 1$ and thus DiD is (nearly) optimal in the class we consider. Likewise, it is similar to DiM when there is no autocorrelation $(\rho=0)$ and there is no treatment effect heterogeneity $(\gamma=0)$, and thus $\beta^* \approx 0$ and so DiM is (nearly) optimal in the class we consider. The plug-in efficient estimator is substantially more precise than DiD and DiM in many other specifications: the standard deviation of DiD can be as much as 1.7 times larger than the plug-in efficient estimator, and the standard deviation of the DiM can be as much as 7 times larger. These simulations thus illustrate how the plug-in efficient estimator can improve on DiD or DiM in cases where they are suboptimal, while retaining nearly identical performance when the DiD or DiM model is optimal.

\begin{table}[hbtp]
\caption{Bias, Standard Deviation, and Coverage for $\thetahat_{\betahat^*}$, $\thetahat^{DiD}, \thetahat^{DiM}$ in 2-period simulations}
\resizebox{\columnwidth}{!}{
\captionsetup[table]{labelformat=empty,skip=1pt}

\begin{tabular}{rrrrrrrrrrrrrrrr}
\toprule
& & & & \multicolumn{3}{c}{Bias} & \multicolumn{3}{c}{SD} & \multicolumn{3}{c}{Coverage} & \multicolumn{3}{c}{FRT Size} \\ 
 \cmidrule(lr){5-7}\cmidrule(lr){8-10}\cmidrule(lr){11-13}\cmidrule(lr){14-16}
$N_1$ & $N_0$ & $\rho$ & $\gamma$ & PlugIn & DiD & DiM & PlugIn & DiD & DiM & PlugIn & DiD & DiM & PlugIn & DiD & DiM \\ 
\midrule
1000 & 1000 & 0.99 & 0.0 & $0.00$ & $0.00$ & $-0.00$ & $0.01$ & $0.01$ & $0.04$ & $0.95$ & $0.95$ & $0.95$ & $0.05$ & $0.05$ & $0.05$ \\ 
1000 & 1000 & 0.99 & 0.5 & $0.00$ & $0.00$ & $-0.00$ & $0.01$ & $0.01$ & $0.06$ & $0.95$ & $0.95$ & $0.95$ & $0.04$ & $0.06$ & $0.05$ \\ 
1000 & 1000 & 0.50 & 0.0 & $0.00$ & $0.00$ & $0.00$ & $0.04$ & $0.04$ & $0.05$ & $0.94$ & $0.95$ & $0.94$ & $0.06$ & $0.05$ & $0.05$ \\ 
1000 & 1000 & 0.50 & 0.5 & $0.00$ & $0.00$ & $0.00$ & $0.05$ & $0.05$ & $0.06$ & $0.95$ & $0.95$ & $0.95$ & $0.06$ & $0.05$ & $0.05$ \\ 
1000 & 1000 & 0.00 & 0.0 & $-0.00$ & $0.00$ & $-0.00$ & $0.04$ & $0.07$ & $0.04$ & $0.95$ & $0.94$ & $0.95$ & $0.05$ & $0.06$ & $0.05$ \\ 
1000 & 1000 & 0.00 & 0.5 & $-0.00$ & $0.00$ & $-0.00$ & $0.06$ & $0.07$ & $0.06$ & $0.95$ & $0.95$ & $0.95$ & $0.04$ & $0.05$ & $0.05$ \\ 
25 & 25 & 0.99 & 0.0 & $0.00$ & $0.00$ & $-0.03$ & $0.04$ & $0.04$ & $0.27$ & $0.94$ & $0.94$ & $0.94$ & $0.04$ & $0.05$ & $0.06$ \\ 
25 & 25 & 0.99 & 0.5 & $0.00$ & $-0.01$ & $-0.04$ & $0.05$ & $0.08$ & $0.34$ & $0.92$ & $0.93$ & $0.93$ & $0.06$ & $0.06$ & $0.06$ \\ 
25 & 25 & 0.50 & 0.0 & $-0.01$ & $0.02$ & $-0.02$ & $0.24$ & $0.29$ & $0.26$ & $0.94$ & $0.95$ & $0.94$ & $0.04$ & $0.04$ & $0.05$ \\ 
25 & 25 & 0.50 & 0.5 & $-0.01$ & $0.01$ & $-0.03$ & $0.30$ & $0.32$ & $0.33$ & $0.94$ & $0.95$ & $0.94$ & $0.04$ & $0.04$ & $0.05$ \\ 
25 & 25 & 0.00 & 0.0 & $-0.03$ & $-0.02$ & $-0.03$ & $0.28$ & $0.38$ & $0.27$ & $0.93$ & $0.95$ & $0.93$ & $0.06$ & $0.04$ & $0.06$ \\ 
25 & 25 & 0.00 & 0.5 & $-0.04$ & $-0.02$ & $-0.04$ & $0.35$ & $0.42$ & $0.34$ & $0.93$ & $0.94$ & $0.94$ & $0.06$ & $0.05$ & $0.06$ \\ 
\bottomrule
\end{tabular}

}
\label{tbl: 2-period-monte-carlos}
\end{table} 

\begin{table}[!hbtp]
\caption{Ratio of standard deviations for $\thetahat^{DiD}$ and $\thetahat^{DiM}$ relative to $\thetahat_{\betahat^*}$ in 2-period simulations}
\captionsetup[table]{labelformat=empty,skip=1pt}
\begin{longtable}{rrrrrrrr}
\toprule
& & & & & \multicolumn{3}{c}{SD Relative to Plug-In} \\ 
 \cmidrule(lr){6-8}
$N_1$ & $N_0$ & $\rho$ & $\gamma$ & $\beta^*$ & PlugIn & DiD & DiM \\ 
\midrule
1000 & 1000 & 0.99 & 0.0 & $0.99$ & $1.00$ & $1.00$ & $7.09$ \\ 
1000 & 1000 & 0.99 & 0.5 & $1.24$ & $1.00$ & $1.71$ & $7.07$ \\ 
1000 & 1000 & 0.50 & 0.0 & $0.52$ & $1.00$ & $1.13$ & $1.15$ \\ 
1000 & 1000 & 0.50 & 0.5 & $0.65$ & $1.00$ & $1.04$ & $1.15$ \\ 
1000 & 1000 & 0.00 & 0.0 & $-0.03$ & $1.00$ & $1.45$ & $1.00$ \\ 
1000 & 1000 & 0.00 & 0.5 & $-0.03$ & $1.00$ & $1.31$ & $1.00$ \\ 
25 & 25 & 0.99 & 0.0 & $0.97$ & $1.00$ & $0.99$ & $6.58$ \\ 
25 & 25 & 0.99 & 0.5 & $1.22$ & $1.00$ & $1.47$ & $6.31$ \\ 
25 & 25 & 0.50 & 0.0 & $0.41$ & $1.00$ & $1.21$ & $1.10$ \\ 
25 & 25 & 0.50 & 0.5 & $0.51$ & $1.00$ & $1.08$ & $1.10$ \\ 
25 & 25 & 0.00 & 0.0 & $0.10$ & $1.00$ & $1.35$ & $0.98$ \\ 
25 & 25 & 0.00 & 0.5 & $0.13$ & $1.00$ & $1.22$ & $0.98$ \\ 
\bottomrule
\end{longtable}

\label{tbl: 2-period-monte-carlos-ratios}
\end{table} \setcounter{table}{\thetable -2}

\subsection{Simulations Based on \citet{wood_reanalysis_2020}\label{subsec:wood et al sims}}
To evaluate the performance of our proposed methods in a more realistic staggered setting, we conduct simulations calibrated to our application in Section \ref{sec: wood application}, which is based on data from \citet{wood_reanalysis_2020}. The outcome of interest $Y_{it}$ is the number of complaints against police officer $i$ in month $t$ for police officers in Chicago. Police officers were randomly assigned to first receive a procedural justice training in period $G_i$. See Section \ref{sec: wood application} for more background on the application.

\paragraph{Simulation specification.} We calibrate our baseline specification as follows. The number of observations and time periods in the data exactly matches that used in our application. We set the untreated potential outcomes $Y_{it}(\infty)$ to match the observed outcomes in the data $Y_{it}$ (which would exactly match the true potential outcomes if there were no treatment effect on any units). In our baseline simulation specification, there is no causal effect of treatment, so that $Y_{it}(g) = Y_{it}(\infty)$ for all $g$. (We describe an alternative simulation design with heterogeneous treatment effects in Appendix Section \ref{appendix section: additional sims}.) In each simulation draw $s$, we randomly draw a vector of treatment dates $G_s = (G_1^s,...,G_N^s)'$ such that the number of units first treated in period $g$ matches that observed in the data (i.e. $\sum 1[G_i^s=g] = N_g$ for all $g$). In total, there are 72 months of data on 5537 officers. There are 47 distinct values of $g$, with the cohort size $N_g$ ranging from 3 to 575. In an alternative specification, we collapse the data to the yearly level, so that there are 6 time periods and 5 larger cohorts. 

For each simulated data-set, we calculate the plug-in efficient estimator $\thetahat_{\betahat^*}$ for four estimands: the simple-weighted average treatment effect $(\theta^{simple})$; the calendar- and cohort-weighted average treatment effects ($\theta^{calendar}$ and $\theta^{cohort}$), and the instantaneous event-study parameter $(\theta^{ES}_0)$.\footnote{We do not report results for the estimand of TWFE specifications, in light of the recent literature showing that these estimands do not have an intuitive causal interpretation in settings with staggered treatment timing (e.g. \citet{borusyak_revisiting_2016,athey_design-based_2022, goodman-bacon_difference--differences_2018, de_chaisemartin_two-way_2020,sun_estimating_2020}). The results for the DiD estimator in the previous section illustrate the performance of TWFE in a simple setting where it has an intuitive estimand.} (See Section \ref{subsec: target parameter} for the formal definition of these estimands). In our baseline specification, we use as $\hat{X}$ the scalar weighted combination of pre-treatment differences used by the \citet[][CS]{callaway_difference--differences_2020} estimator using not-yet-treated units as the comparison ($\hat\tau^{CS2}$ in Example \ref{example:callway and santa'anna}). In the appendix, we also present results for an alternative specification in which $\hat{X}$ is a vector containing $\tauhat_{t,gg'}$ for all pairs $g,g'>t$. For comparison, we also compute the CS and \citet[][SA]{sun_estimating_2020} estimators for the same estimand. Recall that for $\theta_0^{ES}$, the CS estimator coincides with the estimator proposed in \citet{de_chaisemartin_two-way_2020} in our setting, since treatment is an absorbing state. Confidence intervals are calculated as $\thetahat_{\betahat^*} \pm 1.96 \sigmahat_{**}/\sqrt{n} $ for the plug-in efficient estimator and analogously for the CS and SA estimators.\footnote{The variance estimator for the CS and SA estimators is adapted analogously to that for the DiD and DiM estimators, as discussed in footnote \ref{fn:variance estimator for other estimators}. We note that these design-based standard errors differ slightly from those proposed in the original CS and SA papers, which adopt a sampling-based framework; using design-based standard errors makes the CIs for these estimators more directly comparable to those for the plug-in efficient estimator.}

\paragraph{Baseline simulation results.} The results for our baseline specification are shown in Tables \ref{tbl: simulation results - main spec} and \ref{tbl: comparison of SDs - main spec}. As seen in Table \ref{tbl: simulation results - main spec}, the plug-in efficient estimator is approximately unbiased, and 95\% confidence intervals based on our standard errors have coverage rates close to the nominal level for all of the estimands, with size distortions no larger than 3\% for all of our specifications. The size for the FRT is also close to the nominal level, which is intuitive since our baseline specification imposes the sharp null hypothesis, and thus the FRT should be exact up to simulation error. The CS and SA estimators are also both approximately unbiased and have coverage close to the nominal level, although coverage for the SA estimator is as low as 90\% in some specifications.

\begin{table}[!htb]
    \centering
    \captionsetup[table]{labelformat=empty,skip=1pt}
\begin{longtable}{llrrrrr}
\toprule
Estimator & Estimand & Bias & Coverage & FRT Size & Mean SE & SD \\ 
\midrule
PlugIn & calendar & 0.01 & 0.93 & 0.07 & 0.26 & 0.28 \\ 
PlugIn & cohort & 0.00 & 0.92 & 0.06 & 0.26 & 0.28 \\ 
PlugIn & ES0 & 0.00 & 0.96 & 0.04 & 0.32 & 0.31 \\ 
PlugIn & simple & 0.00 & 0.93 & 0.05 & 0.24 & 0.25 \\ 
CS & calendar & 0.01 & 0.95 & 0.06 & 0.51 & 0.52 \\ 
CS & cohort & 0.02 & 0.95 & 0.04 & 0.47 & 0.46 \\ 
CS/dCDH & ES0 & 0.00 & 0.96 & 0.04 & 0.44 & 0.43 \\ 
CS & simple & 0.02 & 0.96 & 0.04 & 0.47 & 0.46 \\ 
SA & calendar & 0.00 & 0.91 & 0.04 & 1.44 & 1.50 \\ 
SA & cohort & 0.01 & 0.90 & 0.05 & 1.51 & 1.58 \\ 
SA & ES0 & 0.00 & 0.96 & 0.04 & 0.91 & 0.94 \\ 
SA & simple & 0.02 & 0.90 & 0.05 & 1.64 & 1.72 \\ 
 \bottomrule
\end{longtable}

    \caption{Results for Simulations Calibrated to \citet{wood_reanalysis_2020}}
    \label{tbl: simulation results - main spec}
    \floatfoot{Note: This table shows results for the plug-in efficient and CS and SA estimators in simulations calibrated to \citet{wood_reanalysis_2020}. The estimands considered are the calendar-, cohort-, and simple-weighted average treatment effects, as well as the instantaneous event-study effect (ES0). The CS estimator for ES0 corresponds with the estimator in \citet{de_chaisemartin_two-way_2020}. Coverage refers to the fraction of the time a nominal 95\% confidence interval includes the true parameter, and FRT size refers to the null rejection rate of a Fisher Randomization Test. Mean SE refers to the average estimated standard error, and SD refers to the actual standard deviation of the estimator. The bias, Mean SE, and SD are all multiplied by 100 for ease of readability.}
\end{table} \setcounter{table}{\thetable -1}

\begin{table}[!htb]
    \centering
    \captionsetup[table]{labelformat=empty,skip=1pt}
\begin{longtable}{lrr}
\toprule
 & \multicolumn{2}{c}{Ratio of SD to Plug-In} \\ 
 \cmidrule(lr){2-3}
Estimand & CS & SA \\ 
\midrule
calendar & $1.84$ & $5.31$ \\ 
cohort & $1.67$ & $5.72$ \\ 
ES0 & $1.39$ & $3.02$ \\ 
simple & $1.85$ & $6.86$ \\ 
 \bottomrule
\end{longtable}

    \caption{Comparison of Standard Deviations -- \citet{callaway_difference--differences_2020} and \citet{sun_estimating_2020} versus Plug-in Efficient Estimator}
    \label{tbl: comparison of SDs - main spec}
    \floatfoot{Note: This table shows the ratio of the standard deviation of the CS and SA estimators relative to the plug-in efficient estimator, based on the simulation results in Table \ref{tbl: simulation results - main spec}.}
\end{table}

Table \ref{tbl: comparison of SDs - main spec} shows that there are large efficiency gains from using the plug-in efficient estimator relative to the CS or SA estimators. The table compares the standard deviation of the plug-in efficient estimator to that of the CS and SA estimators. Remarkably, using the plug-in efficient estimator reduces the standard deviation relative to the CS estimator by a factor between 1.39 and 1.85, depending on the estimand. Since standard errors are proportional to the square root of the sample size for a fixed estimator, a reduction in standard errors by a factor of 1.85 roughly corresponds with an increase in sample size by a factor of 3.4. The gains of using the plug-in efficient estimator relative to the SA estimator are even larger, with reductions in the standard deviation by a factor of three or more. The reason for this is that the SA estimator uses only the last-treated units (rather than not-yet-treated units) as a comparison, but in our setting less than 1\% of units are treated in the final period, leading to an efficiency loss.  

\paragraph{Alternative choices of $\hat{X}$.} In Appendix \ref{appendix section: additional sims}, we present results where $\hat{X}$ is set to be a vector containing all possible comparisons of cohorts in periods prior to treatment. The dimension of this $\hat{X}$ is large relative to $N$ using monthly data, and in line with the discussion in Remark \ref{rem: choice of xhat - theory}, we find that the estimator has large bias and undercoverage owing to an over-fitting problem. When collapsing the data to the yearly level, the dimension of the $\hat{X}$ is more moderate, and the estimator is approximately unbiased and has good coverage, in line with the heuristic from \citet{lei_regression_2020} that the dimension of $\hat{X}$ should be small relative to $N^{-\frac{1}{2}}$. In our Monte Carlo simulation, however, the augmented $\hat{X}$ offers very minor precision gains relative to the $\hat{X}$ based on the \citet{callaway_difference--differences_2020} estimator used in our baseline specification.\footnote{\label{fn: xhat with lags}We also experimented with a 10-dimensional vector $\hat{X}$ that included our baseline scalar choice of $\hat{X}$ as the first element, as well as analogs to the baseline choice lagged by $1,...,9$ periods, with very similar results to the baseline specification.} We thus focus on the latter choice of $\hat{X}$ in our application below.

\paragraph{Other Extensions.} Appendix \ref{appendix section: additional sims} contains several extensions to the baseline simulation specification, such as incorporating heterogeneous effects, annualizing the monthly data, and considering the other two outcomes in our application. As in the baseline specification, the plug-in efficient estimator has good coverage and offers efficiency gains relative to the other methods in nearly all specifications.

\section{Application to Procedural Justice Training\label{sec: wood application}}
\subsection{Background}
Reducing police misconduct and use of force is an important policy objective. \citet{wood_procedural_2020} studied the Chicago Police Department's staggered rollout of a procedural justice training program, which taught police officers strategies for emphasizing respect, neutrality, and transparency in the exercise of authority. Officers were randomly assigned a date for training. \citet{wood_procedural_2020} found large and statistically significant impacts of the program on complaints and sustained complaints against police officers and on officer use of force. However, our re-analysis in \citet{wood_reanalysis_2020} highlighted a statistical error in the original analysis of \citet{wood_procedural_2020}, which failed to normalize for the fact that groups of officers trained in different months were of varying sizes. In \citet{wood_reanalysis_2020}, we re-analyzed the data using the procedure proposed by \citet{callaway_difference--differences_2020} to correct for the error. The re-analysis found no significant effect on complaints or sustained complaints, and borderline significant effects on use of force, although the confidence intervals for all three outcomes included both near-zero and meaningfully large effects. \citet{owens_can_2018} studied a small pilot study of a procedural justice training program in Seattle, with point estimates suggesting reductions in complaints but imprecisely estimated.

\subsection{Data}
We use the same data as in the re-analysis in \citet{wood_reanalysis_2020}, which extends the data used in the original analysis of \citet{wood_procedural_2020} through December 2016. As in \citet{wood_reanalysis_2020}, we restrict attention to the balanced panel of officers who remained in the police force throughout the study period. We further drop officers in the initial pilot program and who are in special units, as these officers were trained in large batches and did not follow the random assignment protocol (see the supplementary material to \citet{wood_procedural_2020}). This leaves us a final sample of 5537 officers.\footnote{In the earlier working paper version of this paper, \citet{roth_efficient_2021}, we included officers in the pilot program and special units, with qualitatively similar results. However, one can formally reject the null hypothesis of random assignment when including these officers (see Table \ref{tbl:covariate balance}).} The data contain three outcome measures (complaints, sustained complaints, and use of force) at a monthly level for 72 months (6 years), with the first cohort trained in month 17 and the final cohort trained in the last month of the sample.

\subsection{Estimation}

We apply our proposed plug-in efficient estimator to estimate the effects of the procedural justice training program on the three outcomes of interest. As in our Monte Carlo study, we use the scalar $\hat{X}$ such that $\beta=0$ is the \citet[][]{callaway_difference--differences_2020} estimator ($\tauhat^{CS2}$). We estimate the simple, cohort, and calendar-weighted average effects described in Section \ref{subsec: target parameter} and used in our Monte Carlo study. We also estimate the event-study effects for the first 24 months after treatment, which includes the instantaneous event-study effect studied in our Monte Carlo as a special case (for event-time 0). For comparison, we also estimate the \citet{callaway_difference--differences_2020} estimator as in \citet{wood_reanalysis_2020}.\footnote{The CS estimates are not identical to those in \citet{wood_reanalysis_2020} for two reasons (although are qualitatively similar). The first is that we exclude officers in the pilot program and special units. Second, for direct comparability, we calculate design-based standard errors for the CS estimator using the analog to $\sigmahat_{**}$, and thus the reported SEs differ slightly from the sampling-based SEs reported in \citet{wood_reanalysis_2020}.}

\subsection{Results}

\paragraph{Baseline results.} Figure \ref{fig: wood-et-al application summary comparison} shows the results of our analysis for the three aggregate summary parameters. Table \ref{tbl: wood-et-al application percentages table} compares the magnitudes of these estimates and their 95\% confidence intervals (CIs) to the mean of the outcome in the 12 months before the pilot program began. It also reports $p$-values from the FRT.

\begin{figure}[!h]
    \centering
    \includegraphics[width=0.9\linewidth]{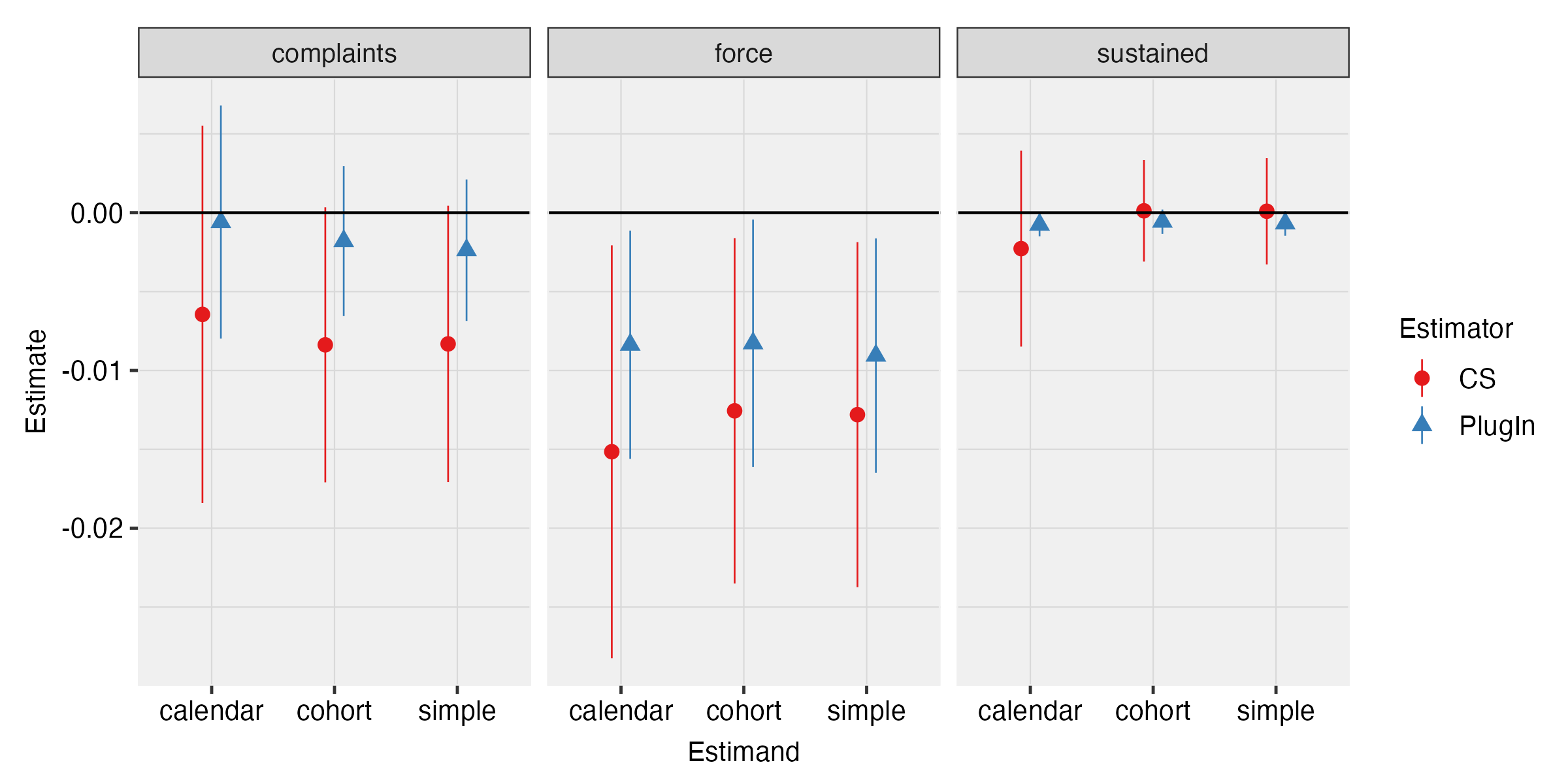}
    \caption{Effect of Procedural Justice Training Using the Plug-In Efficient and \citet{callaway_difference--differences_2020} Estimators} \label{fig: wood-et-al application summary comparison}
    \floatfoot{Note: this figure shows point estimates and 95\% CIs for the effects of procedural justice training on complaints, force, and sustained complaints using the CS and plug-in efficient estimators. Results are shown for the calendar-, cohort-, and simple-weighted averages.}
\end{figure}

\begin{table}[!hbt]
    \centering
    \includegraphics[width = 0.95\linewidth]{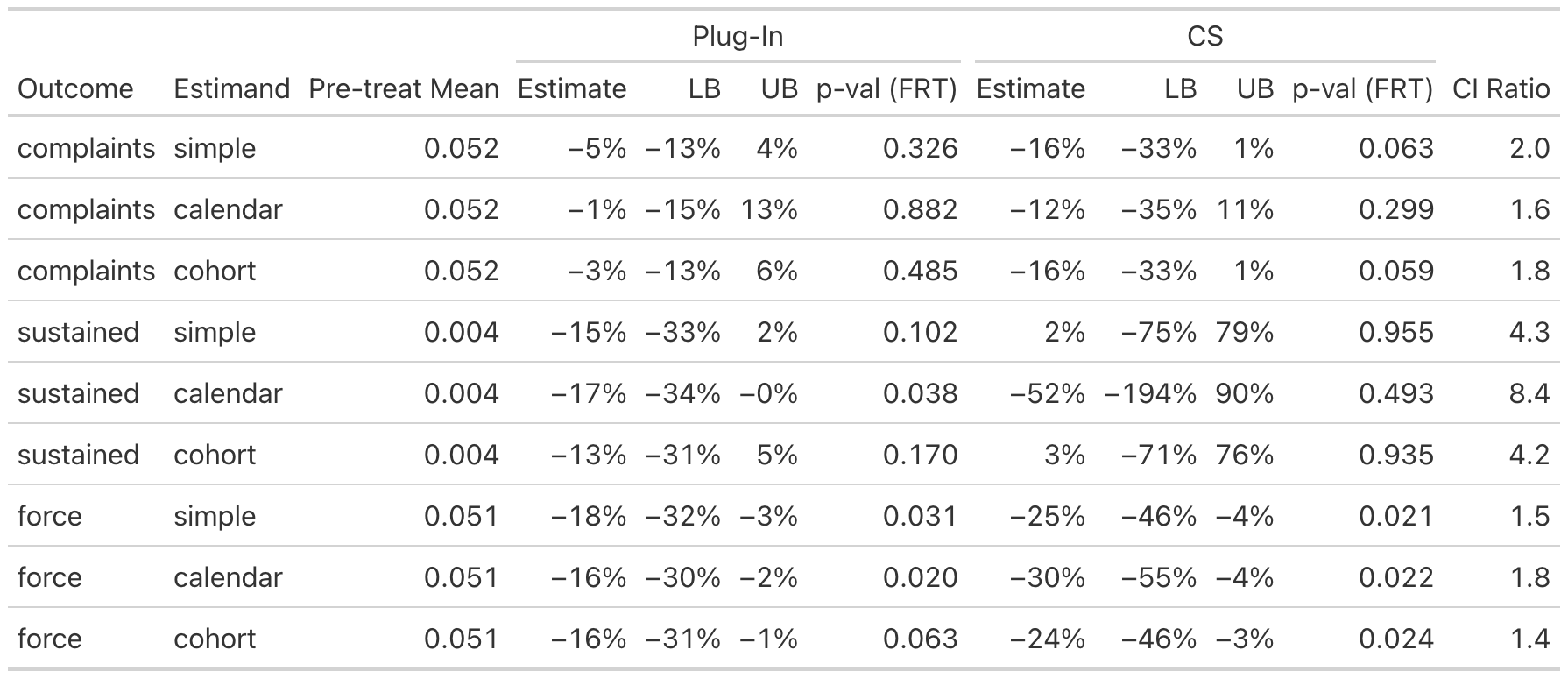}
    \caption{Estimates and 95\% CIs as a Percentage of Pre-treatment Means}\label{tbl: wood-et-al application percentages table}
    \floatfoot{Note: This table shows the pre-treatment means for the three outcomes. It also displays the estimates and 95\% CIs in Figure \ref{fig: wood-et-al application summary comparison} as percentages of these means, as well as the $p$-value from a Fisher Randomization Test (FRT). The final column shows the ratio of the length of the CI for the CS estimator relative to that for the plug-in efficient estimator. All FRT $p$-values are based on 5,000 permutations.}
\end{table} 

For all outcomes, the CIs for the plug-in efficient estimator overlap with those of the \citet[~CS]{callaway_difference--differences_2020} estimator but are substantially narrower. Indeed, the final column of Table \ref{tbl: wood-et-al application percentages table} shows that the standard errors (or equivalently, the length of the CIs) range from 1.4 to 8.4 times smaller depending on the specification. As in \citet{wood_reanalysis_2020}, we find no significant impact on complaints using any of the aggregations. Our bounds on the magnitude of the treatment effect are substantially tighter than before, however. For instance, using the simple aggregation we can now rule out reductions in complaints of more than 13\%, compared with a bound of 33\% using the CS estimator, and our standard errors are roughly twice as small as when using CS. For use of force, the point estimates from the efficient estimator are somewhat smaller (in magnitude) than using CS, but suggest a reduction in force of around 16 to 18 percent of the pre-treatment mean. However, the upper bounds of the confidence intervals are close to zero; $p$-values using the FRT are between 0.02 and 0.06. Thus, although precision is substantially higher than when using the CS estimator, the CIs for force still include effects from near-zero up to about 30\% of the pre-treatment mean. For sustained complaints, all of the point estimates are near zero and the CIs are substantially narrower than when using the CS estimator, although the plug-in efficient estimate using the calendar aggregation is marginally significant (FRT $p$-value $= 0.04$). In Appendix Figures \ref{fig: wood-et-al application event-study efficient}-\ref{fig: wood-et-al application event-study CS}, we show event-study plots using the plug-in efficient and CS estimates. The figures do not show a clear significant effect for any of the outcomes, nor do they show significant placebo pre-treatment effects.

\paragraph{Balance and robustness checks.} Although treatment timing was explicitly randomized in our application, as discussed in the supplement to \citet{wood_procedural_2020}, there are some concerns about non-compliance wherein officers could volunteer to receive the training before their randomly assigned date, particularly towards the end of the training period. (The observed treatment variable in the data is the \textit{actual} training date, and whether an officer volunteered is not recorded.) We therefore conduct a series of robustness and balance checks to evaluate the extent to which non-compliance may have violated the assumption of random treatment timing. We first test for balance in pre-treatment outcomes by testing the null that $\expe{\hat{X}} =0$, as described in Section \ref{subsec: extensions}. In particular, we use the scalar $\hat{X}$ used by the CS estimator for each of our summary parameters and outcomes, with results shown in Table \ref{tbl:covariate balance}. Reassuringly, we do not find any (individual or jointly) significant imbalances in $\hat{X}$ using our main analysis sample. Interestingly, we do find a significant imbalance for use of force if we include officers in the pilot program and special units, who are known not to have followed the randomization protocol, which suggests that these tests may be powered to detect some relevant violations of the randomization assumption. Second, we construct an ``event-study plot'' that tests for placebo pre-treatment effects prior to the date of training, and generally do not find any concerning pre-treatment placebo effects. These results, and those for our subsequent balance checks, are shown in Appendix \ref{appendix: additional application results}. Third, we test for covariate balance on year of birth, one of the few pre-treatment demographic variables in the data. We find that average year of birth is similar across training dates, although in one of our two specifications we statistically reject the null of exact equality at the 10\% level (sup-t $p$-value $=0.08$), possibly suggesting some slight imbalance on age. Finally, as a robustness check we re-do our main analysis excluding units who were trained in the final year of the training program, when noncompliance was suspected to be more severe. The qualitative patterns are similar, although the estimates for use of force are no longer statistically significant in some specifications.

\paragraph{Implications.} Our analysis provides the most precise estimates to date on the effectiveness of procedural justice training for police officers. Our estimates for the effects of the program on complaints against officers are close to zero, with much tighter upper bounds on the effectiveness at reducing complaints than in previous work. The results for force are more mixed, with point estimates suggesting reductions of 16-18\%, but confidence intervals that include near-zero or zero effects in all specifications. Thus, more research is needed to determine whether procedural justice training can be a useful tool in meaningfully reducing officer use of force. We encourage police departments planning to implement such trainings in the future to consider a randomized staggered rollout, which is a potentially low-cost way to learn more about the effectiveness of the program. 

\begin{table}[!ht]
    \centering
    \caption{Tests of balance on pre-treatment outcomes}
    \includegraphics[width = 0.9\textwidth]{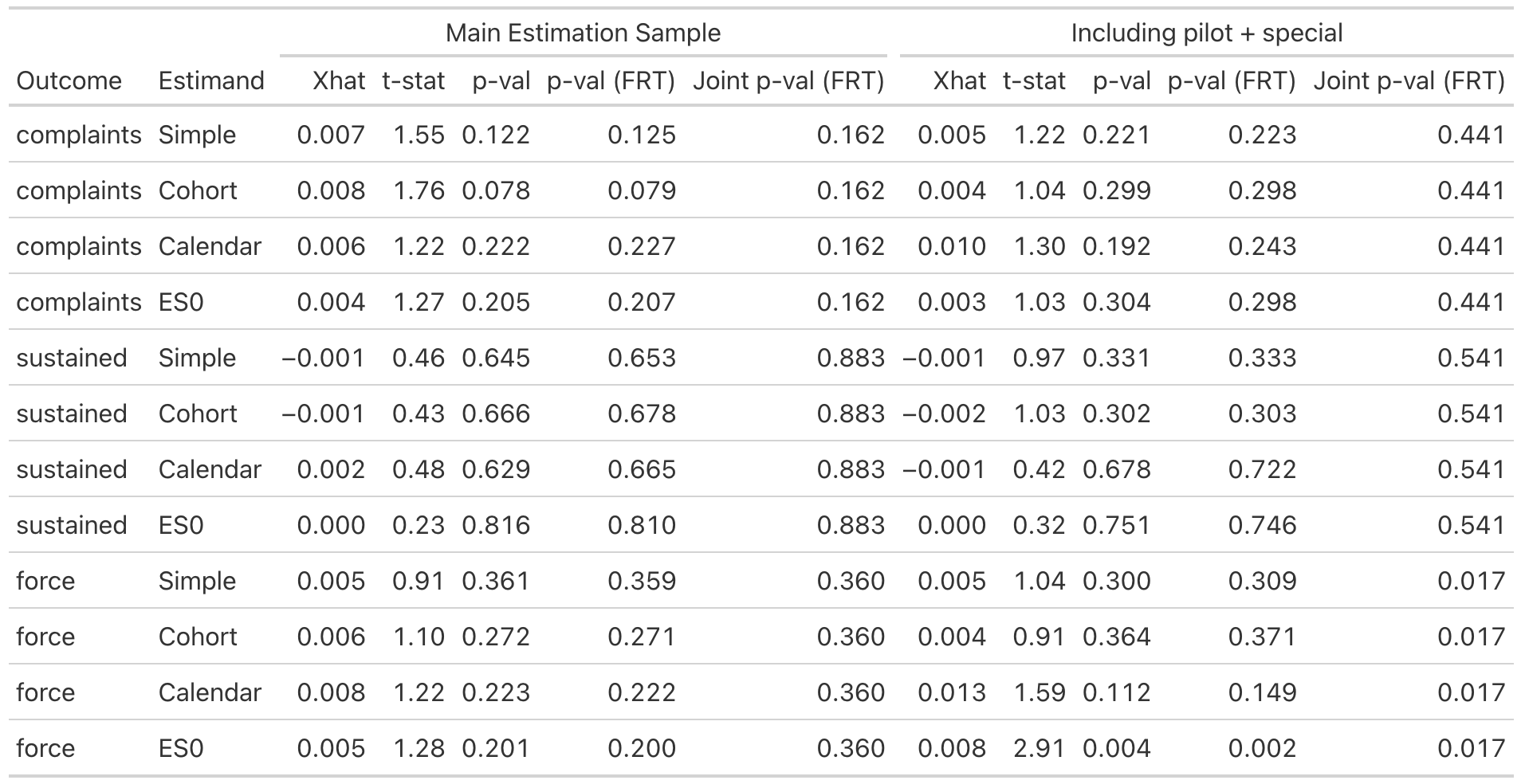}
    \label{tbl:covariate balance}
    \floatfoot{Note: this table shows balance on pre-treatment outcomes by testing the null hypothesis that $\expe{\hat{X}} =0$. The columns report the value of $\hat{X}$, its $t$-statistic, $p$-value based on the $t$-statistic, $p$-value using an FRT, and a $p$-value for the joint test that $\expe{\hat{X}}=0$ for all estimands using the same outcome (computed using an FRT with the $\max |t|$ statistic). The columns labeled Main Estimation Sample use the main data for our analysis, whereas those labeled ``Including pilot + special'' include officers in the pilot program and special units, who did not follow the randomization protocol. All FRT $p$-values are based on 5,000 permutations.}
\end{table}

\section{Conclusion}

This paper considers efficient estimation in settings with staggered adoption and (quasi-)~random treatment timing. The assumption of (quasi-)random treatment timing is technically stronger than parallel trends, but is often the justification given for the parallel trends assumption in practice, and it can be ensured by design in experimental contexts where the researcher controls the timing of treatment. We derive the most efficient estimator in a large class of estimators that nests many existing approaches. The ``oracle'' efficient estimator is not known in practice, but we show that a plug-in sample analog has similar properties in large populations, and we derive both $t$-based and permutation-based approaches to inference. We find in simulations that the proposed plug-in efficient estimator is approximately unbiased, yields reliable inference, and substantially increases precision relative to existing methods. We apply our proposed methodology to obtain the most precise estimates to date of the causal effects of procedural justice training programs for police officers.

\clearpage
\setlength{\baselineskip}{7.1mm}

\small{
\setlength{\bibsep}{1pt plus 0.3ex}
\putbib
}
\end{bibunit}
\begin{bibunit}

\appendix 
\clearpage


\paragraph{\Large{Supplement to ``Efficient Estimation for Staggered Rollout Designs''}}

\renewcommand{\tablename}{Appendix Table}
\renewcommand{\figurename}{Appendix Figure}
\setcounter{figure}{0}
\setcounter{table}{-1}

\section{Proofs}

\paragraph{Proof of Lemma \ref{lem: thetahat beta unbiased}}
\begin{proof}
By Assumption \ref{asm: random treatment}, $\expe{D_{ig}} = ({N_g}/{N})$. Hence,  
\begin{align*}
\expe{\thetahat_0} = \expe{\sum_g A_{\theta,g} \frac{1}{N_g} \sum_i  D_{ig} Y_{i} } = \sum_g A_{\theta,g} \frac{1}{N_g} \sum_i  \expe{D_{ig}} Y_i(g) 
= \sum_g A_{\theta,g} \frac{1}{N_g} \sum_i  \frac{N_g}{N} Y_i(g) = \theta.
\end{align*}
Likewise,
\begin{align*}
\expe{\hat{X}} = \expe{\sum_g A_{0,g} \frac{1}{N_g} \sum_i  D_{ig} Y_{i} }  
= \sum_g A_{0,g} \frac{1}{N} \sum_i  Y_i(g) = \frac{1}{N} \sum_i  \sum_g A_{0,g} Y_i(g) = 0,
\end{align*}
\noindent since $\sum_g A_{0,g} Y_i(g) = 0$ by Assumption \ref{asm: no anticipation}. The result follows immediately from the previous two displays.
\end{proof}

\paragraph{Proof of Proposition \ref{prop: efficient betastar and variances}}
\begin{proof}
First, observe that $$\min_\beta \var{\thetahat_\beta} = \min_\beta \var{\thetahat_0 - \hat{X}' \beta} = \min_\beta \expe{ \left( (\thetahat_0 - \theta) - (\hat{X} - \expe{\hat{X}})' \beta) \right)^2 }.$$ From the usual least-squares formula, the unique solution is $$ \underbrace{\expe{(\hat{X} - \expe{\hat{X}}) (\hat{X} - \expe{\hat{X}})'}^{-1}}_{\var{\hat{X}}^{-1}} \underbrace{\expe{(\hat{X} - \expe{\hat{X}}) (\thetahat_0 - \theta)}}_{\cov{\hat{X}, \thetahat_0}},$$ which gives the first result. 

To derive the form of the variance, let $A_{\tau,g} = \twovec{A_{\theta,g}}{A_{0,g}}$. Define $$\tauhat := \sum_g A_{\tau,g} \bar{Y}_g = \twovec{\thetahat_0}{\hat{X}}.$$ Since Assumption \ref{asm: random treatment} holds, we can appeal to Theorem 3 in \citet{li_general_2017}, which implies that $\var{\tauhat} = \sum_g {N_g}^{-1} A_{\tau,g} S_g A_{\tau,g}' - N^{-1} S_\tau$, where $S_\tau = \varfin{\sum_{g} A_{\tau,g} Y_i(g)}$. The result then follows immediately from expanding this variance, as well as the observation that $S_\tau = \twobytwomat{S_\theta}{0}{0}{0}$, where the 0 blocks are obtained by noting that $\sum_g A_{0,g} Y_i(g) =0$ for all $i$ by Assumption \ref{asm: no anticipation}.
\end{proof}

\paragraph{Proof of Proposition \ref{prop: asymptotic dist of thetahat betahat}}

To establish the proof, we first provide two lemmas that characterize the asymptotic joint distribution of $(\thetahat_0, \hat{X}')'$, and show that $\hat{S}_g$ is consistent for $S_g^*$ under Assumption \ref{asm: regularity conditions for clt}. Both results are direct consequences of the general asymptotic results in \citet{li_general_2017} for multi-valued treatments in randomized experiments.

\begin{lem}\label{lem: asymptotic distribution of thetahat xhat}
Under Assumptions \ref{asm: random treatment}, \ref{asm: no anticipation}, and \ref{asm: regularity conditions for clt},
$$\sqrt{N} \twovec{ \thetahat_0 - \theta}{ \hat{X} } \rightarrow_d \normnot{0}{V^*},$$ where $$V^* = \twobytwomat{ \sum_g {p_g}^{-1} \, A_{\theta,g} \, S_g^* \, A_{\theta,g}' - S_\theta^*  }{ \sum_g {p_g}^{-1} \, A_{\theta,g} \, S_g^* \, A_{0,g}'  }{ \sum_g {p_g}^{-1} \, A_{0,g} \, S_g^* \, A_{\theta,g}'}{ \sum_g {p_g}^{-1} \, A_{0,g} \, S_g^* \, A_{0,g}' } =: \twobytwomat{V^*_{\thetahat_0}}{ V^*_{\thetahat_0,\hat{X}} }{ V^*_{\hat{X},\thetahat_0} }{ V^*_{\hat{X}}},$$ and $S^*_\theta = \lim_{N\rightarrow\infty} S_\theta$ (where $S_\theta$ is defined in Proposition \ref{prop: efficient betastar and variances}).
\end{lem}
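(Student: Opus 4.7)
The plan is to apply the finite-population central limit theorem for multi-valued treatments in completely randomized experiments due to \citet{li_general_2017}, which is the natural tool since Assumption \ref{asm: random treatment} makes the treatment-timing assignment exactly a complete randomization over $|\mathcal{G}|$ arms with fixed arm sizes $N_g$.

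First I would stack the relevant quantities. Define $A_{\tau,g} = \twovec{A_{\theta,g}}{A_{0,g}}$ (viewed as a $(1+M)\times T$ matrix) so that $\tauhat := \twovec{\thetahat_0}{\hat X} = \sum_g A_{\tau,g}\bar Y_g$, mirroring the step already used in the proof of Proposition \ref{prop: efficient betastar and variances}. From Lemma \ref{lem: thetahat beta unbiased} we have $\expe{\tauhat} = (\theta,0)'$, so $\sqrt{N}(\tauhat - (\theta,0)') = \sqrt{N}\sum_g A_{\tau,g}(\bar Y_g - \expefin{Y_i(g)})$ is, after conditioning on the potential outcomes, a linear transformation of the vector of stratum sample means in a completely randomized experiment.

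Second, I would verify the hypotheses of the Li--Ding CLT for this stacked object. Assumption \ref{asm: regularity conditions for clt}(i) gives $N_g/N\to p_g\in(0,1)$; Assumption \ref{asm: regularity conditions for clt}(ii) gives the limits $S_g\to S_g^*$ and $S_{gg'}\to S_{gg'}^*$; and Assumption \ref{asm: regularity conditions for clt}(iii) is exactly the finite-population Lindeberg-type condition required for that CLT. These assumptions transfer directly to the transformed quantities $A_{\tau,g}Y_i(g)$, since the $A_{\tau,g}$ are fixed matrices (bounded in norm uniformly as $N\to\infty$ because their entries are determined by the estimand weights and the fixed choice of $\hat X$).

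Third, I would compute the limiting variance. From Proposition \ref{prop: efficient betastar and variances},
\begin{equation*}
N\var{\tauhat} = \sum_g \frac{N}{N_g}\,A_{\tau,g}\,S_g\,A_{\tau,g}' - S_\tau,
\end{equation*}
where $S_\tau = \varfin{\sum_g A_{\tau,g} Y_i(g)}$. Passing to the limit using (i) and (ii) gives $\sum_g p_g^{-1} A_{\tau,g}S_g^* A_{\tau,g}' - S_\tau^*$. The key simplification is that the no-anticipation assumption (Assumption \ref{asm: no anticipation}) forces $\sum_g A_{0,g}Y_i(g) = 0$ for every $i$, so the $M$ lower rows of $\sum_g A_{\tau,g}Y_i(g)$ are identically zero. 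Consequently $S_\tau^* = \twobytwomat{S_\theta^*}{0}{0}{0}$, which produces exactly the block form of $V^*$ in the statement.

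The only step that requires mild care is checking that the Li--Ding CLT applies to this linear combination of stratum means rather than just to a single stratum mean; this is standard (one can either apply the multivariate version stratum-by-stratum and then multiply by the fixed matrix $[A_{\tau,1},\ldots,A_{\tau,|\mathcal{G}|}]$, or equivalently apply the Cram\'er--Wold device and check the univariate Lindeberg condition projection-by-projection using Assumption \ref{asm: regularity conditions for clt}(iii)). There is no genuine obstacle beyond this bookkeeping: everything follows by recognizing the problem as a stratified two-sample--style comparison to which the cited CLT directly applies.
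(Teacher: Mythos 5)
Your proposal is correct and follows essentially the same route as the paper's proof: stack $\thetahat_0$ and $\hat X$ as $\tauhat = \sum_g A_{\tau,g}\bar Y_g$, invoke the finite-population CLT of \citet{li_general_2017} under Assumption \ref{asm: regularity conditions for clt}, and use Assumption \ref{asm: no anticipation} to conclude $S_\tau \rightarrow \twobytwomat{S^*_\theta}{0}{0}{0}$, yielding the stated block form of $V^*$. The extra bookkeeping you flag (Cram\'er--Wold / carrying the Lindeberg condition through the fixed matrices $A_{\tau,g}$) is exactly what the paper absorbs by citing Theorem 5 of \citet{li_general_2017} directly.
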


\begin{proof}
As in the proof to Proposition \ref{prop: efficient betastar and variances}, we can write $$\tauhat = \sum_g A_{\tau,g} \bar{Y}_g = \twovec{\thetahat_0}{\hat{X}}.$$ The result then follows from Theorem 5 in \citet{li_general_2017}, combined with the observation noted in the proof to Proposition \ref{prop: efficient betastar and variances} that $S_\tau = \twobytwomat{S_\theta}{0}{0}{0}$ and hence $S_\tau \rightarrow \twobytwomat{S^*_\theta}{0}{0}{0}$.
\end{proof}

\begin{lem} \label{lem: variance consistency}
Under Assumptions \ref{asm: random treatment}, \ref{asm: no anticipation}, and \ref{asm: regularity conditions for clt}, $\hat{S}_g \rightarrow_p S_g^*$ for all $g$.
\end{lem}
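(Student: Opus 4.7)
The plan is to view the units with $G_i = g$ as a simple random sample of size $N_g$ drawn without replacement from the finite population $\{Y_i(g)\}_{i=1}^N$, which is a direct consequence of Assumption \ref{asm: random treatment}. Once this reformulation is in place, the claim reduces to the well-known fact that the sample variance from a simple random sample (without replacement) is consistent for the finite-population variance, under suitable regularity conditions. Since Assumption \ref{asm: regularity conditions for clt}(i) gives $N_g / N \to p_g \in (0,1)$ (so $N_g \to \infty$ and the sampling fraction is bounded away from $0$ and $1$), Assumption \ref{asm: regularity conditions for clt}(ii) gives $S_g \to S_g^*$, and Assumption \ref{asm: regularity conditions for clt}(iii) gives the Lindeberg-type bound $\max_i \|Y_i(g) - \expefin{Y_i(g)}\|^2 / N \to 0$, all of the ingredients needed for such a consistency result are available.

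The natural decomposition to work with is
\[
\hat{S}_g = \frac{1}{N_g - 1}\sum_i D_{ig} \bigl(Y_i(g) - \expefin{Y_i(g)}\bigr)\bigl(Y_i(g) - \expefin{Y_i(g)}\bigr)' - \frac{N_g}{N_g - 1}\bigl(\bar{Y}_g - \expefin{Y_i(g)}\bigr)\bigl(\bar{Y}_g - \expefin{Y_i(g)}\bigr)'.
\]
I would handle the two terms separately. For the first term, apply a finite-population law of large numbers for sampling without replacement (which is embedded in the general framework of \citet{li_general_2017}; in particular this follows from computing the Chebyshev bound for the Horvitz--Thompson-type average of $(Y_i(g)-\expefin{Y_i(g)})(Y_i(g)-\expefin{Y_i(g)})'$ and invoking Assumption \ref{asm: regularity conditions for clt}(iii) to control the required moments) to obtain convergence in probability to $\expefin{(Y_i(g)-\expefin{Y_i(g)})(Y_i(g)-\expefin{Y_i(g)})'} = \tfrac{N-1}{N} S_g \to S_g^*$. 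For the second term, the same finite-population LLN applied to $Y_i(g)$ itself gives $\bar{Y}_g - \expefin{Y_i(g)} \to_p 0$, so that the entire term is $o_p(1)$.

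Combining the two pieces with $N_g/(N_g-1) \to 1$ then yields $\hat{S}_g \to_p S_g^*$. In fact, because the lemma is explicitly flagged as a ``direct consequence of the general asymptotic results in \citet{li_general_2017},'' the cleanest presentation is probably to quote the relevant finite-population LLN (e.g., Theorem~3 or the variance-consistency statement) from that paper with the vectors $Y_i(g)$ and the outer products $(Y_i(g)-\expefin{Y_i(g)})(Y_i(g)-\expefin{Y_i(g)})'$ playing the role of the sampled quantities, and simply verify its hypotheses from Assumption \ref{asm: regularity conditions for clt}.

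The main (small) obstacle is ensuring that the sample mean of the outer products $Y_i(g) Y_i(g)'$ is well-behaved: this requires a fourth-moment-like control on $\|Y_i(g)\|$, which is not assumed outright but follows from Assumption \ref{asm: regularity conditions for clt}(iii) once one notes that under a finite-population LLN only a Chebyshev-style second-moment bound on the summand $(Y_i(g)-\expefin{Y_i(g)})(Y_i(g)-\expefin{Y_i(g)})'$ is needed, and that the maximum-entry condition suffices to bound the variance of this average by a quantity that vanishes. Everything else is bookkeeping.
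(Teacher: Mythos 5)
Your proposal is correct. The paper's own proof is a one-line appeal to Proposition 3 of \citet{li_general_2017}, which is exactly the ``cleanest presentation'' you identify at the end; so in substance you take the same route, but you additionally reconstruct the argument behind that cited result. Your reconstruction is sound: the decomposition of $\hat{S}_g$ into the sampled average of $(Y_i(g)-\expefin{Y_i(g)})(Y_i(g)-\expefin{Y_i(g)})'$ minus $\frac{N_g}{N_g-1}(\bar{Y}_g-\expefin{Y_i(g)})(\bar{Y}_g-\expefin{Y_i(g)})'$ is the standard one, and the Chebyshev bound does close the apparent fourth-moment gap, since the finite-population variance of each entry of the outer product is at most $\max_i \|Y_i(g)-\expefin{Y_i(g)}\|^2$ times $\operatorname{tr}(S_g)$ (the latter bounded by Assumption \ref{asm: regularity conditions for clt}(ii)), so the variance of the sampled average is $O\!\left(\frac{N}{N_g}\cdot\frac{\max_i\|Y_i(g)-\expefin{Y_i(g)}\|^2}{N}\right)\to 0$ by parts (i) and (iii); the second term vanishes by the same law of large numbers applied to $Y_i(g)$ itself. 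What your elaboration buys is a self-contained verification that Assumption \ref{asm: regularity conditions for clt} really does supply the hypotheses of the finite-population consistency result, which the paper leaves implicit in the citation; what the paper's version buys is brevity and no duplicated bookkeeping.
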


\begin{proof}
Follows immediately from Proposition 3 in \citet{li_general_2017}.
\end{proof}

To complete the proof of Proposition \ref{prop: efficient betastar and variances}, recall that $\betahat^* = \hat{V}_{\hat{X}}^{-1} \hat{V}_{\hat{X},\thetahat_0}$. It is clear that $\betahat^*$ is a continuous function of $\hat{V}_{\hat{X}}$ and $\hat{V}_{\hat{X},\thetahat_0}$, and that $\hat{V}_{\hat{X}}$ and $\hat{V}_{\hat{X},\thetahat_0}$ are continuous functions of $\hat{S}_g$. From Lemma \ref{lem: variance consistency} along with the continuous mapping theorem, we obtain that $\betahat^* \rightarrow_p (V^{*}_X)^{-1} V^*_{\hat{X},\thetahat_0}$. Lemma \ref{lem: asymptotic distribution of thetahat xhat} together with Slutsky's lemma then give that $\sqrt{N}(\thetahat_{\betahat^*} - \theta) \rightarrow_d \normnot{0}{ V^*_{\thetahat_0} - V^{*\prime}_{\hat{X},\thetahat_0} (V_{\hat{X}}^*)^{-1} V^*_{\hat{X},\thetahat_0}}$. From Proposition \ref{prop: efficient betastar and variances}, it is apparent that the asymptotic variance of $\thetahat_{\betahat^*}$ is equal to the limit of $N \var{\thetahat_{\beta^*}}$, which completes the proof.

\paragraph{Proof of Lemma \ref{lem: consistency of sigmastarhat}}
\begin{proof}
Immediate from the fact that $\hat{S}_g \rightarrow_p S^*_g$ (see Lemma \ref{lem: variance consistency}) combined with the continuous mapping theorem.
\end{proof}

\paragraph{Proof of Proposition \ref{prop: asymptotic properties of frt}}
\begin{proof}
Note that, conditional on $G$, the distribution of $t_\pi$ corresponds with the distribution of $\sqrt{N}\thetahat^* / \hat\sigma_{**}$ in a population with potential outcomes $Y^*(\cdot)$, where $Y_i^*(g) = Y_i(G_i)$ for all $i,g$. To prove the first assertion, it thus suffices to show that the populations defined by $Y^*(\cdot)$ satisfy Assumption \ref{asm: regularity conditions for clt}, $P_G$-almost surely, in which case the result follows from Proposition \ref{prop: asymptotic dist of thetahat betahat} and Lemma \ref{lem: consistency of sigmahat**} applied to the population with potential outcomes $Y^*(\cdot)$. 

Since the set of observations with $G_i =g$ is a simple random sample from a finite population, Lemma A5 in \citet{wu_randomization_2020} implies that 
$$ \bar{Y}_g = \frac{1}{N_g} \sum_i 1[G_i = g] Y_i(g) \rightarrow_{a.s.} \lim_{N\rightarrow \infty } \expefin{Y_i(g)} =: \mu_g^* $$

$$ \hat{S}_g = \frac{1}{N_g-1} \sum_i 1[G_i = g] (Y_i(g) - \bar{Y}_g)^2  \rightarrow_{a.s.} \lim_{N\rightarrow \infty } \varfin{Y_i(g)} =: S_g^*$$
In a slight abuse of notation, we will denote by $\expefin{Y_i^*(g)}$ the finite-population expectation in the population with potential outcomes $Y_i^*(g)$, where $Y_i^*(g) = Y_i(G_i)$. Now,
\begin{align*}
\expefin{Y_i^*(g)} = \frac{1}{N} \sum_i Y_i = \sum_g \frac{N_g}{N} \frac{1}{N_g}\sum_i 1[G_i =g] Y_i(g) \rightarrow_{a.s.} \sum_g p_g \mu_g^*     
\end{align*}
Similarly,
\begin{align*}
\varfin{Y_i^*(g)} &= \frac{1}{N-1} \sum_i (Y_i - \bar{Y})^2 \\&= \frac{N}{N-1} \left(  \left( \sum_g \frac{N_g}{N} \left( \frac{1}{N_g}\sum_i 1[G_i=g] Y_i^2 - \bar{Y}_g^2 \right) \right)  + \left( \sum_g \frac{N_g}{N} \bar{Y}_g^2 - \left(\sum_g \frac{N_g}{N} \bar{Y}_g \right)^2 \right) \right) \\
&= \frac{N}{N-1} \left(  \left( \sum_g \frac{N_g}{N} \frac{N_g-1}{N_g} \hat{S}_g \right) + \left( \sum_g \frac{N_g}{N} \bar{Y}_g^2 - \left(\sum_g \frac{N_g}{N} \bar{Y}_g \right)^2 \right) \right) \\
&\rightarrow_{a.s.} \sum_g p_g S_g^* + \left(\sum_g p_g (\mu_g^{*})^2 - \left(\sum_g p_g \mu_g^*\right)^2 \right)
\end{align*}
\noindent where we obtain the convergence from the previous displays and the continuous mapping theorem (and we use the shorthand $Y^2$ for $Y Y'$). The first term in the limit is positive definite, since $S_g^*$ is positive definite for each $g$ by Assumption \ref{asm: regularity conditions for clt}, and the second term is positive semi-definite (it is the variance of the discrete distribution with probability $p_g$ on $\mu_g$). Hence, Assumption \ref{asm: regularity conditions for clt}(ii) is satisfied for the population with potential outcomes $Y_i^*$ $P_g$-almost surely. Finally, Assumption \ref{asm: regularity conditions for clt}(iii) is satisfied $P_g$-almost surely by Lemma A6 in \citet{wu_randomization_2020}.

The second assertion then follows immediately from the fact that $\sqrt{N}(\thetahat_{\betahat^*} - \theta)/ \sigmahat_{**} \rightarrow_d \normnot{0}{c}$, for $c = \sigma_*^2/( \sigma_*^2 + S_\theta^*) \leq 1$, by Proposition \ref{prop: asymptotic dist of thetahat betahat} and Lemma \ref{lem: consistency of sigmahat**}. 
\end{proof}

\subsection{Derivation of Variance Refinement\label{appendix: derivation of refined variance}}

We now provide a derivation for the refined variance estimator discussed in Lemma \ref{lem: consistency of sigmahat** - main text}, as well as a formal proof of its validity. First, recall that the Neyman-style variance estimator was conservative by $S^*_\theta = \lim_{N\rightarrow\infty} S_\theta$. We first provide a lemma which gives a consistently estimable lower bound on $S_\theta$. Intuitively, this is the component of the treatment effect heterogeneity that is explained by lagged outcomes.

\begin{lem} \label{lem: decomp of s theta when there are unused periods }
Suppose that $A_{\theta,g}=0$ for all $g < g_{min}$. If Assumption \ref{asm: no anticipation} holds, then
\begin{equation}
S_\theta = \varfin{\tilde\theta_i} + \frac{N+1}{N-1} \left( \sum_{g \geq g_{min}} \beta_g \right)' \left( M S_{g_{min}} M' \right)  \left( \sum_{g \geq g_{min}} \beta_g \right), \label{eqn: decomp of s theta when there are unused periods }    
\end{equation}
\noindent where $M$ is the matrix that selects the rows of $Y_i$ corresponding with $t < g_{min}$; $\beta_g = (M S_{g} M')^{-1} M S_g A_{\theta,g}'$ is the coefficient from projecting $A_{\theta,g} Y_i(g)$ on $M Y_i(g)$ (and a constant);  and $\tilde\theta_i = \sum_{g\geq g_{min}} A_{\theta,g} Y_i(g) - \sum_{g \geq g_{min}} (M Y_i(g))' \beta_g$.
\end{lem}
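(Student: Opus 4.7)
The plan is to decompose the summand whose finite-population variance equals $S_\theta$ into two finite-population-orthogonal components, and then read off the claimed identity as a Pythagorean decomposition. Assumption \ref{asm: no anticipation} will enter twice: first to drop the indices $g<g_{min}$, and then to identify the various $MY_i(g)$ blocks with a single common regressor, which is what allows the $g$-specific coefficients $\beta_g$ to aggregate into a single projection coefficient $\gamma := \sum_{g\geq g_{min}}\beta_g$.

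First, since $A_{\theta,g}=0$ for $g<g_{min}$, I would write $S_\theta = \varfin{\theta_i^*}$ for $\theta_i^* := \sum_{g\geq g_{min}} A_{\theta,g}Y_i(g)$. For any $g\geq g_{min}$ and any $t<g_{min}$ we have $t<g$, so Assumption \ref{asm: no anticipation} gives $Y_{it}(g) = Y_{it}(g')$ for all $g,g'\geq g_{min}$. Hence the vector $Z_i := MY_i(g)$ does not actually depend on $g$ in that range, so the sum defining $\tilde\theta_i$ collapses to
\[
\tilde\theta_i \;=\; \theta_i^* - Z_i'\gamma.
\]
In other words, even though each $\beta_g$ is defined through the $g$-specific covariance $MS_gM'$, its sum $\gamma$ plays the role of a single projection coefficient of $\theta_i^*$ onto the common vector $Z_i$.

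Next, I would verify the key finite-population orthogonality, namely that $(N-1)^{-1}\sum_i (Z_i-\bar Z)(\tilde\theta_i-\bar{\tilde\theta})=0$. Using $MY_i(g)=Z_i$ for all $g\geq g_{min}$, the finite-population covariance of $Z_i$ with $A_{\theta,g}Y_i(g)$ equals $MS_gA_{\theta,g}'$, while $\varfin{Z_i} = MS_gM' = MS_{g_{min}}M'$ (again by no anticipation). Plugging in $\beta_g = (MS_gM')^{-1}MS_gA_{\theta,g}'$ yields $\varfin{Z_i}\,\gamma = \sum_{g\geq g_{min}} MS_gA_{\theta,g}'$, which exactly cancels the covariance of $Z_i$ with $\theta_i^*$. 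Given this orthogonality, expanding $\sum_i(\theta_i^* - \bar\theta^*)^2$ termwise using $\theta_i^* = \tilde\theta_i + Z_i'\gamma$ kills the cross-term and leaves $\sum_i(\tilde\theta_i - \bar{\tilde\theta})^2 + \gamma'\!\left[\sum_i(Z_i-\bar Z)(Z_i-\bar Z)'\right]\!\gamma$; dividing by the appropriate finite-population normalizations and reading off the constant in front of $\gamma'MS_{g_{min}}M'\gamma$ yields the lemma.

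The main obstacle I anticipate is the careful accounting in the reindexing step. The matrices $MS_gM'$ are common across $g\geq g_{min}$ (by no anticipation on pre-treatment rows), but $MS_gA_{\theta,g}'$ genuinely depends on $g$, since $A_{\theta,g}Y_i(g)$ sits in treated periods where potential outcomes do differ across $g$. Keeping these two kinds of $g$-dependence straight — so that the single matrix $MS_{g_{min}}M'$ can be pulled out of the sum $\sum_g \varfin{Z_i}^{-1}MS_gA_{\theta,g}'$ to recover $\sum_g MS_gA_{\theta,g}'$ — is the delicate bit. Once this bookkeeping is in hand, the rest is just the finite-population analogue of the Pythagorean identity for OLS.
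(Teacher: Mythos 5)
Your setup and the key orthogonality computation are essentially the paper's argument (project $\theta_i=\sum_{g\geq g_{min}}A_{\theta,g}Y_i(g)$ onto the common pre-$g_{min}$ outcome vector $Z_i=MY_i(g_{min})$, use no anticipation to make $MY_i(g)$ and $MS_gM'$ $g$-invariant, and expand the finite-population variance of the residual). However, the final step — ``reading off the constant'' — is exactly where the proposal breaks down. With the lemma's own definition $\beta_g=(MS_gM')^{-1}MS_gA_{\theta,g}'$, both the covariance $MS_gA_{\theta,g}'$ and the variance $MS_gM'$ carry the same $1/(N-1)$ normalization, so $\gamma=\sum_{g\geq g_{min}}\beta_g$ is the exact finite-population OLS coefficient; your cross-term then cancels exactly and the Pythagorean decomposition you describe delivers
\begin{equation*}
S_\theta \;=\; \varfin{\tilde\theta_i} \;+\; \gamma'\left(MS_{g_{min}}M'\right)\gamma ,
\end{equation*}
i.e.\ with constant $1$, not $\tfrac{N+1}{N-1}$. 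No amount of ``dividing by the appropriate normalizations'' turns an exact orthogonal decomposition into one with a constant strictly greater than one, so your argument as written does not establish the displayed equation.

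The $\tfrac{N+1}{N-1}$ in the paper comes from a different bookkeeping convention: its proof defines the projection coefficient as $\beta_{XZ}=\varfin{X_i}^{-1}\expefin{\dot X_i\dot Z_i}$, mixing a $1/N$-normalized covariance with a $1/(N-1)$-normalized variance, so that $\beta_{XZ}=\tfrac{N-1}{N}\gamma$. Under that convention the cross term contributes $-2\tfrac{N}{N-1}\beta_{XZ}'\varfin{X_i}\beta_{XZ}$ rather than $-2\beta_{XZ}'\varfin{X_i}\beta_{XZ}$ — your claimed exact cancellation fails by the factor $N/(N-1)$ — and collecting terms produces the coefficient $\tfrac{N+1}{N-1}$. (There is admittedly a mismatch between the lemma's stated $\beta_g$ and the proof's $\beta_{XZ}$, and the distinction is asymptotically immaterial since $\tfrac{N+1}{N-1}\rightarrow 1$, which is all that Lemma \ref{lem: consistency of sigmahat**} needs; but a proof of the statement as displayed must either track this normalization mismatch explicitly or flag that with the exact OLS $\beta_g$ the constant is $1$.) To repair your proposal, redo the expansion keeping the $1/N$ versus $1/(N-1)$ factors explicit and show where the extra $\tfrac{N+1}{N-1}$ arises, rather than asserting the constant at the end.
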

\begin{proof}
For any $g$ and functions of the potential outcomes $X_i \in \reals^K$ and $Z_i \in \reals$, let $\dot{X}_i = X_i - \expefin{X_i}$, $\dot{Z}_i = Z_i - \expefin{Z_i}$, and $\beta_{XZ} =  \varfin{X_i}^{-1} \expefin{\dot{X}_i \dot{Z}_i}$. Observe that
\begin{align*}
\varfin{Z_i - \beta_{XZ}' X_i} &= \frac{1}{N-1} \sum_i  \left(\dot{Z}_i - \beta_{XZ}'\dot{X}_i \right)^2 \\
&= \frac{1}{N-1} \sum_i  \dot{Z}_i^2 +  \beta_{XZ}' \left( \frac{1}{N-1} \sum_i  \dot{X}_i \dot{X}_i'\right) \beta_{XZ} -\beta_{XZ}' \frac{2}{N-1} \sum_i  \dot{X_i} \dot{Z}_i  \\
&= \varfin{Z_i} + \beta_{XZ}' \varfin{X_i} \beta_{XZ} - 2 \frac{N}{N-1} \beta_{XZ}' \varfin{X_i} \beta_{XZ} \\
&= \varfin{Z_i} - \frac{N+1}{N-1} \beta_{XZ}' \varfin{X_i} \beta_{XZ}.
\end{align*}
The result then follows from setting $Z_i = \sum_{g\geq g_{min}} A_{\theta,g} Y_i(g) = \theta_i$ and $X_i = M Y_{i}(g_{min})$, and noting that under Assumption \ref{asm: no anticipation}, $MY_{i}(g_{min}) = M Y_i(g)$ for all $g\geq g_{min}$, and hence $\varfin{MY_{i}(g_{min})} = M S_{g_{min}} M' = M S_g M' = \varfin{MY_i(g)}$. 
\end{proof}

We now formally define the refined estimator $\sigmahat_{**}$ and give a more detailed statement of Lemma \ref{lem: consistency of sigmahat** - main text}.

\begin{lem} \label{lem: consistency of sigmahat**}
Suppose that $A_{\theta,g}=0$ for all $g < g_{min}$ and Assumptions \ref{asm: random treatment}-\ref{asm: regularity conditions for clt} hold. Let $M$ be the matrix that selects the rows of $Y_i$ corresponding with periods $t<g_{min}$. Define
\begin{equation}
\hat\sigma_{**}^2 = \hat\sigma_{*}^2 - \left( \sum_{g > g_{min}} \betahat_g \right)' \left( M \hat{S}_{g_{min}} M' \right)  \left( \sum_{g > g_{min}} \betahat_g \right),  \label{eqn: defn of sigmahat**}
\end{equation} where $\betahat_g = (M \hat{S}_g M')^{-1} M \hat{S}_g A_{\theta,g}' $. Then $\hat \sigma_{**}^2 \rightarrow_p \sigma_*^2 + S^*_{\thetatilde}$, where $0 \leq S^*_{\thetatilde} \leq S^*_\theta$, so that $\hat \sigma_{**}$ is asymptotically (weakly) less conservative than $\hat \sigma_*$.

\end{lem}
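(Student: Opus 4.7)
\textbf{Proof proposal for Lemma \ref{lem: consistency of sigmahat**}.} The plan is to combine three ingredients that have essentially already been developed: (i) the consistency of the sample covariances $\hat S_g$ from Lemma \ref{lem: variance consistency}, (ii) the limit $\hat\sigma_*^2 \to_p \sigma_*^2 + S_\theta^*$ from Lemma \ref{lem: consistency of sigmastarhat}, and (iii) the algebraic decomposition of $S_\theta$ given in Lemma \ref{lem: decomp of s theta when there are unused periods }. Taken together, these should yield the claim by showing that the correction term in the definition of $\hat\sigma_{**}^2$ consistently estimates exactly the ``explained-by-lags'' piece of $S_\theta$ appearing in Lemma \ref{lem: decomp of s theta when there are unused periods }, so that the two pieces cancel and only $S_{\tilde\theta}^*$ survives.

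First I would show that for each $g$ the plug-in projection coefficient $\betahat_g = (M\hat S_g M')^{-1} M \hat S_g A_{\theta,g}'$ is a continuous function of $\hat S_g$ in a neighborhood of $S_g^*$, using that $S_g^*$ is positive definite by Assumption \ref{asm: regularity conditions for clt}(ii) so $MS_g^*M'$ is invertible. By Lemma \ref{lem: variance consistency} and the continuous mapping theorem, $\betahat_g \to_p \beta_g^* := (MS_g^*M')^{-1} M S_g^* A_{\theta,g}'$. Applying the CMT again to the quadratic form and to $M\hat S_{g_{min}} M' \to_p MS_{g_{min}}^* M'$, the correction term subtracted in (\ref{eqn: defn of sigmahat**}) converges in probability to $Q^* := \bigl(\sum_{g\ge g_{min}} \beta_g^*\bigr)' \bigl(MS_{g_{min}}^* M'\bigr) \bigl(\sum_{g\ge g_{min}} \beta_g^*\bigr)$ (handling the $g>g_{min}$ vs.\ $g\ge g_{min}$ indexing as in the statement). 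Combining with Lemma \ref{lem: consistency of sigmastarhat} gives $\hat\sigma_{**}^2 \to_p \sigma_*^2 + S_\theta^* - Q^*$.

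Next I would take limits in the decomposition of Lemma \ref{lem: decomp of s theta when there are unused periods }. Since $(N+1)/(N-1) \to 1$ and $S_g \to S_g^*$, $\beta_g \to \beta_g^*$, and $\varfin{\tilde\theta_i} \to S_{\tilde\theta}^*$, the identity $S_\theta = \varfin{\tilde\theta_i} + \frac{N+1}{N-1}\bigl(\sum \beta_g\bigr)' (MS_{g_{min}}M')\bigl(\sum \beta_g\bigr)$ passes to the limit as $S_\theta^* = S_{\tilde\theta}^* + Q^*$. Substituting, $\hat\sigma_{**}^2 \to_p \sigma_*^2 + S_{\tilde\theta}^*$, which is the main claim. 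The two inequalities then follow trivially: $S_{\tilde\theta}^* \ge 0$ because it is a limit of finite-population variances, and $S_{\tilde\theta}^* \le S_\theta^*$ because $Q^* \ge 0$ (the matrix $MS_{g_{min}}^* M'$ is positive semidefinite as a principal submatrix of the PD matrix $S_{g_{min}}^*$).

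The main obstacle I anticipate is not analytic but bookkeeping: one must verify that the index range over which $\betahat_g$ is summed in the definition of $\hat\sigma_{**}^2$ matches the range in Lemma \ref{lem: decomp of s theta when there are unused periods } so that the two quadratic forms cancel exactly in the limit; in particular one should check whether $A_{\theta,g_{min}}$ can be nonzero and, if so, include $g=g_{min}$ in the sum (this is implicit in the statement since $\beta_{g_{min}}^* = 0$ when the projection is onto $MY_i(g_{min})$ itself only if $A_{\theta,g_{min}}M'=0$; otherwise the sum must include it). Beyond that, everything is an application of the CMT to continuous functions of $\hat S_g$, so no fresh probabilistic argument is required.
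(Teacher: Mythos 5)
Your proposal is correct and follows essentially the same route as the paper: apply Lemma \ref{lem: variance consistency} and the continuous mapping theorem to show the subtracted quadratic form converges to its population analogue, then combine Lemma \ref{lem: consistency of sigmastarhat} with the decomposition in Lemma \ref{lem: decomp of s theta when there are unused periods } (whose $\frac{N+1}{N-1}$ factor tends to $1$) so that the explained-by-lags piece cancels and only $\sigma_*^2 + S^*_{\thetatilde}$ remains. The indexing discrepancy you flag ($g > g_{min}$ in the estimator versus $g \geq g_{min}$ in the decomposition) is present in the paper's own statement and proof as well, so your treatment of it is no less careful than theirs.
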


\paragraph{Proof of Lemma \ref{lem: consistency of sigmahat**}}
\begin{proof}
Note that $\betahat_g$ is a continuous function of $\hat{S}_g$. Lemma \ref{lem: variance consistency} together with the continuous mapping theorem thus imply that $$\left( \sum_{g > g_{min}} \betahat_g \right)' \left( M \hat{S}_{g_{min}} M' \right)  \left( \sum_{g > g_{min}} \betahat_g \right) - \left( \sum_{g > g_{min}} \beta_g \right)' \left( M S_{g_{min}} M' \right)  \left( \sum_{g > g_{min}} \beta_g \right) \rightarrow_p 0.$$ From Lemmas \ref{lem: consistency of sigmastarhat} and \ref{lem: decomp of s theta when there are unused periods }, it is then immediate that $\sigma_{**}^2 \rightarrow_p \sigma_*^2 + S^*_{\thetatilde}$, where $S^*_{\thetatilde} = \lim_{N\rightarrow\infty} \varfin{\thetatilde_i} \leq \lim_{N\rightarrow\infty} S_\theta = S_\theta^*$.
\end{proof}

\section{Additional Simulation Results\label{appendix section: additional sims}}

This section presents results from extensions to the simulations in Section \ref{sec: monte carlo}.

\paragraph{Other outcomes.} Appendix Tables \ref{tbl: simulation results - main spec - force}-\ref{tbl: comparison of SDs - main spec - sustained} show results analogous to those in the main text, except using the other two outcomes considered in our application (use of force and sustained complaints). We again find that the plug-in efficient estimator has minimal bias and is substantially more precise than the CS and SA estimators in all specifications (with reductions in standard deviations relative to CS by a factor of over 3 for some specifications). Likewise, both $t$-based and FRT-based approaches yield reliable inference in all specifications.\footnote{In an earlier version of our simulations, in which we included units in the pilot program, we did find some undercoverage (79\%) of $t$-based CIs for the plug-in efficient estimator for the calendar aggregation with sustained complaints. The distinguishing features of this specification were that the outcome is very rare (pre-treatment mean 0.004) and the aggregation scheme places the largest weight on the small number of units in the pilot cohort (which had only 17 officers). This does not appear to be an issue in our current simulations, where as in our application, we drop units in the pilot program, eliminating the small pilot cohorts. We thus urge some caution in applying the efficient estimator (or any approach based on a central limit theorem) in settings where one is placing substantial weight on small cohorts. In such cases, it is preferable to use the FRT (which is valid under the sharp null) or collapse the data to a more aggregated level to form larger cohorts.} 

\paragraph{Annualized data.} Appendix Tables \ref{tbl: simulation results - annual - complaints}-\ref{tbl: comparison of SDs - annual - sustained} present simulations from an alternative specification where the monthly data is collapsed to the yearly level, so that there are six total time periods and five (larger) cohorts. The plug-in efficient estimator has minimal bias and both $t$-based and FRT-based methods yield reliable inference for all specifications. The plug-in efficient estimator again dominates the other estimators in efficiency, although the gains are smaller (e.g. 20 to 30\% reductions in standard deviation relative to CS for complaints). The smaller efficiency gains in this specification are intuitive: the CS and SA estimators over-weight the pre-treatment periods (relative to the plug-in efficient estimator) in our setting, but the penalty for doing this is smaller in the collapsed data, where the pre-treatment outcomes are averaged over more months and thus have lower variance. 

\paragraph{Augmented $\hat{X}$.} Appendix Table \ref{tbl: long xhat simulation results} shows results for an alternative version of the plug-in efficient estimator where $\hat{X}$ is now a vector that contains the difference in means between cohort $g$ and $g'$ in all periods $t< min(g,g')$.\footnote{Calculation is more intensive using the longer $\hat{X}$, so we use 50 simulated permutations for the FRT, instead of the 500 used for the other specifications.} We find poor coverage of $t$-based CIs for this estimator in the monthly specification, where the dimension of $\hat{X}$ is large relative to the sample size (1975, compared with $N = 5537$), and thus the normal approximation derived in Proposition \ref{prop: asymptotic dist of thetahat betahat} is poor. By contrast, when the data is collapsed to the yearly level, and thus the dimension of $\hat X$ constructed in this way is more modest (only 10), the coverage for this estimator is good, and it offers small efficiency gains over the scalar $\hat X$ considered in the main text. These findings align with the results in \citet{lei_regression_2020}, who show (under certain regularity conditions) that covariate-adjustment in cross-sectional experiments yields asymptotically normal estimators when the dimensions of the covariates is $o(N^{\frac{1}{2}})$. We thus recommend using the version of $\hat{X}$ with all potential comparisons only when its dimension is small relative to the square root of the sample size.

\paragraph{Heterogeneous Treatment Effects.} Appendix Tables \ref{tbl: het effects - main} and \ref{tbl: het effects - ratio of sds} show simulation results for a modification of our baseline specification in which there are heterogeneous treatment effects. In the baseline specification, $Y_i(g) = Y_i(\infty)$ for all $g$. In the modification, we set $Y_i(g) = Y_i(\infty) + 1[t>=g] \cdot u_i$. The $u_i$ are mean-zero draws drawn from a normal distribution with standard deviation equal to the standard deviation of the untreated potential outcomes. We draw the $u_i$ once and hold them fixed throughout the simulations, which differ only in the assignment of treatment timing. The relative efficiency of the estimators is similar to those for the main specification, although as expected, both $t$-based and FRT-based approaches to inference tend to be conservative.

\begin{table}[!htb]
    \centering
    \captionsetup[table]{labelformat=empty,skip=1pt}
\begin{longtable}{llrrrrr}
\toprule
Estimator & Estimand & Bias & Coverage & FRT Size & Mean SE & SD \\ 
\midrule
PlugIn & calendar & 0.02 & 0.95 & 0.05 & 0.31 & 0.32 \\ 
PlugIn & cohort & 0.02 & 0.92 & 0.05 & 0.33 & 0.34 \\ 
PlugIn & ES0 & -0.02 & 0.95 & 0.05 & 0.34 & 0.34 \\ 
PlugIn & simple & 0.01 & 0.92 & 0.05 & 0.30 & 0.31 \\ 
CS & calendar & 0.01 & 0.95 & 0.05 & 0.55 & 0.55 \\ 
CS & cohort & 0.00 & 0.95 & 0.05 & 0.52 & 0.52 \\ 
CS/dCDH & ES0 & -0.01 & 0.96 & 0.05 & 0.46 & 0.46 \\ 
CS & simple & 0.01 & 0.95 & 0.05 & 0.52 & 0.52 \\ 
SA & calendar & 0.01 & 0.90 & 0.05 & 1.55 & 1.78 \\ 
SA & cohort & 0.00 & 0.88 & 0.07 & 1.63 & 1.86 \\ 
SA & ES0 & 0.01 & 0.94 & 0.06 & 0.97 & 1.03 \\ 
SA & simple & 0.02 & 0.87 & 0.06 & 1.77 & 2.04 \\ 
 \bottomrule
\end{longtable}

    \caption{Results for Simulations Calibrated to \citet{wood_reanalysis_2020} -- Use of Force\label{tbl: simulation results - main spec - force}}
    \floatfoot{Note: This table shows results analogous to Table \ref{tbl: simulation results - main spec}, except using Use of Force rather than Complaints as the outcome.}
\end{table} \setcounter{table}{\thetable -1}

\begin{table}[!htb]
    \centering
    \captionsetup[table]{labelformat=empty,skip=1pt}
\begin{longtable}{lrr}
\toprule
 & \multicolumn{2}{c}{Ratio of SD to Plug-In} \\ 
 \cmidrule(lr){2-3}
Estimand & CS & SA \\ 
\midrule
calendar & $1.71$ & $5.54$ \\ 
cohort & $1.55$ & $5.52$ \\ 
ES0 & $1.37$ & $3.05$ \\ 
simple & $1.69$ & $6.59$ \\ 
 \bottomrule
\end{longtable}

    \caption{Comparison of Standard Deviations -- \citet{callaway_difference--differences_2020} and \citet{sun_estimating_2020} versus Plug-in Efficient Estimator -- Use of Force}
    \label{tbl: comparison of SDs - main spec - force}
    \floatfoot{Note: This table shows results analogous to Table \ref{tbl: comparison of SDs - main spec}, except using Use of Force rather than Complaints as the outcome.}
\end{table} \setcounter{table}{\thetable -1}

\begin{table}[!htb]
    \centering
    \captionsetup[table]{labelformat=empty,skip=1pt}
\begin{longtable}{llrrrrr}
\toprule
Estimator & Estimand & Bias & Coverage & FRT Size & Mean SE & SD \\ 
\midrule
PlugIn & calendar & 0.00 & 0.95 & 0.06 & 0.05 & 0.06 \\ 
PlugIn & cohort & 0.00 & 0.94 & 0.04 & 0.04 & 0.04 \\ 
PlugIn & ES0 & 0.00 & 0.94 & 0.06 & 0.10 & 0.10 \\ 
PlugIn & simple & 0.00 & 0.94 & 0.04 & 0.04 & 0.04 \\ 
CS & calendar & 0.01 & 0.95 & 0.06 & 0.15 & 0.17 \\ 
CS & cohort & 0.01 & 0.96 & 0.05 & 0.14 & 0.14 \\ 
CS/dCDH & ES0 & 0.00 & 0.95 & 0.05 & 0.14 & 0.15 \\ 
CS & simple & 0.01 & 0.96 & 0.05 & 0.14 & 0.14 \\ 
SA & calendar & 0.02 & 0.77 & 0.05 & 0.40 & 0.48 \\ 
SA & cohort & 0.02 & 0.61 & 0.06 & 0.41 & 0.51 \\ 
SA & ES0 & 0.00 & 0.96 & 0.06 & 0.24 & 0.31 \\ 
SA & simple & 0.02 & 0.63 & 0.06 & 0.44 & 0.55 \\ 
 \bottomrule
\end{longtable}

    \caption{Results for Simulations Calibrated to \citet{wood_reanalysis_2020} -- Sustained Complaints}
    \label{tbl: simulation results - main spec - sustained}
    \floatfoot{Note: This table shows results analogous to Table \ref{tbl: simulation results - main spec}, except using Sustained Complaints rather than Complaints as the outcome.}
\end{table} \setcounter{table}{\thetable -1}

\begin{table}[!htb]
    \centering
    \captionsetup[table]{labelformat=empty,skip=1pt}
\begin{longtable}{lrr}
\toprule
 & \multicolumn{2}{c}{Ratio of SD to Plug-In} \\ 
 \cmidrule(lr){2-3}
Estimand & CS & SA \\ 
\midrule
calendar & $2.92$ & $8.38$ \\ 
cohort & $3.64$ & $13.83$ \\ 
ES0 & $1.46$ & $3.13$ \\ 
simple & $3.81$ & $14.68$ \\ 
 \bottomrule
\end{longtable}

    \caption{Comparison of Standard Deviations -- \citet{callaway_difference--differences_2020} and \citet{sun_estimating_2020} versus Plug-in Efficient Estimator -- Sustained Complaints}
    \label{tbl: comparison of SDs - main spec - sustained}
    \floatfoot{Note: This table shows results analogous to Table \ref{tbl: comparison of SDs - main spec}, except using Sustained Complaints rather than Complaints as the outcome.}
\end{table} \setcounter{table}{\thetable -1}


\begin{table}[!htb]
    \centering
    \captionsetup[table]{labelformat=empty,skip=1pt}
\begin{longtable}{llrrrrr}
\toprule
Estimator & Estimand & Bias & Coverage & FRT Size & Mean SE & SD \\ 
\midrule
PlugIn & calendar & 0.08 & 0.94 & 0.05 & 2.33 & 2.42 \\ 
PlugIn & cohort & 0.11 & 0.94 & 0.05 & 2.80 & 2.88 \\ 
PlugIn & ES0 & 0.12 & 0.93 & 0.07 & 2.30 & 2.41 \\ 
PlugIn & simple & 0.09 & 0.94 & 0.06 & 2.70 & 2.79 \\ 
CS & calendar & 0.02 & 0.96 & 0.04 & 3.20 & 3.15 \\ 
CS & cohort & 0.04 & 0.96 & 0.04 & 3.73 & 3.63 \\ 
CS/dCDH & ES0 & 0.08 & 0.95 & 0.05 & 2.89 & 2.89 \\ 
CS & simple & 0.03 & 0.96 & 0.04 & 3.68 & 3.61 \\ 
SA & calendar & -0.02 & 0.95 & 0.05 & 4.68 & 4.73 \\ 
SA & cohort & 0.00 & 0.96 & 0.04 & 5.04 & 4.93 \\ 
SA & ES0 & -0.03 & 0.95 & 0.05 & 4.38 & 4.39 \\ 
SA & simple & -0.01 & 0.96 & 0.04 & 5.20 & 5.14 \\ 
 \bottomrule
\end{longtable}

    \caption{Results for Simulations Calibrated to \citet{wood_reanalysis_2020} -- Annualized Data}
    \label{tbl: simulation results - annual - complaints}
    \floatfoot{Note: This table shows results analogous to Table \ref{tbl: simulation results - main spec}, except the data is collapsed to the annual level.}
\end{table} \setcounter{table}{\thetable -1}

\begin{table}[!htb]
    \centering
    \captionsetup[table]{labelformat=empty,skip=1pt}
\begin{longtable}{lrr}
\toprule
 & \multicolumn{2}{c}{Ratio of SD to Plug-In} \\ 
 \cmidrule(lr){2-3}
Estimand & CS & SA \\ 
\midrule
calendar & $1.30$ & $1.95$ \\ 
cohort & $1.26$ & $1.71$ \\ 
ES0 & $1.20$ & $1.82$ \\ 
simple & $1.29$ & $1.84$ \\ 
 \bottomrule
\end{longtable}

    \caption{Comparison of Standard Deviations -- \citet{callaway_difference--differences_2020} and \citet{sun_estimating_2020} versus Plug-in Efficient Estimator -- Annualized Data}
    \label{tbl: comparison of SDs - annual - complaints}
    \floatfoot{Note: This table shows results analogous to Table \ref{tbl: comparison of SDs - main spec}, except the data is collapsed to the annual level.}
\end{table} \setcounter{table}{\thetable -1}

\begin{table}[!htb]
    \centering
    \captionsetup[table]{labelformat=empty,skip=1pt}
\begin{longtable}{llrrrrr}
\toprule
Estimator & Estimand & Bias & Coverage & FRT Size & Mean SE & SD \\ 
\midrule
PlugIn & calendar & -0.16 & 0.95 & 0.05 & 2.71 & 2.69 \\ 
PlugIn & cohort & -0.17 & 0.94 & 0.05 & 3.23 & 3.24 \\ 
PlugIn & ES0 & -0.06 & 0.94 & 0.06 & 2.54 & 2.57 \\ 
PlugIn & simple & -0.18 & 0.94 & 0.05 & 3.15 & 3.16 \\ 
CS & calendar & -0.23 & 0.95 & 0.05 & 3.48 & 3.40 \\ 
CS & cohort & -0.29 & 0.95 & 0.05 & 4.06 & 3.99 \\ 
CS/dCDH & ES0 & -0.10 & 0.95 & 0.05 & 3.06 & 3.09 \\ 
CS & simple & -0.27 & 0.95 & 0.05 & 4.03 & 3.96 \\ 
SA & calendar & -0.16 & 0.94 & 0.06 & 5.04 & 5.01 \\ 
SA & cohort & -0.24 & 0.96 & 0.04 & 5.55 & 5.49 \\ 
SA & ES0 & -0.16 & 0.96 & 0.04 & 4.66 & 4.63 \\ 
SA & simple & -0.21 & 0.95 & 0.05 & 5.71 & 5.69 \\ 
 \bottomrule
\end{longtable}

    \caption{Results for Simulations Calibrated to \citet{wood_reanalysis_2020} -- Use of Force \& Annualized Data}
    \label{tbl: simulation results - annual - force}
    \floatfoot{Note: This table shows results analogous to Table \ref{tbl: simulation results - main spec}, except using Use of Force rather than Complaints as the outcome, and in simulations where data is collapsed to the annual level.}
\end{table} \setcounter{table}{\thetable -1}

\begin{table}[!htb]
    \centering
    \captionsetup[table]{labelformat=empty,skip=1pt}
\begin{longtable}{lrr}
\toprule
 & \multicolumn{2}{c}{Ratio of SD to Plug-In} \\ 
 \cmidrule(lr){2-3}
Estimand & CS & SA \\ 
\midrule
calendar & $1.26$ & $1.86$ \\ 
cohort & $1.23$ & $1.69$ \\ 
ES0 & $1.20$ & $1.80$ \\ 
simple & $1.25$ & $1.80$ \\ 
 \bottomrule
\end{longtable}

    \caption{Comparison of Standard Deviations -- \citet{callaway_difference--differences_2020} and \citet{sun_estimating_2020} versus Plug-in Efficient Estimator -- Use of Force \& Annualized Data}
    \label{tbl: comparison of SDs - annual - force}
    \floatfoot{Note: This table shows results analogous to Table \ref{tbl: comparison of SDs - main spec}, except using Use of Force rather than Complaints as the outcome, and in simulations where data is collapsed to the annual level.}
\end{table} \setcounter{table}{\thetable -1}

\begin{table}[!htb]
    \centering
    \captionsetup[table]{labelformat=empty,skip=1pt}
\begin{longtable}{llrrrrr}
\toprule
Estimator & Estimand & Bias & Coverage & FRT Size & Mean SE & SD \\ 
\midrule
PlugIn & calendar & 0.02 & 0.94 & 0.05 & 0.52 & 0.53 \\ 
PlugIn & cohort & 0.02 & 0.94 & 0.05 & 0.60 & 0.63 \\ 
PlugIn & ES0 & 0.04 & 0.94 & 0.06 & 0.67 & 0.67 \\ 
PlugIn & simple & 0.02 & 0.94 & 0.05 & 0.59 & 0.61 \\ 
CS & calendar & -0.02 & 0.95 & 0.06 & 0.86 & 0.88 \\ 
CS & cohort & -0.02 & 0.95 & 0.05 & 0.98 & 0.99 \\ 
CS/dCDH & ES0 & 0.01 & 0.96 & 0.05 & 0.88 & 0.85 \\ 
CS & simple & -0.02 & 0.95 & 0.06 & 0.97 & 0.99 \\ 
SA & calendar & 0.01 & 0.94 & 0.05 & 1.26 & 1.30 \\ 
SA & cohort & 0.01 & 0.95 & 0.05 & 1.34 & 1.37 \\ 
SA & ES0 & 0.02 & 0.95 & 0.05 & 1.31 & 1.33 \\ 
SA & simple & 0.01 & 0.95 & 0.05 & 1.39 & 1.43 \\ 
 \bottomrule
\end{longtable}

    \caption{Results for Simulations Calibrated to \citet{wood_reanalysis_2020} -- Sustained Complaints \& Annualized Data}
    \label{tbl: simulation results - annual - sustained}
    \floatfoot{Note: This table shows results analogous to Table \ref{tbl: simulation results - main spec}, except using Sustained Complaints rather than Complaints as the outcome, and in simulations where data is collapsed to the annual level.}
\end{table} \setcounter{table}{\thetable -1}

\begin{table}[!htb]
    \centering
    \captionsetup[table]{labelformat=empty,skip=1pt}
\begin{longtable}{lrr}
\toprule
 & \multicolumn{2}{c}{Ratio of SD to Plug-In} \\ 
 \cmidrule(lr){2-3}
Estimand & CS & SA \\ 
\midrule
calendar & $1.65$ & $2.45$ \\ 
cohort & $1.58$ & $2.18$ \\ 
ES0 & $1.27$ & $1.98$ \\ 
simple & $1.62$ & $2.34$ \\ 
 \bottomrule
\end{longtable}

    \caption{Comparison of Standard Deviations -- \citet{callaway_difference--differences_2020} and \citet{sun_estimating_2020} versus Plug-in Efficient Estimator -- Sustained Complaints \& Annualized Data}
    \label{tbl: comparison of SDs - annual - sustained}
    \floatfoot{Note: This table shows results analogous to Table \ref{tbl: comparison of SDs - main spec}, except using Sustained Complaints rather than Complaints as the outcome, and in simulations where data is collapsed to the annual level.}
\end{table} \setcounter{table}{\thetable -2}

\begin{table}
(a) Monthly Data
\captionsetup[table]{labelformat=empty,skip=1pt}
\begin{longtable}{llrrrrr}
\toprule
Estimator & Estimand & Bias & Coverage & FRT Size & Mean SE & SD \\ 
\midrule
PlugIn - Long X & calendar & -0.13 & 0.01 & 0.06 & 0.10 & 37.09 \\ 
PlugIn - Long X & cohort & 1.52 & 0.01 & 0.06 & 0.05 & 38.32 \\ 
PlugIn - Long X & ES0 & -6.19 & 0.04 & 0.05 & 0.26 & 119.39 \\ 
PlugIn - Long X & simple & 0.07 & 0.01 & 0.07 & 0.05 & 53.78 \\ 
PlugIn & calendar & 0.01 & 0.93 & 0.07 & 0.26 & 0.28 \\ 
PlugIn & cohort & 0.00 & 0.92 & 0.06 & 0.26 & 0.28 \\ 
PlugIn & ES0 & 0.00 & 0.96 & 0.04 & 0.32 & 0.31 \\ 
PlugIn & simple & 0.00 & 0.93 & 0.05 & 0.24 & 0.25 \\ 
 \bottomrule
\end{longtable}

(b) Annual Data
\captionsetup[table]{labelformat=empty,skip=1pt}
\begin{longtable}{llrrrrr}
\toprule
Estimator & Estimand & Bias & Coverage & FRT Size & Mean SE & SD \\ 
\midrule
PlugIn - Long X & calendar & 0.43 & 0.93 & 0.06 & 2.26 & 2.40 \\ 
PlugIn - Long X & cohort & 0.47 & 0.93 & 0.07 & 2.71 & 2.85 \\ 
PlugIn - Long X & ES0 & 0.49 & 0.92 & 0.07 & 2.23 & 2.38 \\ 
PlugIn - Long X & simple & 0.48 & 0.93 & 0.06 & 2.61 & 2.76 \\ 
PlugIn & calendar & 0.08 & 0.94 & 0.05 & 2.33 & 2.42 \\ 
PlugIn & cohort & 0.11 & 0.94 & 0.05 & 2.80 & 2.88 \\ 
PlugIn & ES0 & 0.12 & 0.93 & 0.07 & 2.30 & 2.41 \\ 
PlugIn & simple & 0.09 & 0.94 & 0.06 & 2.70 & 2.79 \\ 
 \bottomrule
\end{longtable}

\caption{Performance of Plug-In Efficient Estimator Using Augmented $\hat{X}$\label{tbl: long xhat simulation results}}
\floatfoot{Note: This table shows the bias, coverage, mean standard error, and standard deviation of two versions of the plug-efficient estimator. The estimator with the label ``Long X'' uses an augmented version of $\hat{X}$ that includes the difference in means between all cohorts $g,g'$ in periods $t<min(g,g').$ The estimator labeled PlugIn uses a scalar $\hat{X}$ such that the CS estimator corresponds with $\beta = 1$, as in the main text. The simulation specification in panel (a) is the baseline specification considered in the main text; in panel (b), the data is collapsed to the annual level.}
\end{table} \setcounter{table}{\thetable -1}

\begin{table}[!htb]
    \centering
    \captionsetup[table]{labelformat=empty,skip=1pt}
\begin{longtable}{llrrrrr}
\toprule
Estimator & Estimand & Bias & Coverage & FRT Size & Mean SE & SD \\ 
\midrule
PlugIn & calendar & 0.00 & 0.98 & 0.02 & 0.45 & 0.38 \\ 
PlugIn & cohort & 0.00 & 0.99 & 0.01 & 0.39 & 0.28 \\ 
PlugIn & ES0 & 0.00 & 0.99 & 0.01 & 0.42 & 0.31 \\ 
PlugIn & simple & 0.00 & 1.00 & 0.00 & 0.38 & 0.26 \\ 
CS & calendar & 0.00 & 0.97 & 0.03 & 0.63 & 0.58 \\ 
CS & cohort & 0.02 & 0.98 & 0.02 & 0.55 & 0.46 \\ 
CS/dCDH & ES0 & 0.00 & 0.99 & 0.01 & 0.52 & 0.43 \\ 
CS & simple & 0.02 & 0.98 & 0.02 & 0.55 & 0.47 \\ 
SA & calendar & -0.01 & 0.93 & 0.06 & 1.49 & 1.51 \\ 
SA & cohort & 0.01 & 0.91 & 0.07 & 1.54 & 1.58 \\ 
SA & ES0 & 0.00 & 0.97 & 0.03 & 0.96 & 0.94 \\ 
SA & simple & 0.01 & 0.90 & 0.07 & 1.67 & 1.72 \\ 
 \bottomrule
\end{longtable}

    \caption{Results for Simulations Calibrated to \citet{wood_reanalysis_2020} -- Heterogeneous Treatment Effects\label{tbl: het effects - main}}
    \floatfoot{Note: This table shows results analogous to Table \ref{tbl: simulation results - main spec}, except the DGP adds heterogeneous treatment effect as described in Section \ref{appendix section: additional sims}.}
\end{table} \setcounter{table}{\thetable -1}

\begin{table}[!htb]
    \centering
    
    \caption{Comparison of Standard Deviations -- \citet{callaway_difference--differences_2020} and \citet{sun_estimating_2020} versus Plug-in Efficient Estimator -- Heterogeneous Treatment Effects\label{tbl: het effects - ratio of sds}}
    \floatfoot{Note: This table shows results analogous to Table \ref{tbl: comparison of SDs - main spec}, except the DGP adds heterogeneous treatment effect as described in Section \ref{appendix section: additional sims}.}
\end{table} 

\clearpage 
\section{Additional Application Results\label{appendix: additional application results}}

This section contains additional results pertaining to our application in Section \ref{sec: wood application}. 

\subsection{Event-Study Results}

Appendix Figure \ref{fig: wood-et-al application event-study efficient} shows event-study estimates for the first two years after treatment using the plug-in efficient estimator, and Appendix Figure \ref{fig: wood-et-al application event-study CS} shows the analogous results using the CS estimator. Both plots show estimates for post-treatment effects as well as placebo estimates of pre-treatment effects (similar to pre-trends tests). In dark blue, we present point estimates and pointwise confidence intervals, and in light blue we present sup-$t$ simultaneous confidence bands \citep{olea_simultaneous_2019}.\footnote{We use the \texttt{suptCriticalValue} R package developed by Ryan Kessler.} It has been argued that simultaneous confidence bands are more appropriate for event-study analyses since they control size over the full dynamic path of treatment effects \citep{FreyaldenhovenEtAl(19),callaway_difference--differences_2020}. Both figures show that the simultaneous confidence bands include zero for nearly all periods for all three outcomes.

\begin{figure}[!hb]
    \centering
    \includegraphics[width = 0.9\linewidth]{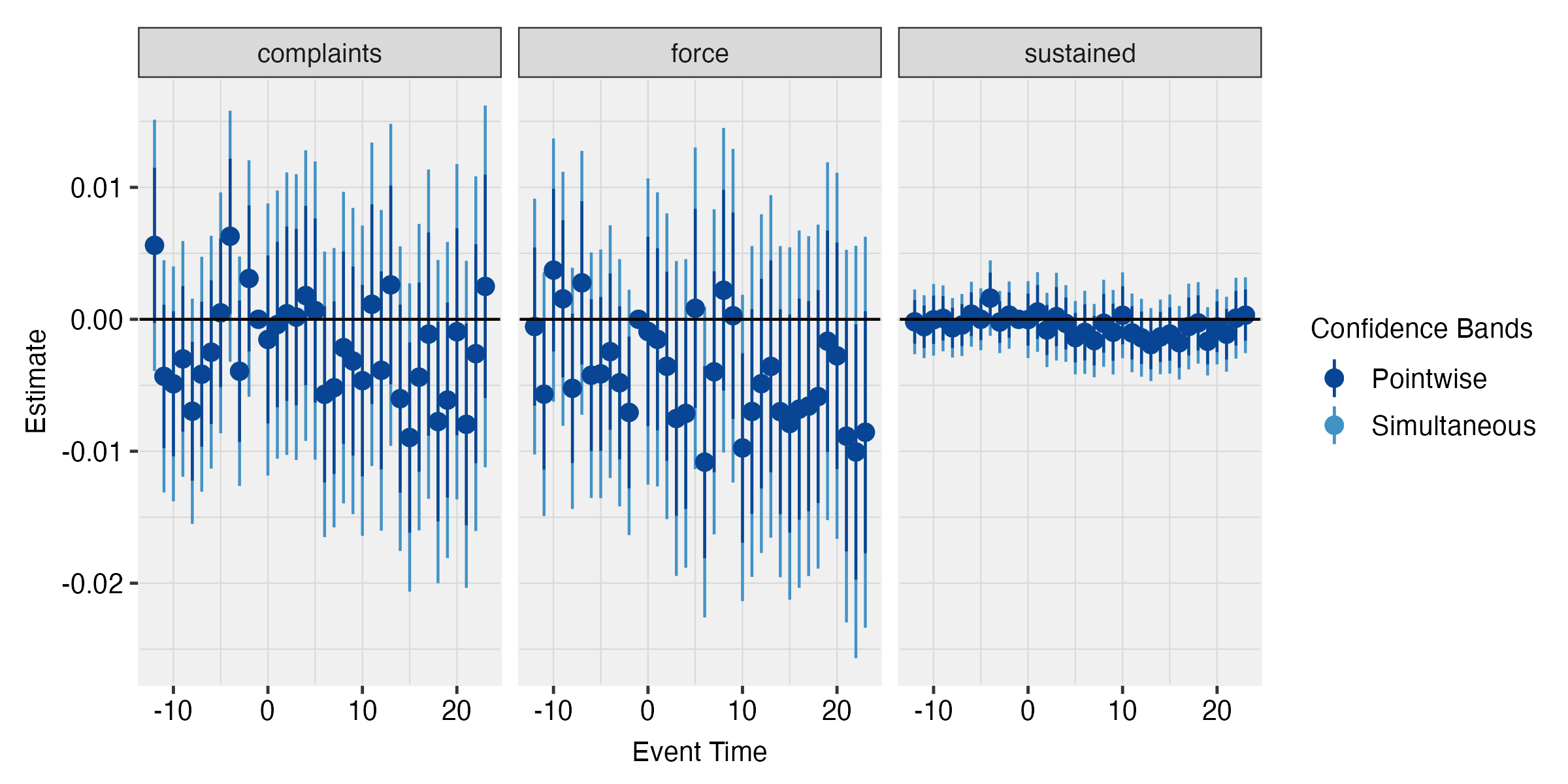}
    \caption{Event-Study Effects Using the Plug-In Efficient Estimator\label{fig: wood-et-al application event-study efficient}}
\end{figure}

\begin{figure}[!htb]
    \centering
    \includegraphics[width = 0.9\linewidth]{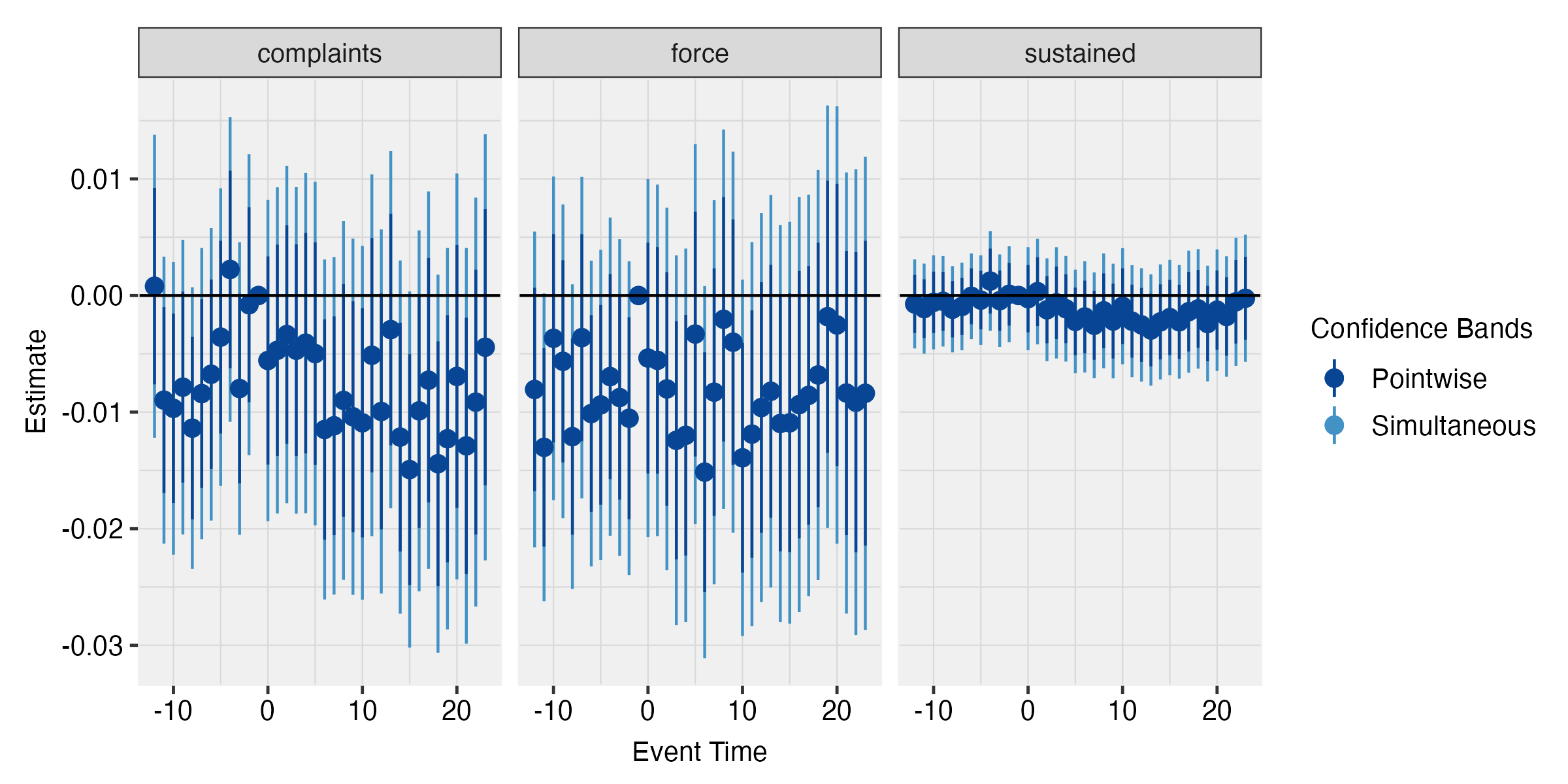}
    \caption{Event-Study Effects Using the CS Estimator}
    \label{fig: wood-et-al application event-study CS}
\end{figure}

\subsection{Balance and Robustness Checks}

Figure \ref{fig:balance age} shows a (binned) scatterplot of year of birth against training date for officers in our main analysis sample. The black circles show the raw data, and the orange triangles show the average over twenty equally-size bins. Overall, the average year of birth appears to be similar across all training dates. A univariate linear regression of year of birth on training month yields a coefficient of 0.01 --- implying that being trained a year later is associated with a 0.12 later birth year --- which is not statistically significant ($SE=$0.009; FRT $p$-value based on 5,000 permutations = 0.18). However, if we regress year of birth on dummies for training date, we obtain a $p$-value of 0.08 using an FRT for the hypothesis that all of the training dates are equal (FRT with 5,000 permutations). Thus, there may be some marginally significant imbalances in year of birth across training date, although they appear to be relatively small in magnitude and not systematically correlated with the timing of treatment. 

\begin{figure}[!ht]
    \centering
    \includegraphics[width = 0.9\textwidth]{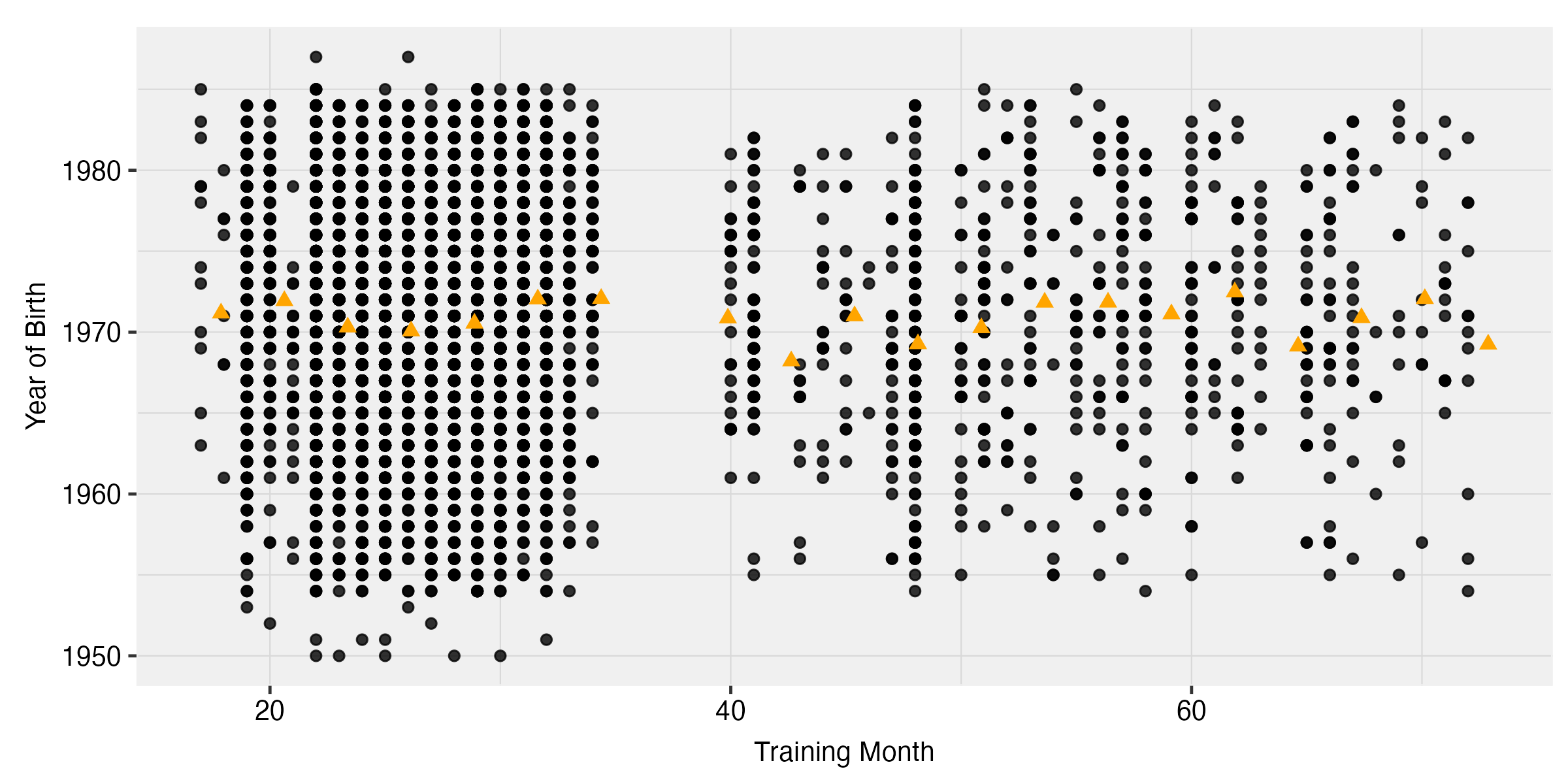}
    \caption{Covariate Balance on Age}
    \label{fig:balance age}
\end{figure}

In Appendix Figure \ref{fig: wood-et-al application summary comparison omit late g}, we present results analogous to those in Figure \ref{fig: wood-et-al application summary comparison} except removing officers who were treated in the last 12 months of the data. Appendix Table \ref{tbl:covariate balance - omit late treated} reports balance in $\hat{X}$ for this sample. The reason for focusing on this subsample is, as discussed in the supplement to \citet{wood_procedural_2020}, there was some non-compliance towards the end of the study period wherein officers who had not already been trained could volunteer to take the training at a particular date. As expected, we do not find any significant imbalance in $\hat{X}$ for this subsample. The qualitative patterns after dropping these observations are similar, although the estimates for the effect on use of force are not statistically significant in some specifications.

\begin{table}[!ht]
    \centering
    \caption{Tests of balance on pre-treatment outcomes - omitting late treated}
    \includegraphics[width = 0.9\textwidth]{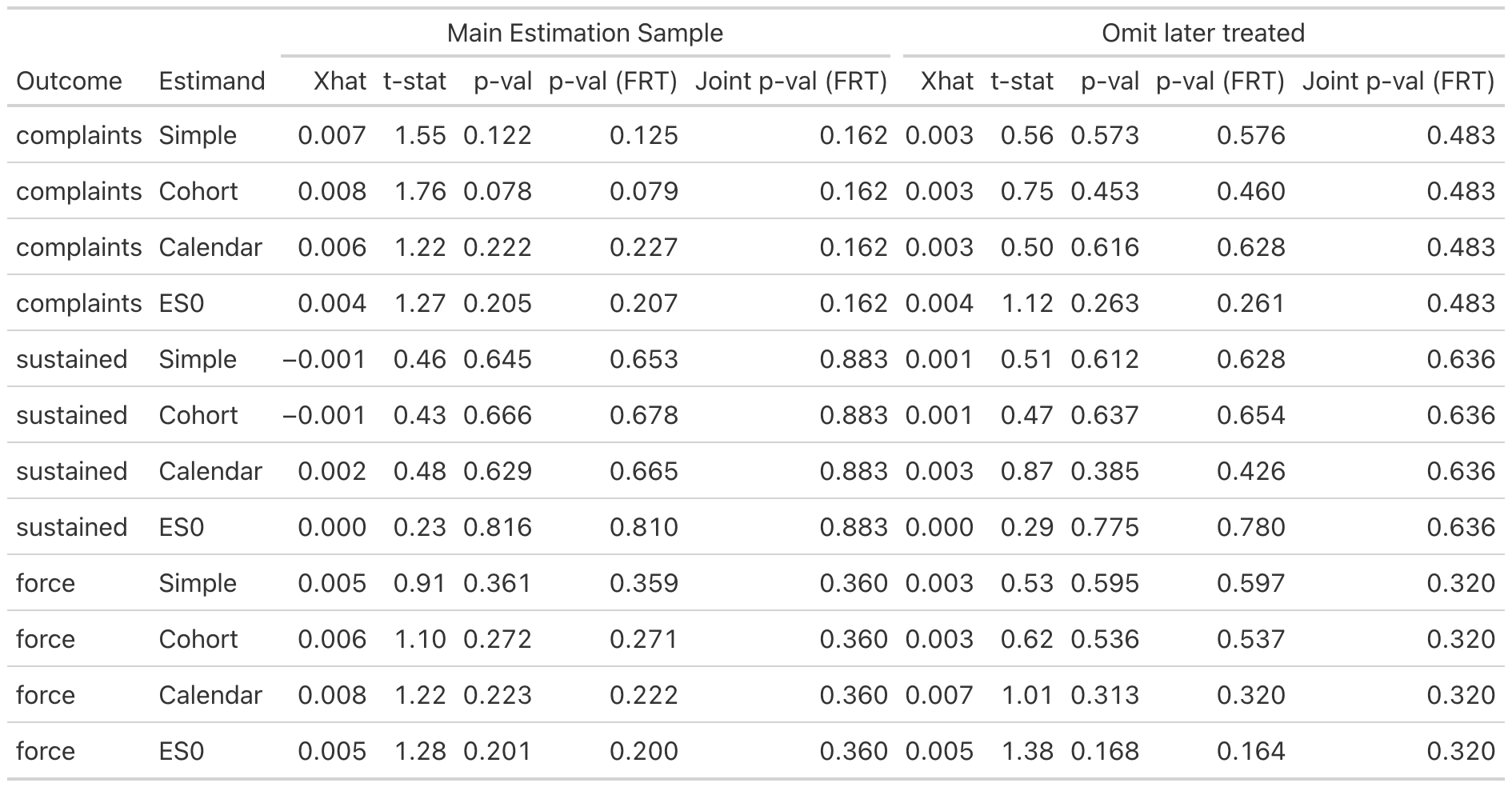}
    \label{tbl:covariate balance - omit late treated}
    \floatfoot{Note: this table is analogous to Table \ref{tbl:covariate balance}, except the rightmost columns show results omitting officers treated in the last year from the sample. All FRT $p$-values are based on 5,000 permutations.}
\end{table}

\begin{figure}[!hbtp]
    \centering
    \includegraphics[width=0.9\linewidth]{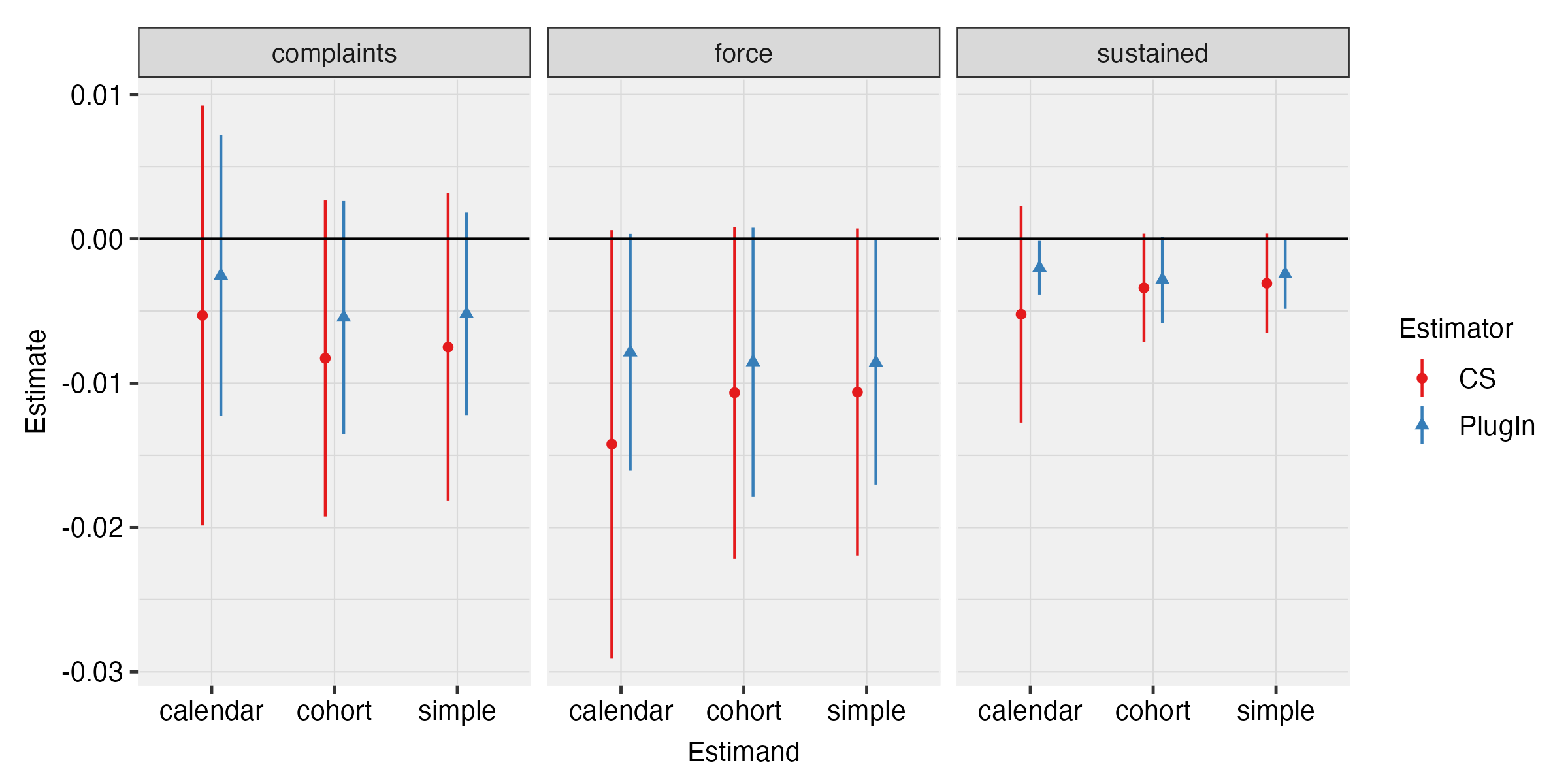}
    \caption{Effect of Procedural Justice Training Using the Plug-In Efficient and \citet{callaway_difference--differences_2020} Estimators -- Dropping Late-Trained Officers} \label{fig: wood-et-al application summary comparison omit late g}
    \floatfoot{Note: This figure is analogous to Figure \ref{fig: wood-et-al application summary comparison}, except we remove from the data officers trained in the last 12 months of the data owing to concerns about treatment non-compliance.}
\end{figure}
\clearpage 

\renewcommand{\refname}{Appendix References}
\putbib
\end{bibunit}
\end{document}